\RequirePackage{snapshot} % for bundledoc
\documentclass[pdflatex,sn-mathphys-num]{sn-jnl}% Math and Physical Sciences Author Year Reference Style
\usepackage{graphicx}%
\usepackage{multirow}%
\usepackage{amsmath,amssymb,amsfonts}%
\usepackage{amsthm}%
\usepackage{mathrsfs}%
\usepackage[title]{appendix}%
\usepackage{xcolor}%
\usepackage{textcomp}%
\usepackage{manyfoot}%
\usepackage{booktabs}%
\usepackage{algorithm}%
\usepackage{algorithmicx}%
\usepackage{algpseudocode}%
\usepackage{listings}%
\usepackage{graphicx}
\usepackage{float}
\usepackage{subcaption}
\usepackage{optidef}
\theoremstyle{thmstyleone}%
\newtheorem{theorem}{Theorem}%  meant for continuous numbers
\newtheorem{lemma}[theorem]{Lemma}
\newtheorem{corollary}[theorem]{Corollary}

\theoremstyle{thmstyletwo}%
\newtheorem{example}[theorem]{Example}%
\newtheorem{remark}[theorem]{Remark}%

\theoremstyle{thmstylethree}%
\newtheorem{definition}[theorem]{Definition}%
\newtheorem{assumption}[theorem]{Assumption}%

\usepackage{tikz}
\usetikzlibrary{angles,quotes}
\usetikzlibrary{decorations.markings,intersections}
\usetikzlibrary{decorations.pathreplacing}
\tikzset{->-/.style 2 args={
    postaction={decorate},
    decoration={markings, mark=at position #1 with {\arrow[ultra thick, #2]{>}}}
  },
  ->-/.default={0.5}{}
}

\usepackage{todonotes}

\newcommand{\bbF}{\mathbb{F}}
\newcommand{\bbR}{\mathbb{R}}
\newcommand{\Zpm}{Z_1^{\pm 1}}
\newcommand{\Bpm}{B_1^{\pm 1}}

\newcommand{\mvec}[1]{\vec{#1}}
\newcommand{\abs}[1]{|#1|}

\DeclareMathOperator{\mmod}{mod}
\DeclareMathOperator{\rank}{rank}
\DeclareMathOperator{\Conv}{Conv}
\newcommand{\tpmod}[1]{{(\mmod #1)}}

\DeclareMathOperator{\ima}{im}

\usepackage{fancyvrb}
\VerbatimFootnotes

\usepackage{siunitx}
\usepackage{placeins}

\begin{document}

\title[Angle-optimization and Simplification]{On Angle-optimization and Simplification of
  Degree-$1$ Homology Representatives}

%%=============================================================%%
%% GivenName	-> \fnm{Joergen W.}
%% Particle	-> \spfx{van der} -> surname prefix
%% FamilyName	-> \sur{Ploeg}
%% Suffix	-> \sfx{IV}
%% \author*[1,2]{\fnm{Joergen W.} \spfx{van der} \sur{Ploeg}
%%  \sfx{IV}}\email{iauthor@gmail.com}
%%=============================================================%%

\author*[1]{\fnm{Emerson G.} \sur{Escolar}}\email{e.g.escolar@people.kobe-u.ac.jp}
% \equalcont{These authors contributed equally to this work.}

\author[1]{\fnm{Yuta} \sur{Shimada}}
% \email{246d502d@stu.kobe-u.ac.jp}
% \equalcont{These authors contributed equally to this work.}

\affil[1]{\orgdiv{Graduate School of Human Development and Environment},
  \orgname{Kobe University},
  \orgaddress{\street{3-11 Tsurukabuto, Nada},
    \city{Kobe City},
    \postcode{657-8501},
    \state{Hyogo},
    \country{Japan}}}

%%==================================%%
%% Sample for unstructured abstract %%
%%==================================%%

\abstract{
  In topological data analysis, in particular persistent homology analysis,
  extracting ``optimal'' representatives for homology classes is crucial for identifying geometric regions of interest.
  In prior work, optimality is defined in terms of minimizing length or volume.
  In this work, we restrict our attention to a single homology class in degree $1$
  and introduce
  the total absolute curvature of cycles as the cost function.
  We show that this cost function, based on angles between edges of cycles,
  penalizes departures from planarity, convexity, and simple-ness of the cycle representative.
  We formulate the ``angle-optimal homologous cycle problem'', recast it as a binary quadratic optimization problem,
  and show the results of experiments on artificial toy data.
  }

\keywords{simplicial homology, optimal representative cycle, total absolute curvature, binary quadratic optimization}

%%\pacs[JEL Classification]{D8, H51}

%%\pacs[MSC Classification]{35A01, 65L10, 65L12, 65L20, 65L70}

\maketitle

\section{Introduction}
\label{sec:introduction}

In topological data analysis, the use of persistent homology
\citep{landi1997new,frosini1999size,robins1999towards,edelsbrunner2002topological}
as a descriptor for the topological features, in particular holes, of data
has been successfully applied in various fields.

In terms of homology, topological features are described using
cycles, but only up to homology (i.e.\ cycles deemed homologous are not distinguished).
However, in applications,
% of homology and by extension persistent homology,
it is important to be able to distinguish between cycles based on
various geometric quantities, in particular, by finding a representative cycle among
homologous cycle that optimizes some particular geometric quantity.
Furthermore, one can consider not just changing the representative cycle of a given homology class,
but also allowing for optimizing the set of homology generators (i.e.\ homology classes) under consideration.
In prior work, the problem has been considered for both homology cycles and persistent homology,
and optimality is defined in terms of
minimizing length (or weight) \cite{chambers2009minimum,erickson2005greedy,dey2010optimal,escolar2014computing,escolar2016ocphlp,emmett2016multiscale,wu2017optimal,dey2018persistent},
or enclosed volume \cite{obayashi2018volume,obayashi2023stable},
relative to covers \cite{zomorodian2008localized},
using enclosing balls \cite{chen2010measuring},
or others notions of size.
We note that a line of research recasts the problem(s) as linear (or integer) programming
\cite{dey2010optimal,escolar2014computing,escolar2016ocphlp,obayashi2018volume,obayashi2023stable},
and this is close to the approach taken in this work.
We also refer to the paper \cite{li2021minimal} for comparisons
of some the methods based on linear programming.
% Above, we do not claim to have given a comprehensive overview, but mention
% some references.

Here, instead of minimal length or enclosed volume, we are motivated by the
problem finding planar (or close-to-planar) representatives.
We explain the reasoning for this motivation as follows.
Recall that the Jordan curve theorem states that a planar simple closed curve divides the plane into two regions: the interior bounded by the curve and the exterior.
Suppose that degree-$1$ representative cycle in $\bbR^N$ is planar
(i.e.\ it sits in a plane in $\bbR^N$).
Then, we can interpret the interior region it bounds within its plane
as a part of the hole that it describes,
potentially leading to further applications and visualization methods.

In this initial work in this direction,
we restrict our attention to formulating the problem and testing our formulation.
We consider a single homology class in degree-$1$ simplicial homology,
and use a cost function based on angles between consecutive edges,
adapting the total absolute curvature of \cite{milnor1950total,taniyama1998total} to our setting.
We show that this cost function penalizes departures from planarity, convexity, and simple-ness of the cycle representative.
We formulate the ``angle-optimal homologous cycle problem'', recast it as a binary quadratic optimization problem,
and show the results of experiments on artificial toy data.

We also note that \cite{taniyama1998total} and subsequent works \cite{kobayashi1998curvature,nagasaka1998vertex,nagasaka2000graphs}
studies polygonal maps $G \rightarrow \mathbb{R}^N$ with minimal total absolute curvature for a given finite graph $G$,
i.e.\ minimizing over such polygonal maps.
While the minimization of the total absolute curvature is a common theme,
we emphasize that, in contrast,
in our formulation the vertices of the simplicial complex is given a fixed embedding in $\mathbb{R}^N$,
and we minimize among cycles homologous to an input cycle.

This paper is organized as follows.
In Section~\ref{sec:background} we recall some definitions and basic facts needed.
In Section~\ref{sec:formulation} we show that
the total absolute curvature can be expressed by a quadratic form (Lemma~\ref{lemma:quadraticform})
and using this, we show that
given a decomposition of a $1$-cycle $z$ in a simplicial complex into simple cycles,
its total absolute curvature $\kappa(z)$ decomposes
as a sum of the total absolute curvatures of the simple cycles
plus a term penalizing shared vertices between the simple cycles (Theorem~\ref{thm:decompo}).
In the same section we formulate the
angle-optimal homologous cycle problem (AOHCP) which aims to
minimize the total absolute curvature among cycles homologous to the input $1$-cycle.
We show that this can be expressed in a standard form of a binary quadratic optimization problem
(Problem~\eqref{opt:StandardBQP}).
In Section~\ref{sec:computations}, using a commerical solver,
we perform some computational experiments of solving AOHCP for some toy examples,
and make some observations about the results of the computational demonstrations.
In Section~\ref{sec:discussion} we summarize our findings and note some directions for future research.

%%% Local Variables:
%%% mode: LaTeX
%%% TeX-master: "main"
%%% End:

\section{Background}
\label{sec:background}

First we recall some basic terminology for simplicial homology.

Let $V$ be a finite set.
Recall that an abstract simplicial complex over $V$ is a set $K \subseteq 2^V \setminus \{\emptyset\}$
satisfying the conditions that
if $v \in V$ then $\{v\} \in K$,
and
if $\sigma \in K$ and $\emptyset \neq \tau \subseteq \sigma$ then $\tau \in K$.
Elements $\sigma$ of a simplicial complex $K$ are called \emph{simplices}, and
the elements of a simplex are called its \emph{vertices}.
For $\sigma \in K$, $\tau$ satisfying $\emptyset \neq \tau \subseteq \sigma$
is said to be a \emph{face} of $\sigma$\footnote{Thus, the condition
  ``if $\sigma \in K$ and $\emptyset \neq \tau \subseteq \sigma$ then $\tau \in K$''
  can be expressed by saying that $K$ is closed under the face relation.}.
The \emph{dimension} of a simplex $\sigma$ is defined to be $\#\sigma -1$, i.e.\ its number of vertices minus $1$.
If the dimension of $\sigma$ is $k$, then $\sigma$ is also said to be a $k$-simplex.
The set of $k$-simplices of $K$ is denoted by $K_k$.
In particular, $K_0$ is also called the \emph{vertex set} of $K$, and is in bijection with $V$
(we shall freely identify $v \in V$ with $\{v\} \in K_0$, and $V$ with $K_0$ under this bijection).

\begin{assumption}
  Throughout this work, we require that the abstract simplicial complex $K$ has vertex set $V \subset \bbR^N$.% , that is,
  % each vertex is in fact a point in some Euclidean space $\bbR^N$.
\end{assumption}

However, we need \emph{not} assume that $K$ is a \emph{geometric simplicial complex}. We recall the following.
A set $S$ in $\bbR^N$ is said to be \emph{convex} if for any $x, y \in S$, the line segment from $x$ to $y$ lies in $S$.
The \emph{convex hull} of $S \subset \bbR^N$, denoted $\Conv(S)$, is defined to be the smallest convex set containing $S$.
A \emph{geometric simplicial complex} $\Sigma$ with vertex set $V \subset \bbR^N$ is defined as follows.
First,
a (geometric) simplex is
the convex hull $\Conv(\sigma)$
of some $\emptyset \neq \sigma \subseteq V$
where the elements of $\sigma$ are affinely independent.
The elements of $\sigma$ are called the vertices of the simplex $\Conv(\sigma)$.
For $\emptyset \neq \tau \subseteq \sigma$ with $\sigma$ affinely independent,
$\Conv(\tau)$ is called a \emph{face} of $\Conv(\sigma)$.
Then, a collection $\Sigma$ of (geometric) simplices  is said to be a geometric simplicial complex if
it satisfies the following conditions: it is closed under the face relation,
and if two simplices $\Conv(\sigma)$ and $\Conv(\sigma')$ have nonempty intersection,
then the intersection is a face of both $\Conv(\sigma)$ and $\Conv(\sigma')$.
Given a geometric simplicial complex $\Sigma$,
one obtains an abstract simplicial complex
$\{\sigma \subset V| \Conv(\sigma) \in \Sigma\}$
by only retaining the information about the vertices of each geometric simplex.

We return to basic definitions relating to an abstract simplicial complex $K$.
Consider the total orderings of the vertices of a $k$-simplex $\sigma = \{v_0, v_1, \hdots, v_k\}$ of $K$.
Two total orders are said to be equivalent if they differ by an even permutation. Under this equivalence relation,
if $k > 0$ then the total orderings of the vertices of a $k$-simplex $\sigma$ are divided into two equivalence classes,
while if $k=0$ then there is only one equivalence class.
An equivalence class of orderings of the vertices of $\sigma$
is called an \emph{orientation} of the simplex $\sigma$. An \emph{oriented  $k$-simplex} is a $k$-simplex together with a choice of orientation.
An oriented simplex
(with orientation given by the equivalence class of the order $v_0, v_1, \hdots, v_k$)
will be denoted by $[v_0, v_1, \hdots, v_k]$.

Let $\bbF$ be a field.
For each integer $k$,
the \emph{$k$th chain group} with $\bbF$ coefficients of a simplicial complex $K$,
denoted $C_k(K, \bbF)$,
is
the $\bbF$-vector space freely generated by the oriented $k$-simplices of $K$ modulo the relations
$[v_0, v_1, \hdots, v_k] = -[p(v_0), p(v_1), \hdots, p(v_k)]$ where $p$ is any odd permutation
of the vertices of $\sigma = \{v_0, v_1, \hdots, v_k\}$ for $\sigma \in K_k$.
Elements of $C_k(K, \bbF)$ are called \emph{$k$-chains}.
For each $k$-simplex, arbitrarily choose an orientation.
Then, the set of (equivalence classes of) oriented $k$-simplices with the chosen orientations
forms a basis for $C_k(K, \bbF)$.

The \emph{$k$th boundary map} of $K$ is the $\bbF$-linear map
$\partial_k : C_k(K, \bbF) \rightarrow C_{k-1}(K, \bbF)$
defined by linear extension of
\[
  \partial_k([v_0, v_1, \hdots, v_k]) = \sum_{i=0}^k (-1)^i [v_0, \hdots, \hat{v_i}, \hdots, v_k]
\]
for oriented simplices $[v_0, v_1, \hdots, v_k]$, and where $\hat{v_i}$ means to exclude the vertex $v_i$.
It can be checked that $\partial_k \partial_{k+1}  = 0$ for all $k$, and so
$B_k(K, \bbF) :=\ima \partial_{k+1}$
is an $\bbF$-linear subspace of
$Z_k(K, \bbF) := \ker \partial_k$.
Elements of $\ker \partial_k$ are called \emph{$k$-cycles},
while elements of $\ima \partial_{k+1}$ are called \emph{$k$-boundaries}.
The \emph{$k$th homology group with $\bbF$ coefficients} is the quotient vector space
\[
  H_k(K, \bbF) = Z_k(K, \bbF) / B_k(K, \bbF) = \ker \partial_k / \ima \partial_{k+1}.
\]
For $z \in Z_k(K, \bbF)$, its \emph{homology class} is
$[z] := z + B_k(K,\bbF) \in H_k(K, \bbF)$.
Two $k$-cycles $z, z' \in Z_k(K,\bbF)$ are said to be \emph{homologous}, denoted $z \sim z'$,
if they have equal homology classes.
Note that for compatibility with our formulation as a
(binary) quadratic optimization problem in Section~\ref{sec:formulation},
we choose $\bbF = \bbR$ as the base field and shall be mostly concerned
with cycles with coefficients in $\{-1,0,1\}$.

\begin{definition}
  Let $K$ be a simplicial complex.
  A $1$-chain $c \in C_1(K, \bbR)$
  is said to be a \emph{simple cycle} if it can be written as
  \[
    c = \sum_{i=0}^{\ell-1} [v_i, v_{i+1 \tpmod{\ell}}]
  \]
  with $\ell \geq 3$ and where $v_0, v_1, \hdots, v_{\ell-1}$ are pairwise distinct vertices of $K$.
\end{definition}
A direct computation shows that if $c$ is a simple cycle, then $c \in \ker \partial_1$,
and thus a simple cycle is indeed a $1$-cycle.
We also refer to the simple cycle $c$
by the sequence of vertices $(v_0, v_1, \hdots, v_{\ell-1})$ or any of its circular shifts.
In what follows, we suppress the $\tpmod{\ell}$ notation,
and increment/decrement operations on indices of simple cycles should be understood modulo $\ell$.

Recall that we assume that the simplicial complex $K$ has vertex set $V \subset \bbR^N$.
For nonzero $x, y \in \bbR^N$
% the angle between $x$ and $y$ (i.e.\
the angle between the vectors $\overrightarrow{0x}$ and $\overrightarrow{0y}$ where $0$ is the origin
is given by the formula
\[
  \theta(x,y) := \arccos \left(\frac{\langle x, y \rangle}{\lVert x \rVert \lVert y \rVert}\right)
  \in [0,\pi].
\]
We also recall that points $x_0, x_1, \hdots, x_n \in \bbR^N$ are \emph{coplanar}
if and only if $\rank{[x_1-x_0 ~~ x_2-x_0 ~~ \hdots ~~ x_n-x_0]} \leq 2$.

\begin{definition}
  \label{defn:simplecycleangles}
  For a simple cycle $(v_0, v_1, \hdots, v_{\ell-1})$,
  its \emph{interior angle at vertex $v_i$} is defined to be
  \[
    \beta(c, v_i) := \theta(v_{i+1} -  v_i, v_{i-1} - v_i),
  \]
  while
  its \emph{exterior angle (or turning angle) at vertex $v_i$} is
  \[
    \alpha(c, v_i) := \pi - \beta(c, v_i) =  \theta(v_{i+1} -  v_i, v_i - v_{i-1}).
  \]
  Since the vertices of a simple cycle are pairwise distinct,
  the above angles are well-defined.
\end{definition}

See Figure~\ref{ex:SimpleCycleAngles} for an illustration.
\begin{figure}[h]
  \centering
  \begin{tikzpicture}[scale=1.2,baseline=(B.center)]
    \coordinate (X) at (-1.9,0.4);
    \coordinate[label=below:$v_{i-1}$] (A) at (-1.5,0);
    \coordinate[label=below:$v_{i}$]  (B) at (0,0);
    \coordinate[label=left:$v_{i+1}$]  (C) at (1,1.5);
    \coordinate (Y) at (1.4,1.3);
    \coordinate (D) at (2,0);

    \foreach \x in {A,B,C}
    \fill (\x) circle (2pt);

    \begin{scope}[every path/.style=solid,thick]
      \draw[dotted] (X) -- (A);
      \draw[->-] (A) --(B);
      \draw[->-] (B) --(C);
      \draw[dotted] (C) -- (Y);
      \draw[dashed] (B) -- (D);
      \path pic["$\beta(c{,}v_i)$", draw, angle radius=3mm, angle eccentricity=1.8, pic text options={shift={(-10pt,0pt)}}] {angle = C--B--A};
      \path pic["$\alpha(c{,}v_i)$", draw, angle radius=6mm, angle eccentricity=1.5, pic text options={shift={(+10pt,0pt)}}] {angle = D--B--C};
    \end{scope}
  \end{tikzpicture}
  \caption{Exterior angle $\alpha(c{,}v_i)$ and interior angle $\beta(c{,}v_i)$ for some simple cycle $c$ at vertex $v_u$.\\~}
  \label{ex:SimpleCycleAngles}
\end{figure}

We note that the interior angle defined in Definition~\ref{defn:simplecycleangles} satisfies $\beta(c,v_i) \in [0, \pi]$
and warn that the terminology of ``interior'' and ``exterior'' in Definition~\ref{defn:simplecycleangles}
does not refer to the interior or exterior of a simple planar polygon.
See Figure~\ref{ex:InteriorNonConvex}.
\begin{figure}[h]
  \centering
  \begin{tikzpicture}[scale=1.2,baseline=(A.center)]
    \coordinate[label=below:$v_0$]  (A) at (0,0);
    \coordinate[label=below:$v_1$] (B) at (2,0);
    \coordinate[label=left:$v_2$] (C) at (1,0.75);
    \coordinate[label=right:$v_3$] (D) at (2,1.5);
    \coordinate[label=left:$v_4$]  (E) at (0,1.5);
    \foreach \x in {A,B,C,D,E}
    \fill (\x) circle (2pt) ;

    \begin{scope}[every path/.style=solid,thick]
      \draw[->-] (A) --(B);
      \draw[->-] (B) --(C);
      \draw[->-] (C) --(D);
      \draw[->-] (D) --(E);
      \draw[->-] (E) --(A);
    \end{scope}

    \path pic["",draw,angle radius=3mm,angle eccentricity=1.5,pic text options={shift={(+2pt,+2pt)}}, blue, thick] {angle = B--A--E};
    \path pic["",draw,angle radius=3mm,angle eccentricity=1.8, blue, thick] {angle = C--B--A};
    \path pic["",draw,angle radius=3mm,angle eccentricity=1.5, blue, thick] {angle = B--C--D};
    \path pic["",draw,angle radius=3mm,angle eccentricity=1.8, blue, thick] {angle = E--D--C};
    \path pic["",draw,angle radius=3mm,angle eccentricity=1.5,pic text options={shift={(+2pt,-2pt)}}, blue, thick] {angle = A--E--D};
  \end{tikzpicture}
  \caption{The interior angles $\beta(c,v_i)$ (marked blue) for a cycle $c$ that describes a simple planar polygon
    may not correspond to the notion of internal angle in the usual sense.
    For this example, the internal angle at $v_i$ is the same as the interior angle $\beta(c,v_i)$ for $i=0,1,3,4$,
    but the internal angle at $v_2$ is equal to $2\pi - \beta(c,v_2)$.}
  \label{ex:InteriorNonConvex}
\end{figure}

\begin{definition}[{\cite{milnor1950total}}]
  \label{defn:simplecycletac}
  Let $c$ be a simple cycle.
  The \emph{total absolute curvature} of $c$, denoted $\kappa(c)$,
  is the sum of the exterior angles at its vertices.
\end{definition}
That is, for the simple cycle
  $
  c = \sum_{i=0}^{\ell-1} [v_i, v_{i+1}],
  $
  % and recalling that operations on indices of simple cycles should be understood modulo its length $\ell$:
  \[
    % \kappa(c) = \sum_{i=0}^{\ell-1} \theta(v_{i+1} -  v_i, v_i - v_{i-1})
    \kappa(c)
    = \sum_{i=0}^{\ell-1} \alpha(c,v_i)
    = \ell\pi - \sum_{i=0}^{\ell-1} \beta(c,v_i).
  \]

\begin{example}
    Below, we give an example of a simple cycle $c$ with
\[
  \kappa(c)  = 2\pi
             = \frac{\pi}{2} + \frac{\pi}{4} + \frac{\pi}{2} + \frac{\pi}{4} + \frac{\pi}{2}
             = 5\pi - \left(\frac{\pi}{2} + \frac{3\pi}{4} + \frac{\pi}{2} + \frac{3\pi}{4} + \frac{\pi}{2} \right)
\]
\[
  \begin{tikzpicture}[scale=1.2,baseline=(C.center)]
    \coordinate[label=left:$v_0$]  (A) at (0,0);
    \coordinate (Ae) at (0,-0.5);
    \coordinate[label=below:$v_1$] (B) at (2,0);
    \coordinate (Be) at (2.5,0);
    \coordinate[label=below:$v_2$] (C) at (3,1);
    \coordinate (Ce) at (3.5,1.5);
    \coordinate[label=right:$v_3$] (D) at (2,2);
    \coordinate (De) at (1.5,2.5);
    \coordinate[label=above:$v_4$]  (E) at (0,2);
    \coordinate (Ee) at (-0.5,2);

    \foreach \x in {A,B,C,D,E}
    \fill (\x) circle (2pt) ;

    \begin{scope}[every path/.style=solid, thick]
      \draw[->-] (A) --(B);
      \draw[->-] (B) --(C);
      \draw[->-] (C) --(D);
      \draw[->-] (D) --(E);
      \draw[->-] (E) --(A);
      \draw[dashed] (B) -- (Be);
      \draw[dashed] (C) -- (Ce);
      \draw[dashed] (D) -- (De);
      \draw[dashed] (E) -- (Ee);
      \draw[dashed] (A) -- (Ae);
    \end{scope}

    \path pic["$\frac{\pi}{4}$", draw, thick,angle radius=6mm, angle eccentricity=1.5, pic text options={shift={(+5pt,0pt)}}] {angle = Be--B--C};
    \path pic["$\frac{\pi}{2}$", draw, thick,angle radius=4mm, angle eccentricity=1.5, pic text options={shift={(0pt,+3pt)}}] {angle = Ce--C--D};
    \path pic["$\frac{\pi}{4}$", draw, thick,angle radius=6mm, angle eccentricity=1.5, pic text options={shift={(-2pt,0pt)}}] {angle = De--D--E};
    \path pic["$\frac{\pi}{2}$", draw, thick,angle radius=4mm, angle eccentricity=1.5, pic text options={shift={(-2pt,-2pt)}}] {angle = Ee--E--A};
    \path pic["$\frac{\pi}{2}$", draw, thick,angle radius=4mm, angle eccentricity=1.5, pic text options={shift={(+2pt,-2pt)}}] {angle = Ae--A--B};
  \end{tikzpicture}
  \quad\quad
  \begin{tikzpicture}[scale=1.2,baseline=(C.center)]
    \coordinate[label=left:$v_0$]  (A) at (0,0);
    \coordinate[label=below:$v_1$] (B) at (2,0);
    \coordinate[label=right:$v_2$] (C) at (3,1);
    \coordinate[label=right:$v_3$] (D) at (2,2);
    \coordinate[label=above:$v_4$]  (E) at (0,2);
    \foreach \x in {A,B,C,D,E}
    \fill (\x) circle (2pt) ;

    \begin{scope}[every path/.style=solid,thick]
      \draw[->-] (A) --(B);
      \draw[->-] (B) --(C);
      \draw[->-] (C) --(D);
      \draw[->-] (D) --(E);
      \draw[->-] (E) --(A);
    \end{scope}

    \path pic["$\frac{\pi}{2}$",draw,angle radius=3mm,angle eccentricity=1.5,pic text options={shift={(+2pt,+2pt)}}] {angle = B--A--E};
    \path pic["$\frac{3\pi}{4}$",draw,angle radius=3mm,angle eccentricity=1.8] {angle = C--B--A};
    \path pic["$\frac{\pi}{2}$",draw,angle radius=3mm,angle eccentricity=1.5] {angle = D--C--B};
    \path pic["$\frac{3\pi}{4}$",draw,angle radius=3mm,angle eccentricity=1.8] {angle = E--D--C};
    \path pic["$\frac{\pi}{2}$",draw,angle radius=3mm,angle eccentricity=1.5,pic text options={shift={(+2pt,-2pt)}}] {angle = A--E--D};
  \end{tikzpicture}
\]
\end{example}

In order to state Fenchel's Theorem (Theorem~\ref{thm:kappaineq}),
we recall some additional definitions following Milnor~\cite{milnor1950total}.
First, we note that a simple cycle $c$ given by $(v_\ell=v_0,v_1,\hdots,v_{\ell-1})$
describes a closed curve $\gamma_c$; i.e.\ a continuous map
$\gamma_c: [0,1] \rightarrow \mathbb{R}^N$
with some $0 = t_0 < t_1 < \hdots < t_{\ell-1} < t_\ell = 1$ such that
$v_i = \gamma_c(t_i)$ for each $i=0,1,\hdots,\ell$,
and $\gamma_c([t_i, t_{i+1}])$ traces out the line segment from $v_i$ to $v_{i+1}$,
for each $i=0,1,\hdots,\ell-1$.
Since that $\gamma_c(0) = v_0 = \gamma_c(1)$,
we actually have a continuous map from the quotient space
$S^1 = [0,1]/\{0\sim 1\}$ which is the circle.
We ignore the distinction between different parametrizations of $\gamma_c$ and circular shifts of the vertices of $c$.

\begin{remark}
  \label{remark:simpleclosedcurve}
  While it is not a notion that we need in the rest of this paper, we warn of the following potential for confusion.
  Recall that in general, a closed curve $\gamma$ is said to be \emph{simple}
  if $\gamma(t) = \gamma(s)$ only when $t=s$ or $\{t,s\} \subset \{0,1\}$.
  We warn that a simple cycle $c$ in an abstract simplicial complex
  does not necessarily give rise to closed curve $\gamma_c$ that is simple,
  depending on how its vertices are located in $\mathbb{R}^N$.
  For example, on the plane,
  let $c = [v_0,v_1] + [v_1,v_2] + [v_2,v_3] + [v_3,v_0]$ be a cycle with
  the locations of its vertices as illustrated below.
  \begin{center}
    \begin{tikzpicture}[scale=1.2,baseline=(B.center)]
      \coordinate[label=left:$v_{0}$] (A) at (0,0);
      \coordinate[label=right:$v_{1}$]  (B) at (1,0);
      \coordinate[label=left:$v_{2}$]  (C) at (0,1);
      \coordinate[label=right:$v_{3}$] (D) at (1,1);

      \foreach \x in {A,B,C,D}
      \fill (\x) circle (2pt);

      \begin{scope}[every path/.style=solid,thick]
        \draw[->-] (A) -- (B);
        \draw[->-={.3}{}] (B) -- (C);
        \draw[->-] (C) -- (D);
        \draw[->-={.3}{}] (D) -- (A);
      \end{scope}
    \end{tikzpicture}
  \end{center}
  Then, $c$ is a simple cycle but $\gamma_c$ is not a simple closed curve since it has a self-intersection.
\end{remark}

\begin{remark}
  \label{remark:geometric}
  Under the condition that $\Sigma$ is a geometric simplicial complex,
  the $1$-simplices of a simple cycle $c$ can only intersect at its vertices.
  Furthermore, by definition the vertices of a simple cycle are pairwise distinct.
  In this case $\gamma_c$ is indeed a simple closed curve (in the sense of Remark~\ref{remark:simpleclosedcurve}).
\end{remark}

Next, a simple cycle $c$ is said to be \emph{planar} if its vertices $v_0,v_1,\hdots,v_{\ell-1}$ are coplanar.
Clearly, if $c$ is planar, the closed curve $\gamma_c$ it describes is also planar
(i.e.\ the image $\gamma([0,1])$ lies on the same plane).
A planar simple cycle $c$ is said to be \emph{convex}
if the following condition holds for $\gamma_c$:
for each line $L$, either
$L$ contains $\gamma_c(t)$  for at most two different values of $t \in [0,1)$,
or $L$ contains all values of $t$ within some interval in the circle $S^1 = [0,1]/\{0\sim 1\}$.

We state Fenchel's Theorem in the generality proven by Milnor~\cite{milnor1950total}
(the lower bound $2\pi \leq \kappa(c)$ and the equality condition below), applied to our setting.
The upper bound is $\kappa(c) \leq \ell \pi$ follows immediately from the definition of $\kappa(c)$.
\begin{theorem}[{Fenchel's Theorem \cite{fenchel1929krummung,borsuk1948courbure,milnor1950total}}]
  \label{thm:kappaineq}
  % Let $K$ be an abstract simplicial complex with vertex set $V \subset\bbR^N$.
  Let $c$ be a simple cycle of $K$ on $\ell$ vertices. Then,
  \[
    2\pi \leq \kappa(c) \leq \ell \pi,
  \]
  and $\kappa(c) = 2\pi$ if and only if $c$ is planar and convex.
\end{theorem}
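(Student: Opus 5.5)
The upper bound is immediate: each exterior angle $\alpha(c,v_i)=\theta(v_{i+1}-v_i,v_i-v_{i-1})$ lies in $[0,\pi]$, and there are $\ell$ of them, so $\kappa(c)\le \ell\pi$. The real content is the lower bound and the equality case. For these I would rely on Milnor's polygonal form of Fenchel's theorem as cited. I will only make explicit why the statement applies in our setting, where $\gamma_c$ need not be a simple closed curve (Remark~\ref{remark:simpleclosedcurve}).

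For the lower bound I would use the Gauss-map argument on the sphere $S^{N-1}$. Let $u_i=(v_{i+1}-v_i)/\lVert v_{i+1}-v_i\rVert$, which is well defined because consecutive vertices are distinct. Then $\alpha(c,v_{i+1})$ is exactly the spherical distance from $u_i$ to $u_{i+1}$. Hence $\kappa(c)$ is the length of the closed spherical polygon $u_0u_1\cdots u_{\ell-1}u_0$, where each side is a minimal great-circle arc.

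Next I claim this spherical polygon meets every great sphere $\{x:\langle x,w\rangle=0\}$. The reason is that $\sum_i \lVert v_{i+1}-v_i\rVert\langle u_i,w\rangle=\langle \sum_i (v_{i+1}-v_i),w\rangle=0$. So the $\langle u_i,w\rangle$ cannot all be strictly positive or all strictly negative. A closed curve on the sphere that meets every great sphere has length at least $2\pi$. To see this, suppose the length were less than $2\pi$. Take two points $p,q$ splitting the curve into two halves, each of length less than $\pi$, and let $w$ be the midpoint of the minimal arc from $p$ to $q$. Then both halves stay in the open hemisphere centred at $w$, which contradicts the claim. This gives $\kappa(c)\ge 2\pi$.

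For the equality case, first suppose $c$ is planar and convex. Then the $u_i$ lie on a great circle and turn monotonically through exactly one revolution. Convexity rules out reversals and winding more than once, and a planar closed polygon has total signed turning $2\pi k$. So each $\alpha$ equals the signed turning, and the sum is $2\pi$.

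Conversely, suppose $\kappa(c)=2\pi$. I expect this direction to be the main obstacle. The plan is to refine the length argument: a closed spherical curve of length exactly $2\pi$ that meets every great sphere must be a great circle traversed once, possibly with degenerate back-and-forth pieces. Then all $u_i$ lie in a $2$-plane, which forces the vertices to be coplanar, and the turning must be monotone. From monotone turning through a single revolution I would then derive the line condition defining convexity. One degenerate possibility needs separate handling: the case $\alpha=\pi$, where there is an antipodal jump and the minimal arc is not unique. Rather than reprove all of this, I would cite Milnor~\cite{milnor1950total}, Theorem 3.4 and the polygonal remarks there. Milnor treats arbitrary closed polygons without any simplicity assumption and uses the same definition of convexity recalled above, so his result applies verbatim to $\gamma_c$.
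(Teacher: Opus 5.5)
Your proposal matches the paper's treatment: the upper bound because each exterior angle is at most $\pi$, and the lower bound and equality case by appeal to Milnor's polygonal form of Fenchel's theorem, whose definition of convexity the paper adopts and which does not require $\gamma_c$ to be simple. Your tangent-indicatrix sketch for $\kappa(c)\geq 2\pi$ is correct and goes beyond the paper, which simply cites Milnor for that part. The sketch uses the facts that the edge directions cannot all lie in an open hemisphere and that a closed spherical curve of length $<2\pi$ lies in one.
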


\begin{remark}
  \label{remark:twopidegenerate}
  We illustrate the possibility of having $\kappa(c) = 2\pi$ for
  some simple cycle $c$
  with the closed curve $\gamma_c$ not simple
  in the sense of Remark~\ref{remark:simpleclosedcurve}.
  Consider four collinear vertices:
  \(
  \begin{tikzpicture}[point/.style={fill,
      shape=circle,inner sep=1pt,outer sep=0pt}]
    \coordinate[label=$v_{0}$]  (A) at (1,0);
    \coordinate[label=$v_{1}$]  (B) at (2,0);
    \coordinate[label=$v_{2}$]  (C) at (3,0);
    \coordinate[label=$v_{3}$]  (D) at (4,0);

    \foreach \x in {A,B,C,D}
    \fill (\x) circle (2pt) ;

    \begin{scope}[every path/.style=solid,thick]
      \draw[solid] (A) --(B);
      \draw[solid] (B) --(C);
      \draw[solid] (C) --(D);
      \draw[solid] (D) --(A);

      % \path pic["$\pi$",draw,angle radius=3mm,angle eccentricity=1.8] {angle = C--B--A};
      % \path pic["$\pi$",draw,angle radius=3mm,angle eccentricity=1.8] {angle = D--C--B};
    \end{scope}
  \end{tikzpicture}
  \)
  The simple cycle
  $
  c = [v_0,v_1] + [v_1,v_2] + [v_2,v_3] + [v_3,v_0]
  $
  has $\kappa(c) = \pi+ 0+0+\pi = 2\pi$,
  and is a convex curve (as can be checked from the definition given above).
\end{remark}

% \begin{theorem}[{Fenchel's theorem \cite{fenchel1929krummung,borsuk1948courbure,milnor1950total}}]
%   \label{thm:fenchel}
%   Let $\Sigma$ be a geometric simplicial complex with vertex set $V \subset\bbR^N$.
%   Let $c$ be a simple cycle of $\Sigma$ on $\ell$ vertices. Then,
%   $
%     2\pi \leq \kappa(c) \leq \ell \pi
%   $
%   with $2\pi = \kappa(c)$ if and only if then $c$ is convex planar.
% \end{theorem}

\begin{remark}
  Let us consider when the upper bound is achieved.
  Note first that for a simple cycle $\ell \geq 3$ is needed.
  With $\ell$ vertices, $\kappa(c) = \ell \pi$ implies that
  the exterior angle (turning angle) at each vertex of $c$ is $\pi$.
  This is possible, only for $\ell$ even  and the situation where
  all the vertices $v_i$ collinear and along their common line,
  each $v_i$ with $i$ even (respectively, odd)
  is to the left (respectively, right) of some point $p$
  (where the notion of left or right can be arbitrarily decided).

  For example, considering the situation of four collinear vertices
  \(
  \begin{tikzpicture}[point/.style={fill,
      shape=circle,inner sep=1pt,outer sep=0pt}]
    \coordinate[label=$w_{0}$]  (A) at (1,0);
    \coordinate[label=$w_{2}$]  (B) at (2,0);
    \coordinate[label=$w_{1}$]  (C) at (3,0);
    \coordinate[label=$w_{3}$]  (D) at (4,0);

    \foreach \x in {A,B,C,D}
    \fill (\x) circle (2pt) ;

    \begin{scope}[every path/.style=solid,thick]
      \draw[solid] (A) --(B);
      \draw[solid] (B) --(C);
      \draw[solid] (C) --(D);
      \draw[solid] (D) --(A);

      % \path pic["$\pi$",draw,angle radius=3mm,angle eccentricity=1.8] {angle = C--B--A};
      % \path pic["$\pi$",draw,angle radius=3mm,angle eccentricity=1.8] {angle = D--C--B};
    \end{scope}
  \end{tikzpicture}
  \),
  the simple cycle $c' = [w_0,w_1] + [w_1,w_2] + [w_2,w_3] + [w_3,w_0]$ has
  $\kappa(c') = 4 \pi$.
  Compare this with the cycle $c$ in Remark~\ref{remark:twopidegenerate}.
  Both $c$ and $c'$ are collinear. While $c$ (and $\gamma_c$) goes around once
  and is convex, $c'$ (and $\gamma_{c'}$) starting
  from $w_0$ goes to the right to $w_1$,
  turns around heading to $w_2$, turns around again to head to $w_3$,
  and then final turns around again to go back to $w_0$.
  We can see that $c'$ is not convex; for example, the vertical line that passes through the midpoint between $w_2$ and $w_1$ intersects $\gamma_{c'}$ exactly four times.
\end{remark}

\begin{remark}
  \label{remark:geometric2}
  Continuing Remark~\ref{remark:geometric},
  in the case that $c$ is a simple cycle in a geometric simplicial complex,
  such degeneracies do not occur. In this setting,
  if $\kappa(c) = 2\pi$, then $c$ describes a planar nondegenerate convex polygon.
\end{remark}

Finally, it is convenient to give the following definition,
which sets the minimum possible value at $0$.
\begin{definition}
  Define the \emph{reduced total absolute curvature} of a simple cycle $c$ to be
  \[
    \kappa'(c) := \kappa(c) - 2\pi.
  \]
\end{definition}

By Theorem~\ref{thm:kappaineq}, clearly
$0 \leq \kappa'(c) \leq (\ell-2)\pi$
for simple cycles $c$ on $\ell$ vertices.

%%% Local Variables:
%%% mode: LaTeX
%%% TeX-master: "main"
%%% End:

\section{Problem formulation}
\label{sec:formulation}

To formulate our optimization problem,
we consider the following generalization of the
total absolute curvature.

\begin{definition}[{cf.\ \cite{taniyama1998total}}]
  \label{defn:cycletac}
  Let $K$ be a simplicial complex with vertex set $V \subset \bbR^N$,
  and fix as basis for $C_k(K,\bbR)$
  the set of $k$-simplices with chosen orientations.
  \begin{enumerate}
  \item Denote by $\Zpm(K,\bbR)$ be the set of $z \in Z_1(K,\bbR) \subseteq C_1(K,\bbR)$
    with coefficients in $\{-1,0,1\}$ with respect to
    the basis of oriented $1$-simplices for $C_1(K,\bbR)$.
  \item For $z \in \Zpm(K,\bbR)$, the \emph{exterior angle} $\alpha(z,v)$ at a vertex
    $v \in V$ is
    the sum of
    $\theta(y-v,v-x)$ taken over
    $\{x,y\} \subset V$ such that $x \neq y$,
    $\{x,v\}$ and $\{y,v\}$ are both
    $1$-simplices of $K$, and % their corresponding oriented $1$-simplices
    both $[x,v]$ and $[y,v]$
    have nonzero coefficients in $z$.
    When there exist no such $\{x,y\} \subset V$ for $v$, $\alpha(z,v)$ is $0$.
    % with respect to
    % the basis of oriented $1$-simplices for $C_1(K,\bbR)$.
  \item The \emph{total absolute curvature} $\kappa(z)$ of $z \in \Zpm(K,\bbR)$
    is defined to be the total of its exterior angles
    \[
      \kappa(z) := \sum_{v\in V}\alpha(z,v).
    \]
  \end{enumerate}
\end{definition}

The exterior angle at $v$ is defined as
a sum over subsets $\{x,y\} \subset V$ satsfiying certain conditions.
By the fact that
$
\theta(y-v,v-x) = \theta(v-y, x-v) = \theta(x-v, v-y)
$
the summands are well-defined. Furthermore, it does not depend on the chosen orientations of the $1$-simplices.
The total absolute curvature defined here is the
total curvature for graphs of \cite{taniyama1998total} adapted to our setting.
It is also clear that for a simple cycle,
the definitions for the exterior angle in Definition~\ref{defn:simplecycleangles}
and in Definition~\ref{defn:cycletac} agree,
and likewise the definitions for the total absolute curvature in
Definition~\ref{defn:simplecycletac}
and in Definition~\ref{defn:cycletac} agree.

Recall that for each $k$, we choose as basis for $C_k(K,\bbR)$
the set of oriented $k$-simplices of $K$ with chosen orientations.
Furthermore, we fix the order of the basis elements.
With respect to this choice of ordered basis, this gives an isomorphism
$C_k(K,\bbR) \cong \bbR^{n_k}$ where $n_k$ is the number of $k$-simplices of $K$.
Below, for $x \in C_k(K,\bbR)$, we denote by $\mvec{x}$ the column vector of its coefficients
with respect to the chosen ordered basis for $C_k(K,\bbR)$.
We denote by $|\mvec{x}|$ the element-wise absolute value of $\mvec{x}$.

\begin{lemma}
  \label{lemma:quadraticform}
  Let $(e_1,e_2,\hdots, e_{n_1})$ be the chosen ordered basis of oriented $1$-simplices of $K$.
  Then, for $z \in \Zpm(K, \bbR)$,
  \begin{equation}
    \label{eqn:tac}
    \kappa(z) = \frac{1}{2} \abs{\vec{z}}^T Q \abs{\vec{z}}
  \end{equation}
  where $Q$ is the symmetric $n_1 \times n_1$ matrix with $(i,j)$th entry $q_{ij}$ defined as follows.
  The entry $q_{ij}$ is $\theta(y-v,v-x)$ if
  $i\neq j$ and $e_i, e_j$ share a vertex $v$ (and thus only the vertex $v$),
  in which case $y$ and $x$ are defined to be the vertices of $e_i$ and $e_j$ respectively
  distinct from $v$.
  Otherwise, $q_{ij}$ is $0$.
  % \[
  %   w_{ij} :=
  %   \begin{cases}
  %     \theta(p_{k}-p_{s},p_{l}-p_{s}) & e_i = \{p_k,p_s\},e_j = \{p_l,p_s\},\exists p_s\exists  p_k \neq p_l , \\
  %     0                                & otherwise
  %   \end{cases}
  % \]
\end{lemma}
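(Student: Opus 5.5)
The plan is to expand the quadratic form directly and match it term by term with the sum defining $\kappa(z)$ in Definition~\ref{defn:cycletac}.

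First, write $\abs{\vec{z}}^T Q \abs{\vec{z}} = \sum_{i,j} q_{ij}\,\abs{z_i}\,\abs{z_j}$, where $z_i$ is the coefficient of $e_i$ in $z$. Since $z \in \Zpm(K,\bbR)$, each $\abs{z_i}$ lies in $\{0,1\}$. Hence $\abs{z_i}\abs{z_j}=1$ exactly when both $e_i$ and $e_j$ have nonzero coefficient in $z$, and is $0$ otherwise. The diagonal entries $q_{ii}$ vanish, so the sum reduces to $\sum q_{ij}$ over ordered pairs $(i,j)$ with $i\neq j$ and $e_i,e_j$ both in the support of $z$.

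Next, I check that $Q$ is well defined and symmetric.
\begin{itemize}
\item Two distinct $1$-simplices share at most one vertex, since sharing two would make them equal. So $v$, $x$, $y$ are determined.
\item We have $x\neq y$, because otherwise $e_i=e_j$ as unoriented simplices.
\item Symmetry follows from the identity $\theta(y-v,v-x)=\theta(x-v,v-y)$ noted after Definition~\ref{defn:cycletac}, so $q_{ij}=q_{ji}$.
\end{itemize}
Therefore the sum over ordered pairs equals twice the sum over unordered pairs $\{i,j\}$, which absorbs the factor $\tfrac12$:
\[
\tfrac12 \abs{\vec{z}}^T Q \abs{\vec{z}} \;=\; \sum_{\{i,j\}} q_{ij},
\]
where the sum runs over unordered pairs of distinct support edges sharing a vertex.

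Finally, I establish a bijection between these unordered pairs $\{e_i,e_j\}$ and the index set of the double sum $\sum_{v\in V}\alpha(z,v)$. That index set consists of triples $(v,\{x,y\})$ with $x\neq y$, with $\{x,v\},\{y,v\}\in K_1$, and with both edges in the support of $z$. The bijection is
\[
\{e_i,e_j\} \longmapsto \bigl(v,\{x,y\}\bigr),
\]
where $v$ is the unique shared vertex and $x$, $y$ are the remaining endpoints. Its inverse sends $(v,\{x,y\})$ to the pair of edges $\{x,v\}$ and $\{y,v\}$. These edges are distinct because $x\neq y$, and their unique common vertex is $v$.

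Corresponding terms are equal, namely $q_{ij}=\theta(y-v,v-x)$. Orientation plays no role, since only $\abs{z_i}$ enters and the angle is insensitive to orientation. Summing gives $\kappa(z)$. The case where $\alpha(z,v)=0$ because no pair exists at $v$ simply contributes no terms.

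The only delicate point is the bookkeeping in the last step: making sure each angle is counted exactly once after halving, and that the shared-vertex description matches the indexing of Definition~\ref{defn:cycletac}. This is routine once the bijection above is stated.
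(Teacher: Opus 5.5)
Your proposal is correct and follows essentially the same argument as the paper: expand the quadratic form using $\abs{z_i}\in\{0,1\}$, check symmetry of $Q$, and match pairs of support edges sharing a vertex $v$ with the unordered pairs $\{x,y\}$ in the definition of $\alpha(z,v)$, so that the factor $2$ from ordered versus unordered pairs cancels the $\tfrac12$. The only difference is the order of bookkeeping: you halve via symmetry before passing to vertices, while the paper groups by the shared vertex first. Your explicit bijection and well-definedness checks ($x\neq y$, at most one shared vertex) make slightly more explicit what the paper leaves implicit.
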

We call $Q$ the \emph{exterior angle matrix} of $K$.
\begin{proof}
  We first check that $Q$ is indeed symmetric.
  The condition ``$i\neq j$ and $e_i, e_j$ share a vertex $v$'' is symmetric in $i$ and $j$.
  For the pairs $i,j$ where this condition holds,
  defining  $y$ and $x$ to be the vertices distinct from $v$ of $e_i$ and $e_j$ respectively, we have
  \[
    q_{ij} = \theta(y-v,v-x) = \theta(v-x,y-v) = \theta(x-v,v-y) = q_{ji}.
  \]
  For the pairs where the condition does not hold, $q_{ij} = 0 = q_{ji}$.

  Next, let us check Equation~\ref{eqn:tac}:
  \begin{align*}
    \frac{1}{2}\abs{\vec{z}}^T Q \abs{\vec{z}}
    & = \frac{1}{2}{\textstyle \sum_{
      \left\{i \neq j ~\middle|~ e_i, e_j \text{ both have nonzero coefficents in } z \right\}
      } q_{ij}}                                                      \\
    & = \frac{1}{2}{\textstyle \sum_{
      \left\{
      i \neq j ~\middle|~
      \substack{e_i, e_j \text{ both have nonzero coefficents in } z \\
        % e_i \neq e_j
    \text{and share exactly one vertex } v \text{ for some } v}
    \right\}} q_{ij}}                                              \\
    & = \frac{1}{2}
      \sum_{v \in V}
      {\textstyle \sum_{
      \left\{
      i \neq j ~\middle|~
      \substack{e_i, e_j \text{ both have nonzero coefficents in } z \\
    \text{and share only vertex } v }
    \right\}} q_{ij}}                                              \\
    & = \frac{1}{2}
      \sum_{v \in V} 2 \alpha(z,v) = \kappa(z)
  \end{align*}
  % where the first equality follows from the fact that $z \in \Zpm(K,\mathbb{R})$,
  % the second equality simply removes the cases where $e_i$ and $e_j$ share no vertices which implies that $q_{ij}=0$.
  where the coefficient of $2$ in the fourth equality
  comes from the fact that the inner summation
  adds up $q_{ij} = \theta(y-v,v-x)$
  taken over pairs of $e_i, e_j$ satisfying the given condition,
  whereas the summation in the definition for $\alpha(z,v)$
  is taken over (unordered) two-element subsets $\{x,y\} \subset V$.
\end{proof}

Next, interpret the total absolute curvature $\kappa(z)$ for $z\in \Zpm(K,\bbR)$
by connecting it with the total absolute curvature of simple cycles.
Recall that for a directed graph $G$, the outdegree (respectively, indegree) of vertex $v$,
denoted $\mathrm{outdeg}_G(v)$ (respectively, $\mathrm{indeg}_G(v)$)
is the number of directed edges with source (respectively, target) $v$.
We make the following observation.
\begin{lemma}
  \label{lem:zpm_directedgraph}
  There exists a bijection between nonzero $z \in \Zpm(K,\bbR)$
  and nonempty simple directed graphs $G$ satisfying the properties that:
  (i) its underlying undirected graph viewed as a simplicial complex is subcomplex of $K$
  and (ii) $\mathrm{outdeg}_G(v) = \mathrm{indeg}_G(v) > 0$ for each vertex $v$ of $G$.
\end{lemma}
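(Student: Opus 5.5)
We construct explicit maps in both directions and check that they are mutually inverse.

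\emph{From cycles to graphs.} Given nonzero $z \in \Zpm(K,\bbR)$, define $G_z$ as follows. For each basis element $e_i$ with chosen orientation $[a,b]$ and nonzero coefficient $z_i \in \{-1,1\}$, add the directed edge $a \to b$ if $z_i = 1$ and $b \to a$ if $z_i = -1$. Equivalently, we use the relation $[a,b] = -[b,a]$ to rewrite $z$ as a sum of oriented $1$-simplices, each with coefficient $+1$, and read each summand $[a,b]$ as the edge $a\to b$. The vertex set of $G_z$ is the set of endpoints of these edges.

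Since each $1$-simplex of $K$ contributes at most one directed edge, $G_z$ has no parallel or antiparallel edges. Since $1$-simplices have two distinct vertices, $G_z$ has no loops, so $G_z$ is simple. Its underlying undirected graph consists of $1$-simplices of $K$ together with their vertices, so it is a subcomplex of $K$, which gives (i). Nonemptiness follows from $z \neq 0$.

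For (ii), we compute
\[
\partial_1 z = \sum_{a\to b \in G_z} ([b]-[a]) = \sum_v \bigl(\mathrm{indeg}_{G_z}(v) - \mathrm{outdeg}_{G_z}(v)\bigr)[v].
\]
Because $z \in \ker\partial_1$ and the vertices form a basis of $C_0(K,\bbR)$, each coefficient vanishes, so indegree equals outdegree at every vertex. Every vertex of $G_z$ lies on some edge by construction, so this common value is positive.

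\emph{From graphs to cycles.} Given $G$ satisfying (i) and (ii), set
\[
z_G = \sum_{a\to b \in G} [a,b].
\]
Each edge $\{a,b\}$ is a $1$-simplex of $K$ by (i). By simplicity, each $1$-simplex appears at most once, and never in both directions. Hence, in the chosen basis, the coefficient of each $e_i$ is $\pm 1$ or $0$. The same boundary computation together with (ii) gives $\partial_1 z_G = 0$, so $z_G \in \Zpm(K,\bbR)$. Since $G$ is nonempty and, by (ii), each of its vertices has an incident edge, $G$ has at least one edge, and therefore $z_G \neq 0$.

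\emph{Mutual inverses.} We have $z_{G_z} = z$ by construction. Conversely, $G_{z_G}$ has the same directed edges as $G$. Its vertices are the endpoints of those edges, and by (ii) this is exactly the vertex set of $G$ (no isolated vertices), so $G_{z_G} = G$.

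The main point needing care is this last matching of vertex sets: isolated vertices must be excluded, which is precisely the role of the condition $\mathrm{outdeg} > 0$. A secondary point is the use of simplicity to rule out antiparallel edges, which would otherwise produce coefficients $0$ or $\pm 2$ rather than $\pm 1$.
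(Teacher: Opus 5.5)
Your proof is correct and takes essentially the same route as the paper. Both arguments define $z_G=\sum_{a\to b}[a,b]$, recover $G$ from the $\pm1$ coefficients while discarding isolated vertices, and read off $\mathrm{indeg}(v)-\mathrm{outdeg}(v)$ as the coefficient of $[v]$ in $\partial_1 z$. You go slightly further by checking explicitly that the two maps are mutually inverse, and by stating that ``simple'' must exclude antiparallel arcs, which the paper's ``by construction'' uses implicitly since $a\to b$ and $b\to a$ together would cancel to $0$.
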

\begin{proof}
  Given $G$ a directed graph satisfying the properties,
  let $z = \displaystyle\sum_{i \rightarrow j \text{: directed edge of } G} [i,j]$,
  which is nonzero because $G$ is nonempty.
  Then, we see that $z \in C_1(K, \bbR)$ by property (i),
  and $\partial_1(z) = 0$ by property (ii), showing that $z \in Z_1(K, \bbR)$.
  The coefficients of $z$ are in $\{-1,0,1\}$ by construction, and thus $z \in \Zpm(K, \bbR)$.

  In the other direction, by choosing orientations, each nonzero $z \in \Zpm(K,\bbR)$ can be written
  in the form  $z = \sum_{(i,j) \in K_0 \times K_0} z(i,j) [i,j]$
  where $z(i,i) = 0$ for all $i \in K_0$ and for each $i,j \in K_0$,
  either both $z(i,j)$ and $z(j,i)$ are $0$ or one of them is equal to $1$ and the other is $0$.
  Let $G'$ be the directed graph with vertices $K_0$ and directed edges $i\rightarrow j$ for $z(i,j)=1$.
  For each $v \in K_0$, the coefficient of $v$ in $\partial(z) = 0$ is given by:
  \[
    0 = \sum_{i: z(i,v)=1 } (-1)^0 + \sum_{j:z(v,j)=1} (-1)^1 = \mathrm{indeg}_{G'}(v) - \mathrm{outdeg}_{G'}(v).
  \]
  Remove isolated vertices from $G'$ to obtain $G$, which satisfies the properties (i) and (ii).
\end{proof}

Next, we recall the notion and existence of decomposition(s) into simple cycle(s), as follows.
\begin{lemma}
  \label{lemma:cycledecomposition}
  For nonzero $z \in \Zpm(K,\bbR)$, there exists a decomposition
  \begin{equation*}
    \label{eqn:cycledecomposition}
    z = \sum_{i=1}^m c_i
  \end{equation*}
  where $m\geq 1$, each $c_i$ is a simple cycle, and the $c_i$ are edge-disjoint.

  Furthermore, given such a decomposition, % in Equation~\eqref{eqn:cycledecomposition}
  the following hold.
  \begin{enumerate}
  \item $\vec{z} = \sum_{i=1}^m \vec{c_i}$, \label{lemitem:cycledecomposition_vec}
  \item \label{item:cycledecomposition_abssum} $\abs{\vec{z}} = \sum_{i=1}^m \abs{\vec{c_i}}$,
  \item $\abs{\vec{c_i}}^T \abs{\vec{c_j}} = 0$ for $i\neq j$,
  \end{enumerate}
  where the notation $\vec{x}$ means the column vector of coefficients of $x \in C_1(K,\bbR)$
  with respect to the chosen ordered basis of oriented $1$-simplices of $K$.
\end{lemma}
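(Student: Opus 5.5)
We plan to work with the directed graph of Lemma~\ref{lem:zpm_directedgraph} and extract simple cycles greedily. Let $z \in \Zpm(K,\bbR)$ be nonzero, and let $G$ be the nonempty simple directed graph associated to $z$, so that $\mathrm{outdeg}_G(v) = \mathrm{indeg}_G(v) > 0$ at every vertex. The existence part goes by induction on the number of directed edges of $G$, i.e.\ on the number of nonzero coefficients of $\vec{z}$.

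\emph{Finding one simple cycle.} Start at any vertex $u_0$ of $G$ and walk along outgoing edges. Every vertex reached via an incoming edge has positive outdegree, so the walk never gets stuck. Since $G$ is finite, some vertex eventually repeats. Take the first repetition $u_a = u_b$ with $a<b$; then $(u_a, u_{a+1}, \hdots, u_{b-1})$ are pairwise distinct and form a directed closed walk. We must check $\ell = b-a \geq 3$.
\begin{itemize}
\item $\ell=1$ is impossible, since $G$ has no loops (the entries $z(i,i)$ vanish).
\item $\ell=2$ would require both $u_a \to u_{a+1}$ and $u_{a+1}\to u_a$ to be edges. This is excluded by the construction in Lemma~\ref{lem:zpm_directedgraph}, where at most one of $z(i,j)$ and $z(j,i)$ is nonzero.
\end{itemize}
Hence $c_1 := \sum_{k=a}^{b-1}[u_k,u_{k+1}]$ is a simple cycle, and each edge $[u_k,u_{k+1}]$ has coefficient exactly $+1$ in $z$ under the orientation used to build $G$.

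\emph{Induction step.} The chain $z - c_1$ has coefficients in $\{-1,0,1\}$, namely those of $z$ with the edges of $c_1$ set to $0$, and it is a cycle because $\partial_1 c_1 = 0$. So $z-c_1 \in \Zpm(K,\bbR)$ with strictly fewer nonzero coefficients. If $z-c_1 = 0$ we stop with $m=1$. Otherwise induction gives $z-c_1 = \sum_{i\geq 2} c_i$ with the $c_i$ edge-disjoint. Their edges lie in the support of $z - c_1$, which is disjoint from the edges of $c_1$, so the full family is edge-disjoint.

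\emph{The three listed properties.} Item~\ref{lemitem:cycledecomposition_vec} is linearity of the coordinate map $x\mapsto \vec{x}$. For the other two, the key observation is that edge-disjointness means the supports of the $\vec{c_i}$ are pairwise disjoint. This immediately gives $\abs{\vec{c_i}}^T\abs{\vec{c_j}} = 0$ for $i \neq j$. It also means each coordinate of $\vec{z}$ equals the coordinate of at most one $\vec{c_i}$, with the others zero there, so taking absolute values coordinatewise gives $\abs{\vec{z}} = \sum_i \abs{\vec{c_i}}$. Note that these properties hold for any given edge-disjoint decomposition into simple cycles, not only the one constructed above.

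The main subtle point is ruling out length-$2$ cycles and confirming that the simple cycle's orientation matches the signs in $z$. Both are handled by building $G$ via Lemma~\ref{lem:zpm_directedgraph}, which encodes signs as edge directions and forbids antiparallel pairs.
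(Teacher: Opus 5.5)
Your proof is correct and follows the same route as the paper: you pass to the directed graph of Lemma~\ref{lem:zpm_directedgraph}, split it into edge-disjoint directed cycles, and read off the three items from linearity and from the disjointness of the supports. You also make explicit the Veblen-type extraction that the paper only calls ``easy to see,'' including the check that no directed cycle of length $1$ or $2$ can occur, which is what guarantees the extracted cycles satisfy $\ell \geq 3$.
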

\begin{proof}
  By Lemma~\ref{lem:zpm_directedgraph}, we identify $z \in \Zpm(K,\bbR)$ with a simple directed graph $G$
  satisfying the condition $\mathrm{outdeg}_G(v) = \mathrm{indeg}_G(v)$ for each vertex $v$.
  It is easy to see that such a directed graph satisfying this condition
  partitions into edge-disjoint directed cycles (i.e.\ simple cycles $c_i$ viewed as directed graphs)\footnote{This can be shown by a simple modification of the proof for the well-known statement for undirected graphs with each vertex having even degree (Veblen's theorem \cite{veblen1912application}; see also for example \cite[Theorem~I.1]{bollobas1998modern}).}.
  This gives the corresponding decomposition in
  \eqref{eqn:cycledecomposition}. Item~\ref{lemitem:cycledecomposition_vec} follows from linearity of taking coefficients with respect to basis. The remaining two items follow from
  the fact that $c_i$ are edge-disjoint.
\end{proof}

\begin{theorem}
  \label{thm:decompo}
  Let $z = \sum_{i=1}^m c_i$ be a decomposition of a nonzero $z \in \Zpm(K,\bbR)$
  into pairwise edge-disjoint simple cycles $c_i$ as in Lemma~\ref{lemma:cycledecomposition}.
  Then
  \[
    \kappa(z)
    = \sum_{i=1}^m \kappa(c_i) + \frac{1}{2}\sum_{i\neq j} \abs{\vec{c_i}}^T Q \abs{\vec{c_j}}
    = \sum_{i=1}^m \kappa'(c_i) + 2\pi m + \frac{1}{2}\sum_{i\neq j} \abs{\vec{c_i}}^T Q \abs{\vec{c_j}}.
  \]
\end{theorem}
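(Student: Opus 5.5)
The plan is to expand the quadratic form from Lemma~\ref{lemma:quadraticform} using the additive decomposition of $\abs{\vec{z}}$ from Lemma~\ref{lemma:cycledecomposition}. By item~\ref{item:cycledecomposition_abssum} of that lemma, $\abs{\vec{z}} = \sum_{i=1}^m \abs{\vec{c_i}}$. Since $z \in \Zpm(K,\bbR)$, Lemma~\ref{lemma:quadraticform} applies and gives
\[
  \kappa(z) = \frac{1}{2}\Big(\sum_{i=1}^m \abs{\vec{c_i}}\Big)^T Q \Big(\sum_{j=1}^m \abs{\vec{c_j}}\Big)
  = \sum_{i=1}^m \frac{1}{2}\abs{\vec{c_i}}^T Q \abs{\vec{c_i}} + \frac{1}{2}\sum_{i\neq j}\abs{\vec{c_i}}^T Q \abs{\vec{c_j}},
\]
using bilinearity. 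The off-diagonal sum is taken over ordered pairs $i \neq j$. This already matches the cross term in the statement, so no use of the symmetry of $Q$ is needed to combine terms.

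Next I identify each diagonal term with $\kappa(c_i)$. A simple cycle $c_i = \sum_k [v_k,v_{k+1}]$ has coefficients in $\{-1,0,1\}$ once we rewrite it in the chosen orientations: its edges are distinct because the vertices are pairwise distinct and $\ell \geq 3$. It is a cycle, so $c_i \in \Zpm(K,\bbR)$. Lemma~\ref{lemma:quadraticform} then gives $\frac{1}{2}\abs{\vec{c_i}}^T Q\abs{\vec{c_i}} = \kappa(c_i)$, where $\kappa$ is in the sense of Definition~\ref{defn:cycletac}.

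The paper has already remarked that this agrees with Definition~\ref{defn:simplecycletac} for simple cycles. The one point worth checking, and the main (mild) obstacle, is the following. At each vertex $v_k$ of a simple cycle there is exactly one pair $\{v_{k-1},v_{k+1}\}$ of distinct neighbours with nonzero edge coefficients, and here $\ell\ge 3$ guarantees $v_{k-1}\neq v_{k+1}$. That pair contributes exactly $\theta(v_{k+1}-v_k, v_k-v_{k-1}) = \alpha(c_i,v_k)$. Vertices not on $c_i$ contribute $0$. With this, the first equality of the theorem follows.

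The second equality is immediate from the definition $\kappa'(c_i) = \kappa(c_i) - 2\pi$: summing over $i=1,\dots,m$ gives $\sum_i \kappa(c_i) = \sum_i \kappa'(c_i) + 2\pi m$.
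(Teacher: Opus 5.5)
Your proposal is correct and takes the same route as the paper. The paper's proof is a one-line appeal to Lemma~\ref{lemma:quadraticform}, item~\ref{item:cycledecomposition_abssum} of Lemma~\ref{lemma:cycledecomposition}, and the definition of $\kappa'$; your bilinear expansion over ordered pairs $i\neq j$ and your check that $\frac{1}{2}\abs{\vec{c_i}}^T Q \abs{\vec{c_i}}$ equals the simple-cycle $\kappa(c_i)$ just spell out the details the paper leaves implicit.
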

\begin{proof}
  % By
  This immediately follows from Lemma~\ref{lemma:quadraticform},
  Lemma~\ref{lemma:cycledecomposition} item~\ref{item:cycledecomposition_abssum},
  and the definition of the reduced total absolute curvature $\kappa'$.
\end{proof}
This shows that the total absolute curvature $\kappa(z)$ decomposes
into a sum of the total absolute curvatures of the simple cycles in the decomposition
plus a term penalizing shared vertices between different simple cycles.
Alternatively we can express it as a
sum of the reduced total absolute curvatures, a term penalizing the number of simple cycles,
and a term penalizing shared vertices.

\begin{corollary}
  Let $z \in \Zpm(K,\bbR)$ such that there exists a decomposition of $z$ into $m$ pairwise edge-disjoint simple cycles.
  Then
  \[
    \kappa(z) \geq 2\pi m.
  \]
\end{corollary}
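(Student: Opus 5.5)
The plan is to read the bound off the second expression for $\kappa(z)$ in Theorem~\ref{thm:decompo}. Fix a decomposition $z = \sum_{i=1}^m c_i$ into pairwise edge-disjoint simple cycles, which exists by hypothesis; the setting is exactly that of Lemma~\ref{lemma:cycledecomposition}. Note that $m \geq 1$ forces $z \neq 0$, because $\abs{\vec{z}} = \sum_i \abs{\vec{c_i}}$ and each simple cycle has nonzero coefficients. So Theorem~\ref{thm:decompo} applies and gives
\[
  \kappa(z) = \sum_{i=1}^m \kappa'(c_i) + 2\pi m + \frac{1}{2}\sum_{i\neq j} \abs{\vec{c_i}}^T Q \abs{\vec{c_j}}.
\]
It then suffices to show that the first and third terms are nonnegative.

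For the first term, Fenchel's Theorem (Theorem~\ref{thm:kappaineq}) gives $\kappa(c_i) \geq 2\pi$ for each simple cycle. Hence $\kappa'(c_i) = \kappa(c_i) - 2\pi \geq 0$, as already noted after the definition of $\kappa'$.

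For the third term, every entry of the exterior angle matrix $Q$ from Lemma~\ref{lemma:quadraticform} is either $0$ or a value of $\theta$, which lies in $[0,\pi]$, so $Q$ is entrywise nonnegative. The vectors $\abs{\vec{c_i}}$ are entrywise nonnegative as well. Therefore each bilinear term $\abs{\vec{c_i}}^T Q \abs{\vec{c_j}}$ is a sum of nonnegative products and is itself nonnegative.

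Combining the two observations gives $\kappa(z) \geq 2\pi m$. There is no real obstacle here. The only points needing care are invoking Fenchel's lower bound in the generality stated (it holds even for degenerate simple cycles whose vertices are collinear) and observing entrywise nonnegativity rather than any positive semidefiniteness of $Q$, which is not available.
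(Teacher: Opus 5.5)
Your proposal is correct and follows the paper's proof: combine Theorem~\ref{thm:decompo} with Fenchel's lower bound $\kappa(c_i)\geq 2\pi$ and the entrywise nonnegativity of $Q$ and of the vectors $\abs{\vec{c_i}}$. The only addition worth making is that the case $m=0$ (i.e.\ $z=0$) is trivial, since then $\kappa(z)=0$.
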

\begin{proof}
  This follows from Theorem~\ref{thm:decompo} and the facts that
  $\kappa(c_i) \geq 2\pi$ for each simple cycle $c_i$ by Theorem~\ref{thm:kappaineq}, and
  that $\sum_{i\neq j} \abs{\vec{c_i}}^T Q \abs{\vec{c_j}} \geq 0$
  because the entries of $Q$ are all non-negative.
\end{proof}

Next, we formulate the angle-optimal homologous cycle problem (AOHCP) with $\kappa(z)$ as the objective function.
Given $z_0 \in \Zpm(K, \bbR)$,
\begin{mini}
  {}{\kappa(z) = \frac{1}{2} \abs{\vec{z}}^T Q \abs{\vec{z}}}
  {}{}
  \addConstraint{z \sim z_0 ~(\text{i.e. } z - z_0 \in B_1(K, \bbR))}
  \addConstraint{z \in \Zpm(K, \bbR)}.
\end{mini}
To improve the interpretability of the feasible solutions
and to have all variables binary in the problem \eqref{opt:BQP} below,
we further restrict the problem to require
\[
  z - z_0 \in \Bpm(K, \bbR)
\]
where $\Bpm(K, \bbR)$ is the set of $y \in B_1(K,\bbR)$
with coefficients in $\{-1,0,1\}$ with respect to the basis of oriented $1$-simplices.

Recall the well-known transformation
of $\vec{x} = \vec{x}^{+} - \vec{x}^{-}$
and $\abs{\vec{x}} = \vec{x}^{+} + \vec{x}^{-}$
where $\vec{x}^{+}, \vec{x}^{-} \geq 0$ are the positive and negative parts of a vector $\vec{x}$.
Using this transformation, we reformulate the optimization problem as
\begin{mini}
  {}{\frac{1}{2} (\vec{x}^{+}+\vec{x}^{-})^{\top}Q(\vec{x}^{+}+\vec{x}^{-})}
  {\label{opt:BQP}}{}
  \addConstraint{(\vec{x}^{+}-\vec{x}^{-}) - \vec{z_0} = [\partial_2](\vec{y}^{+}-\vec{y}^{-})}{}
  \addConstraint{\vec{x}^{+},\vec{x}^{-} \in \{0,1\}^{n_1}}
  \addConstraint{\vec{y}^{+},\vec{y}^{-} \in \{0,1\}^{n_2}}
\end{mini}
where $[\partial_2]$ is the matrix of $\partial_2$ with respect to the chosen bases of oriented $2$-simplices and $1$-simplices.
This can be rewritten in the standard form of a binary quadratic programming problem:
\begin{mini}
  {}{\frac{1}{2} \vec{w}^{\top}\hat{Q}\vec{w}}
  {\label{opt:StandardBQP}}{}
  \addConstraint{A\vec{w} = \vec{z_0}}{}
  \addConstraint{\vec{w} \in \{0,1\}^{2 ({n_1}+{n_2})}}
\end{mini}
where
\[
  \vec{w} =
  \begin{bmatrix}
    \vec{x}^{+} \\ \vec{x}^{-} \\ \vec{y}^{+} \\ \vec{y}^{-}
  \end{bmatrix}
  ,~
  A =
  \begin{bmatrix}
    I & -I & -[\partial_2] & [\partial_2]
  \end{bmatrix}
  \text{, and }
  \hat{Q} =
  \begin{bmatrix}
    Q & Q & O & O \\
    Q & Q & O & O \\
    O & O & O & O \\
    O & O & O & O \\
  \end{bmatrix}
\]
where $I$ is the $n_1 \times n_1$ identity matrix and $O$ are zero matrices of appropriate sizes.
This is a binary quadratic programming problem with
$2(n_1+n_2)$ binary variables and $n_1$ linear equality constraints.
Then, given an optimal solution
$
\vec{w}^\ast =
\begin{bmatrix}
  (\vec{x}^{+})^T & (\vec{x}^{-})^T & (\vec{y}^{+})^T & (\vec{y}^{-})^T
\end{bmatrix}^T
$
we reconstruct $z \in \Zpm(K, \bbR)$ by setting
$\vec{x} := \vec{x}^{+} - \vec{x}^{-}$ and letting $z$ be
the cycle such that the column vector of its coefficients
with respect to the chosen ordered basis for $C_1(K,\bbR)$ is $\vec{x}$.
Before making additional observations, we outline
the basic procedure for solving the AOHCP in Algorithm~\ref{alg:AngleOptimizeCycle}.

\begin{algorithm}[H]
  \caption{Basic procedure for solving the AOHCP}
  \label{alg:AngleOptimizeCycle}
  \begin{algorithmic}
    \Require A simplicial complex $K$ with vertices in $\bbR^N$ and $z_0 \in \Zpm(K, \bbR)$
    \Procedure{AngleOptimizeCycle}{$K$, $z_0$}
      \State Calculate the exterior angle matrix $Q$ and $2$nd boundary matrix $[\partial_2]$ of $K$.
      % \State Calculate $\hat{Q}$ and $A$ from $W$, $B$, $[\partial_2]$.
      \State Construct and solve the problem \eqref{opt:StandardBQP} (or its reformulation), and
      \State {~~~~} obtain an optimal solution $\vec{w}^\ast$.
      \State Construct $z^\ast \in \Zpm(K, \bbR)$ from $\vec{w}^\ast$.
      \State \Return $z^\ast$ and its objective value
    \EndProcedure
  \end{algorithmic}
\end{algorithm}

Note that $\hat{Q}$ is symmetric matrix with nonnegative entries and with zero diagonal.
We observe that $\hat{Q}$ cannot be positive semidefinite unless it is the zero matrix,
and thus the quadratic form
$f(\vec{w}) := \frac{1}{2} \vec{w}^{\top}\hat{Q}\vec{w}$
is not a convex function in general.

There are various techniques and reformulations for dealing with binary quadratic programming problems,
some of which we note below.
For example, we can rewrite the problem as an equivalent mixed-integer programming problem
using the standard linearization by Glover and Woolsey \cite{glover1974converting}
(together with an observation in \cite{forrester2008quadratic}
to omit some of the redundant additional constraints)
to obtain the problem
\begin{mini}
  {}{\sum_{(i,j) \in D}{\hat{q}_{ij}}\alpha_{ij}}
  {\label{opt:MILP}}{}
  \addConstraint{A\vec{w} = \vec{z_0}}{}
  \addConstraint{\vec{w} \in \{0,1\}^{2 ({n_1}+{n_2})}}
  \addConstraint{\alpha_{ij} \ge w_i + w_j -1 ~~ \text{for } (i,j) \in D}
  \addConstraint{\alpha_{ij} \ge 0 \qquad\qquad\quad~ \text{for } (i,j) \in D}
\end{mini}
where $D = \{(i,j) \mid 1 \leq i < j \leq 2{n_1}, \hat{q}_{ij} \neq 0\}$.
% \begin{mini}
%   {}{\sum_{i=0}^{2{n_1}-1}\sum_{j=i+1}^{2{n_1}-1}{\hat{q}_{ij}}\alpha_{ij}}
%   {\label{opt:MILP}}{}
%   \addConstraint{A\vec{w} = \vec{z_0}}{}
%   \addConstraint{\vec{w} \in \{0,1\}^{2 ({n_1}+{n_2})}}
%   \addConstraint{\alpha_{ij} \ge w_i + w_j -1 \ (0 \le i<j \le 2{n_1}-1)}
%   \addConstraint{\alpha_{ij} \ge 0 \ (0 \le i<j \le 2{n_1}-1).}
% \end{mini}
%
The linearization increases both the number of variables and the number of constraints in the optimization problem.
In particular, it introduces
$|D|$
% $n_1(2n_1 - 1)$
additional variables $\alpha_{ij}$ ($(i,j) \in D$) and
$2|D|$
% $2n_1(2n_1 - 1)$
additional inequality constraints.
Although the problem size increases,
it becomes possible to use standard mixed-integer programming solvers.
See \cite{adams2007linear,furini2019theoretical} and the references therein, for example,
for more recent developments in linearization techniques.

% Next, we will explain the algorithm for solving the optimization program. The algorithm is as follows. We call the solution for the algorithm angle optimal cycle.

% \begin{algorithm}[H]
%   \caption{Algorithm for an angle-optimal cycle}
%   \label{AngleOptimizeCycle}
%   \begin{algorithmic}
%     \Require a simplicial complex $K$ in $\mathbb{R}^N$ and a 1-dimensional cycle $z_0$
%     \State \textbf{Procedure:} AngleOptimizeCycle($K$,$z_0$)
%     \State \textbf{Step$1$:}
%     calculate $W,B,[\partial_2]$ from $K$
%     \State \textbf{Step$2$:}
%     calculate $\hat{Q}$ and $A$ from $W,B,[\partial_2]$
%     \State \textbf{Step$3$:}
%     construct optimization problem and solve the solution
%     \State
%     \begin{mini}
%       {}{\sum_{i=0}^{2{n_1}-1}\sum_{j=i+1}^{2{n_1}-1}{\hat{q}_{ij}}\alpha_{ij}}
%       {\tag{\ref{opt:MILP}}}{}
%       \addConstraint{A\vec{w} = \vec{z_0}}{}
%       \addConstraint{\vec{w} \in \{0,1\}^{2 ({n_1}+{n_2})}}
%       \addConstraint{\alpha_{ij} \ge w_i + w_j -1 \ (0 \le i<j \le 2{n_1}-1)}
%       \addConstraint{\alpha_{ij} \ge 0 \ (0 \le i<j \le 2{n_1}-1)}
%     \end{mini}
%     \Return angle optimal cycle and its objective value
%   \end{algorithmic}
% \end{algorithm}

As the size of the problem (i.e.\ the numbers
$n_1$ and $n_2$
of the
$1$-simplices and $2$-simplices respectively in $K$)
increases,
the time needed to solve the optimization problem may increase dramatically.
%
% Therefore, we explain a method to reduce computation time and compute a constrained optimal solution.
The following heuristics can be applied.
The first idea, imitating one of the heuristics proposed in \cite[Section~4.2]{obayashi2018volume},
is to solve the problem within a smaller simplicial complex $K'$
that is an appropriate neighborhood of $z_0$,
instead of using the original simplicial complex $K$. For example, one can choose $K'$ to be
the subcomplex of $K$ induced by vertices of $K$ with distance at most $\epsilon$ to some vertex of $z_i$.
Then, we can automatically increase $\epsilon$ as needed when an optimal solution is not found.
In conjuction, we can also apply the heuristic iteratively,
that is, iteratively solving the problem with $K_i$ an appropriate neighborhood of the $z_i$ for $i=0,1,\hdots$ and letting
$z_{i+1}$ be the result of $\texttt{AngleOptimizeCycle}(K_i, z_i)$.

%%% Local Variables:
%%% mode: LaTeX
%%% TeX-master: "main"
%%% End:

\section{Computational Demonstrations}
\label{sec:computations}

We show some computational experiments demonstrating the results of applying the proposed
Algorithm~\ref{alg:AngleOptimizeCycle} to point clouds $V \subset \mathbb{R}^{3}$.
We specify how we choose the simplicial complex $K$ and the cycle
$z_0 \in \Zpm(K, \bbR)$
as inputs to Algorithm~\ref{alg:AngleOptimizeCycle}.

We generate some point clouds $V \subset \mathbb{R}^3$,
construct its alpha complex \cite{edelsbrunner1994three} filtration
$\{\mathrm{Alpha}_r(V)\}_{r \in \mathbb{R}}$,
and compute its dimension $1$ persistence diagram \cite{edelsbrunner2002topological} $D_1$.
Then, we choose a birth-death pair $[b,d) \in D_1$ with the longest lifespan $d-b$,
and let $z_0$ be its representative cycle.
Finally, we choose some $t \in [0,1)$
and choose the simplicial complex $K_r := \mathrm{Alpha}_r(V)$ with
$r := b + t(d-b)$ as our simplicial complex $K$.
Note that $z_0$ is born at $b$, and thus $z_0 \in Z_1(K_b, \bbR) \subseteq Z_1(K_r, \bbR)$
can indeed be considered as a $1$-cycle of $K_r$.
In general, it is possible that $z_0 \notin \Zpm(K_r, \bbR)$
(i.e.\ some of the coefficients of the simplices in the representative cycle $z_0$ are not in $\{-1,0,1\}$).

\begin{remark}
For a choice of $r$ with $K_r \supsetneq K_b$, a computed optimal solution $z^*$ by Algorithm~\ref{alg:AngleOptimizeCycle}
may involve simplices born after $b$ and thus is not a cycle in $K_b$. In particular, $z^*$ may not be a representative cycle for $[b,d)$.
However, it is by construction homologous to $z_0$ in $K_r$. It is in this sense that we consider the optimization problem.
Considering the problem for persistence representatives is potential future work.
\end{remark}

In practice, a possible pipeline integrating Algorithm~\ref{alg:AngleOptimizeCycle}
into a persistent homology analysis would adapt a similar pipeline, where one can use
some other filtration $\{K_r\}_{r \in \bbR}$ instead of the alpha complex filtration and/or
some other birth-death pair $[b,d)$ (not necessarily the one with the longest lifespan) and its associated representative cycle $z_0$.

For our experiments, we prepare several point clouds $V \subseteq \bbR^3$, summarized in Table~\ref{table:inputs}.
The data of type ``cylinder'' is created by randomly sampling ($n_0 = 300, 500, 1000$, respectively) points from the surface of a hollow cylinder with radius $1$ and height $2$.
The data of type ``slipper'' is a point cloud shaped like slipper but with no sole.
The base (boundary of the sole) of the slipper is an ellipse with major axis of length $4$ and minor axis of length $2$.
The toe cap and upper vamp of the slipper is formed as a surface of (half) revolution
by rotating half of the front part of the base. This is a portion of an ellipsoid.
For the slipper data, the points are not randomly sampled but instead obtained by taking regularly-spaced angles
in spherical coordinate system.

\begin{table}[h]
  \caption{Summary of the data used.
    We list the number of vertices $n_0$ and some details about the initial cycle $z_0$ representing the longest persistence interval $[b,d)$:
    $|z_0|$ is the number of $1$-simplices of $z_0$,
    $\ell(z_0)$ is the length (the sum of the lengths of the $1$-simplices) of $z_0$,
    and $\kappa(z_0)$ is its total absolute curvature.}
  \label{table:inputs}
  \begin{tabular}{|c|c|c|c|c|c|c|}
    \hline
       & Data type & $n_0$ & $[b,d)$                                     & $|z_0|$ & $\ell(z_0)$              & $\kappa(z_0)$  \\ \hline\hline
    C1 & cylinder  & $300$ & $[\num{0.02322393425}, \num{1.000000374})$ & $74$ & $\num{10.200768097054576}$ & $\num{21.247393198983932} \pi$ \\
    C2 & cylinder  & $500$ & $[\num{0.01137901322}, \num{1.00000004})$ & $107$ & $\num{11.230938502846687}$ & $\num{26.339357944443133} \pi$ \\
    C3 & cylinder  & $1000$ & $[\num{0.00570843568}, \num{1.000000002})$ & $185$    & \num{13.951517522660193} & $\num{47.5274394585}\pi$ \\
    S1 & slipper   & $201$ & $[\num{0.02713788266}, \num{0.9417295932})$ & $52$ & $\num{9.349799958023352}$ & $\num{4.511300712736921} \pi$ \\
    S2 & slipper   & $601$ & $[\num{0.01169831303}, \num{0.9376226229})$ & $90$ & $\num{8.893500705825458}$ & $\num{4.700381723548458} \pi$ \\
    \hline
  \end{tabular}
\end{table}

% Note: the initial cycle "cyclebefore.pdf" does not vary based on choices of timelimit sec and ratio
% Thus, it should not matter which subfolder we choose for the following set of subfigures. For consistency, I choose the one with 0.1ratio across all data.
\begin{figure}[H]
  \begin{subfigure}[]{0.3\linewidth}
    \includegraphics[width=\columnwidth]{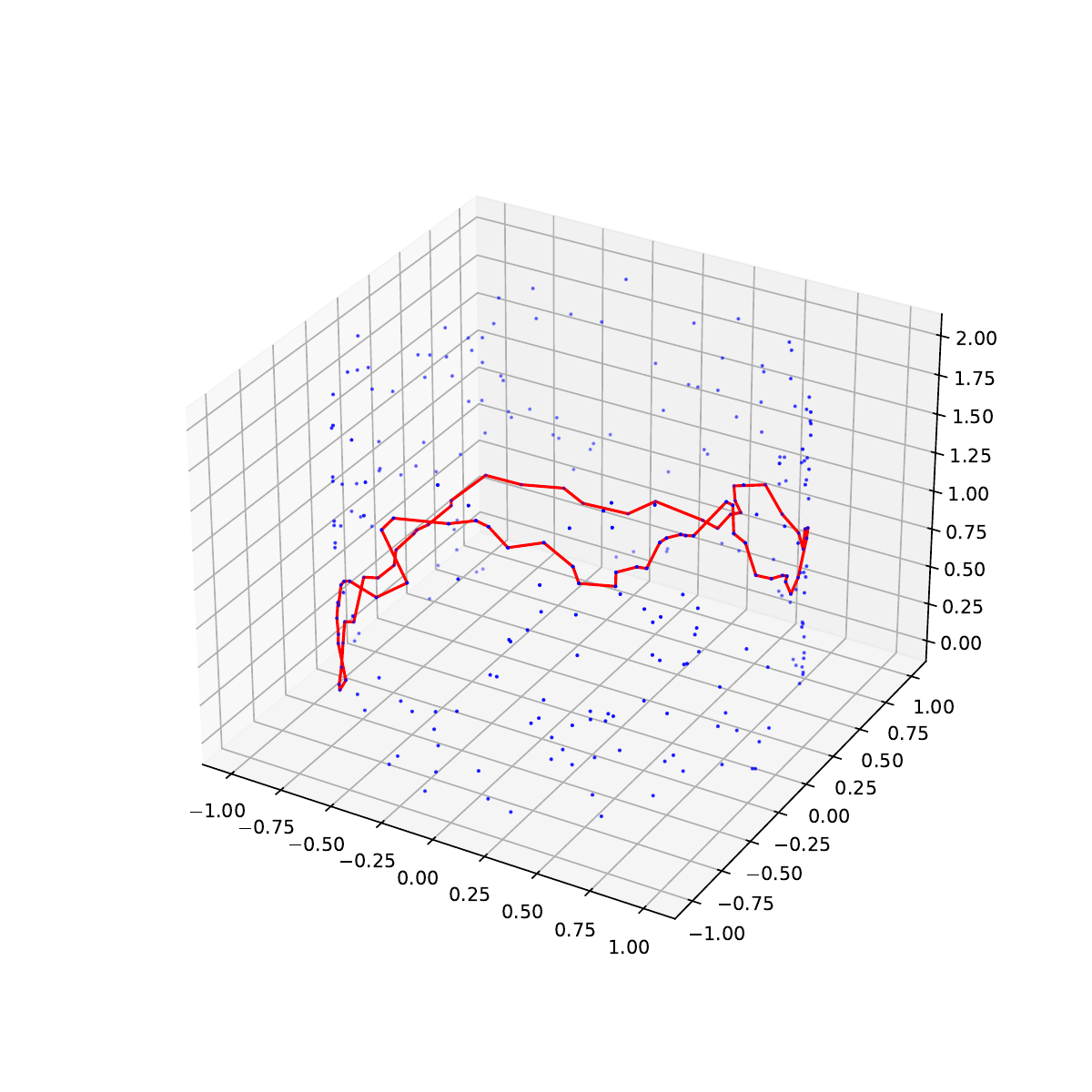}
    \caption{C1, $\kappa(z_0) \approx \num{21.247393198983932} \pi$}
  \end{subfigure}
  \begin{subfigure}[]{0.3\linewidth}
    \includegraphics[width=\columnwidth]{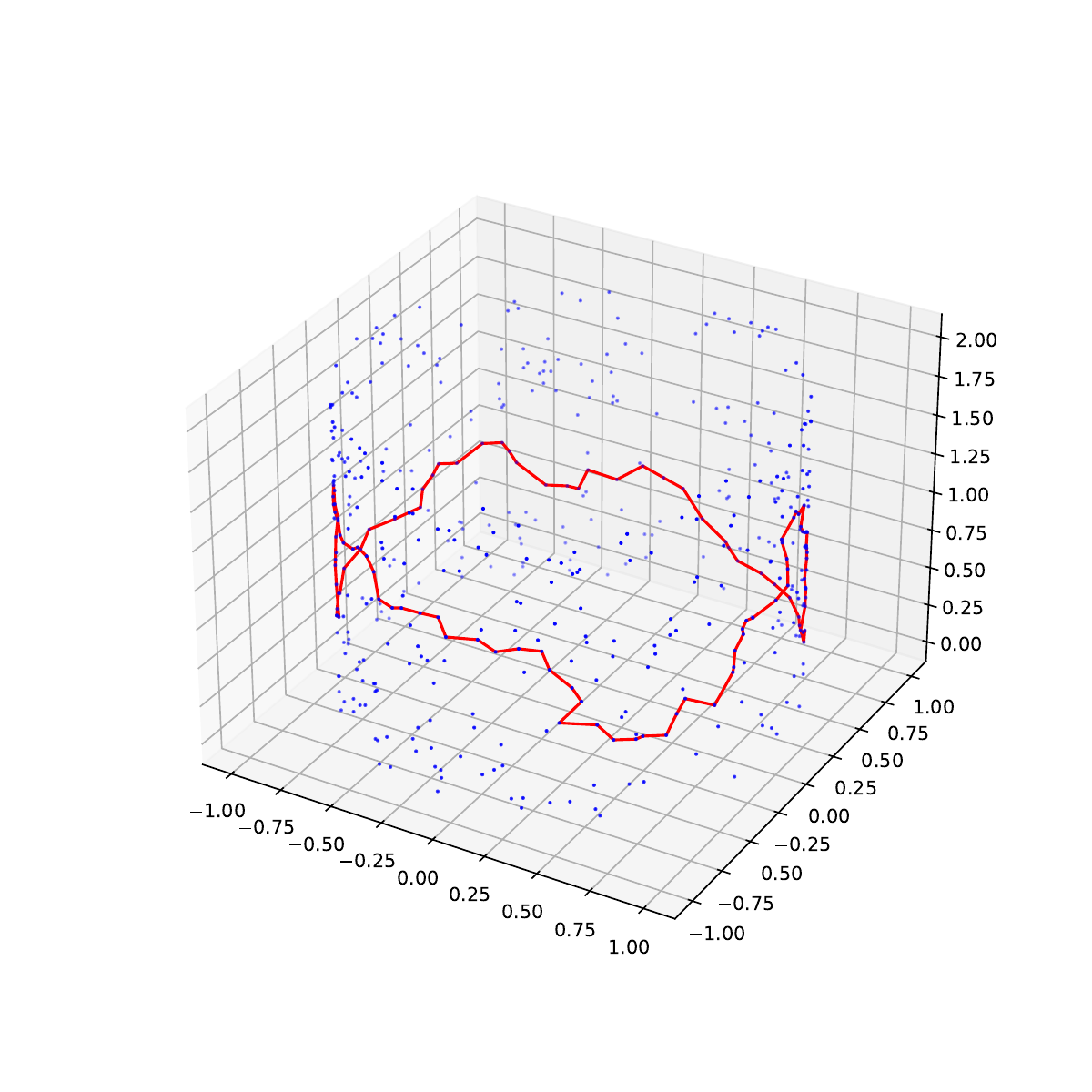}
    \caption{C2, $\kappa(z_0) \approx \num{26.339357944443133} \pi$}
  \end{subfigure}
  \begin{subfigure}[]{0.3\linewidth}
    \includegraphics[width=\columnwidth]{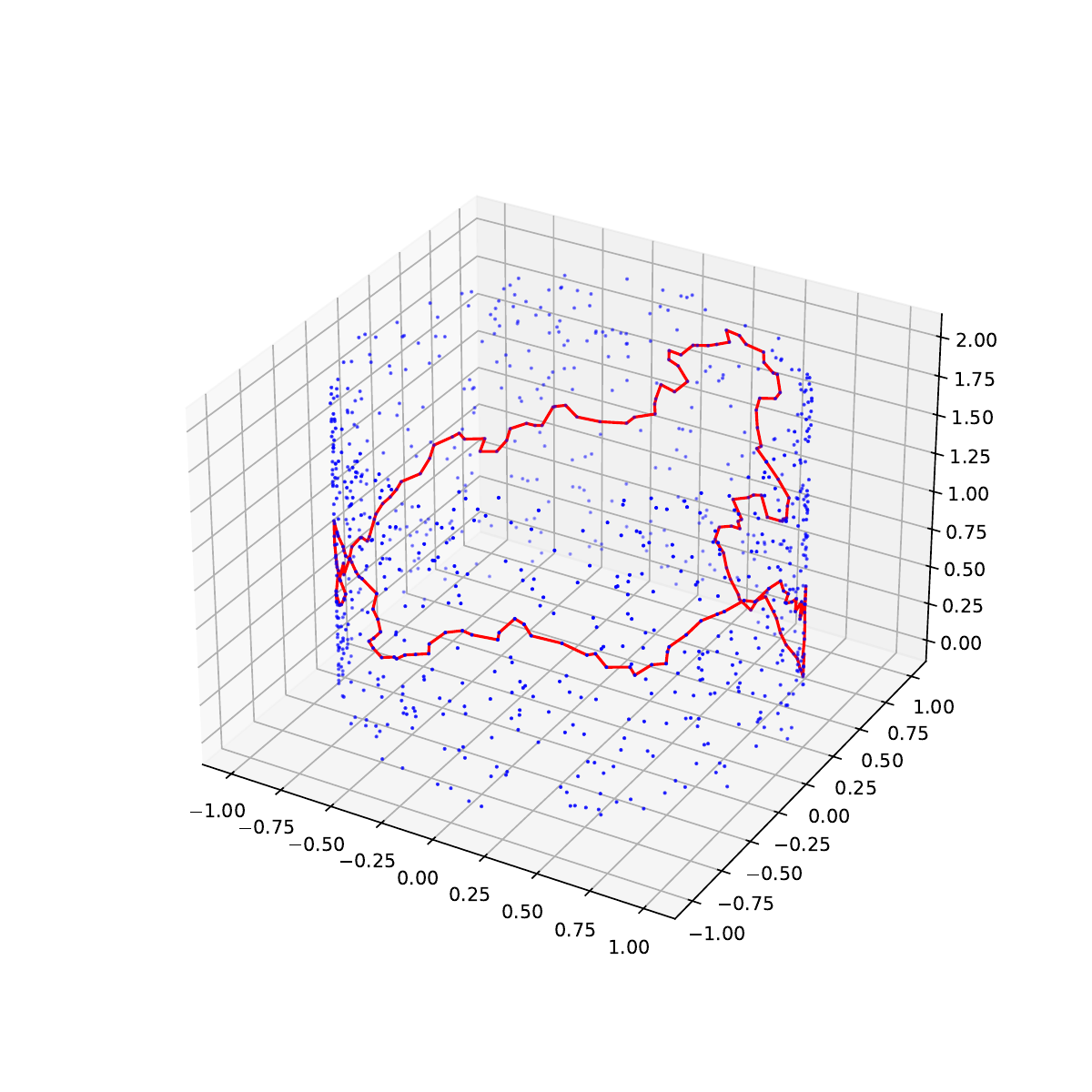}
    \caption{C3, $\kappa(z_0) \approx \num{47.5274394585}\pi$}
  \end{subfigure}
  \\
  \centering
  \begin{subfigure}[]{0.3\linewidth}
    \includegraphics[width=\columnwidth]{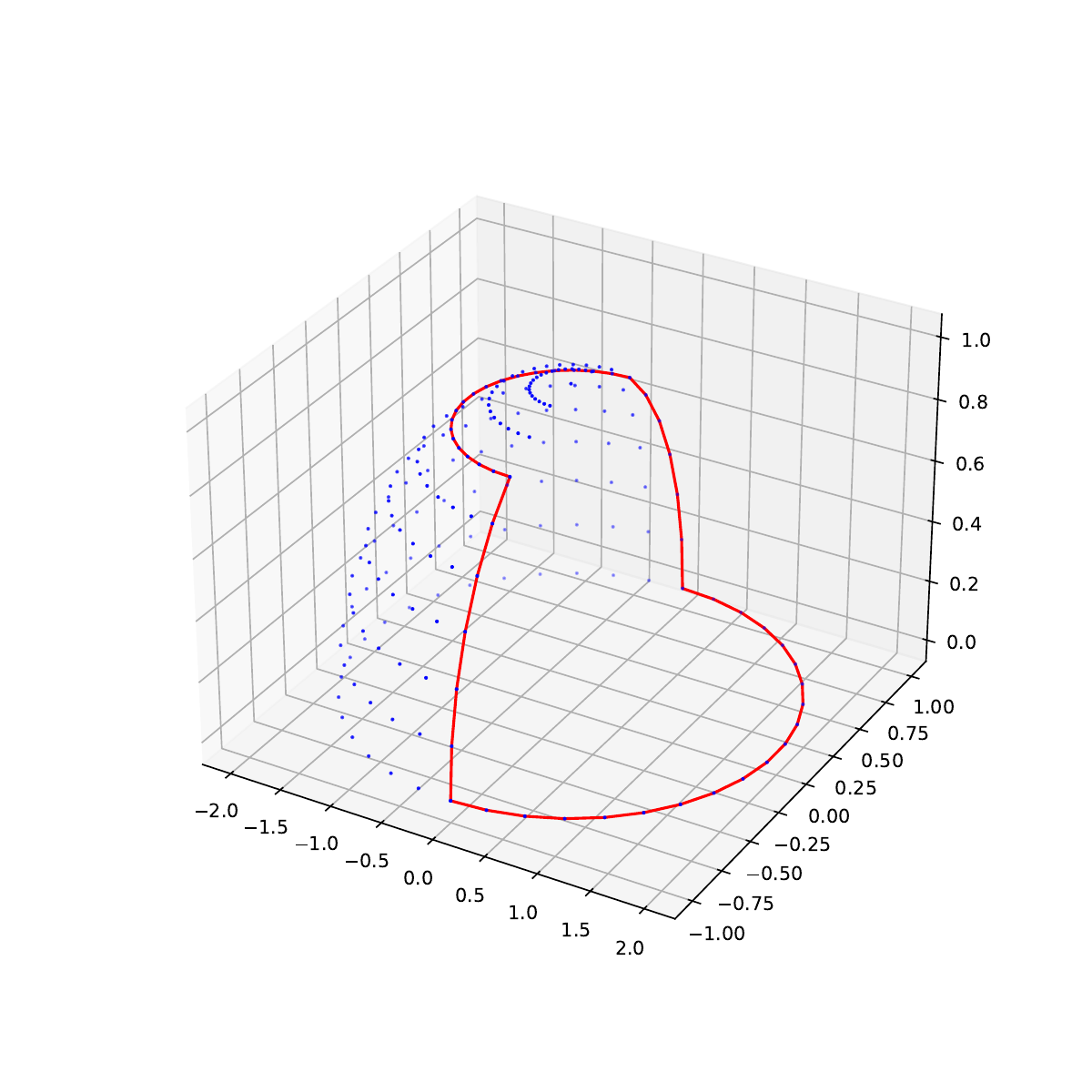}
    \caption{S1, $\kappa(z_0) \approx \num{4.511300712736921} \pi$}
  \end{subfigure}
  \begin{subfigure}[]{0.3\linewidth}
    \includegraphics[width=\columnwidth]{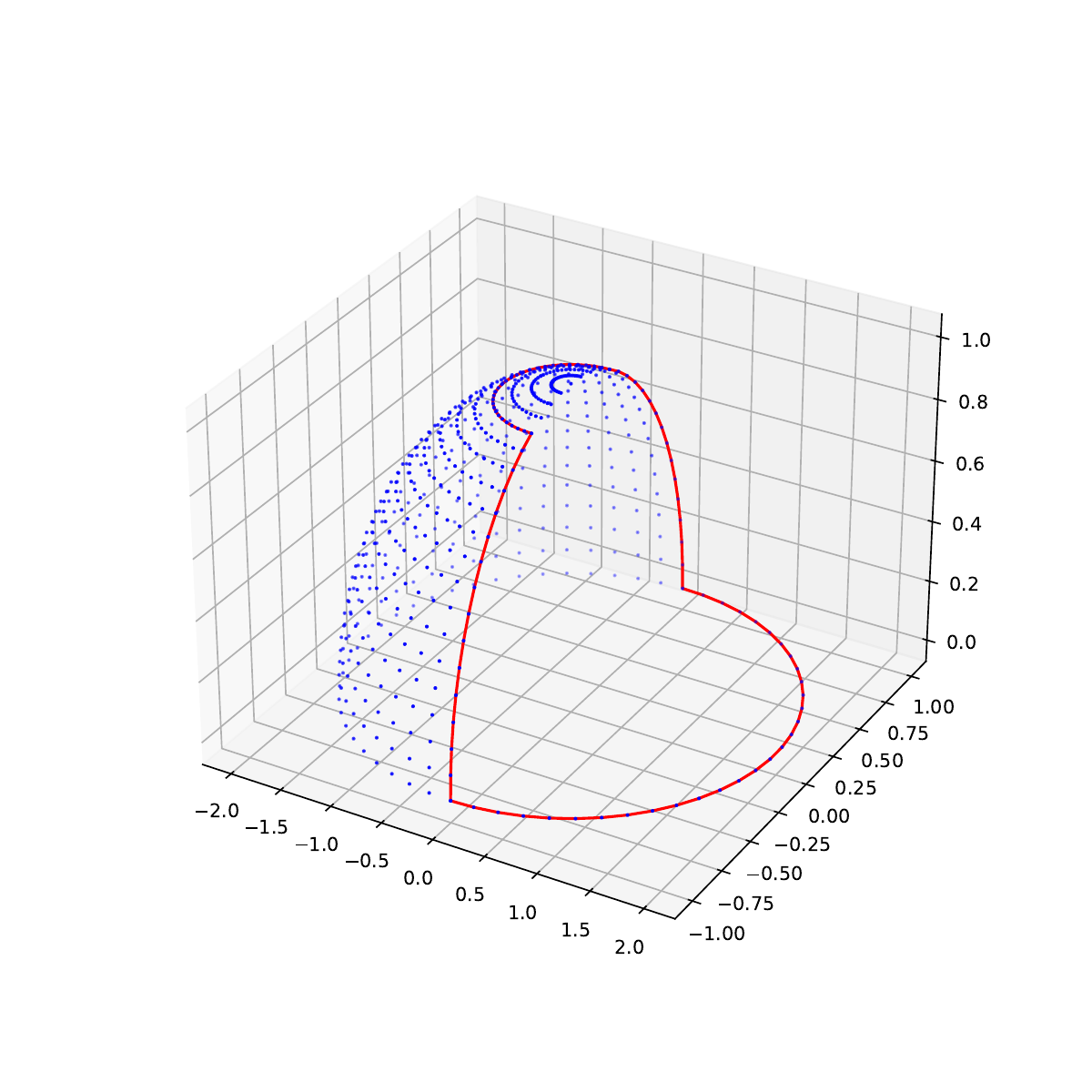}
    \caption{S2, $\kappa(z_0) \approx \num{4.700381723548458} \pi$}
  \end{subfigure}
  \caption{Initial cycles $z_0$ and their total absolute curvatures $\kappa(z_0)$.}
  \label{fig:init}
\end{figure}

For our experiments, we used OptiPersLP~\cite{OptiPersLP}
to compute persistent homology and representative cycles\footnote{OptiPersLP uses CGAL~\cite{cgal} to compute the alpha filtration. The main feature of OptiPersLP is to compute optimal cycle representatives according to the algorithm described in a prior work~\cite{escolar2016ocphlp} which focuses on minimizing the number of simplices in the cycle. For our purposes, we bypass (disable) this optimization since we want to perform the minimization of the total absolute curvature.}. We obtain $K_r$ and $z_0 \in \Zpm(K, \bbR)$ as described above.
For solving the optimization problem in Algorithm~\ref{alg:AngleOptimizeCycle}, we use the
software {IBM ILOG CPLEX Optimization Studio 22.1.1} \cite{Cplex} via
the DOcplex \cite{docplex} library in python.
Experiments were performed on a MacBook Pro (2024 M4 Pro, 48GB RAM).

For the optimization problem, due to the nature of the problem, we do not let the solver run until an optimal solution is found.
Instead, by setting the \verb|timelimit| parameter in DOcplex, we periodically pause\footnote{
  See \url{https://www.ibm.com/docs/en/icos/22.1.1?topic=parameters-optimizer-time-limit-in-seconds} for documentation on the \verb|timelimit| parameter.
  Since the time is measured in terms of wall clock time, variations in measured time may cause the solver to hit the time limit at different phases of the solving process, leading to non-determinism.}
and resume the optimization process.
At the $i$th pause we capture the best solution found so far as $z_i$\footnote{This $z_i$ is the best solution found after the $i$th pause, and is not related to the $z_i$ solution in the iterative heuristic discussed at the end of the previous Section~\ref{sec:formulation}.}. For each point cloud data, (by experimentation) we appropriately choose a value for \verb|timelimit|.
The details of these solutions are displayed Tables~\ref{table:cylinder},~\ref{table:cylinder_more},~\ref{table:slipper_S1},~\ref{table:slipper_S2}.

Theorem~\ref{thm:kappaineq} gives the general lower bound of $2\pi$ for the total absolute curvature $\kappa(z)$ of simple cycles $z$.
However, whether or not the current $K_r$ under consideration supports such a simple cycle is not guaranteed in general,
and thus for particular $K_r$ the global minimum value for $\kappa(z)$ is not necessarily $2\pi$.
It does, however, serve as a useful reference point for understand the computed values of $\kappa$.
For example, in Figure~\ref{fig:init},
we see that larger values of $\kappa(z_0)$ correspond to more visually ``jagged'' cycles.

\subsection{Results on the ``cylinder'' data}

\FloatBarrier

In Table~\ref{table:cylinder}, we display the results for running the optimizer on the cylinder data C1 and C2,
where the three subrows within each correspond to the choices of $t = 0.1$, $0.2$, and $0.4$
for defining $r := b + t(d-b)$.
In general, increasing $r$ leads to more simplices in the simplicial complex $K_r$, leading to a larger optimization problem.
We also tabulate the corresponding number of $i$-simplices $n_i$ ($i\in\{1,2\}$)
and recall that our binary quadratic programming problem \eqref{opt:StandardBQP} has
$2(n_1+n_2)$ binary variables and $n_1$ constraints.
For the results Table~\ref{table:cylinder},
the solver is repeatedly paused and restarted with a time limit of $5$ seconds before pausing.
At the $i$th pause we capture the best solution found so far as $z_i$.
We defer the results for C3 to Table~\ref{table:cylinder_more},
as it required longer solver time limits to get reasonable-looking solutions (see Figure~\ref{fig:C3_more}).

In addition to the total absolute curvature $\kappa(z_i)$, we also list its number $|z_i|$ of $1$-simplices
and its length $\ell(z_i)$ (sum of the lengths of its $1$-simplices), for reference.
Tautologically (by definition), running the solver can lead to better solutions with $\kappa(z_1) \geq \kappa(z_2) \geq \kappa(z_3) \geq \hdots$.
We see this in general, going from left to right in Table~\ref{table:cylinder}.
However, depending on the size of the data, we see that for the larger point clouds we
see minimal improvements in $\kappa$ given the relatively short time limit.

We illustrate some of original representative cycles and computed solutions
in Figure~\ref{fig:C1r1}~and~\ref{fig:C2r3}.
For these runs, we see that after $z_1$ we see minimal improvements in $\kappa$ over the next $10$ seconds.
In Table~\ref{table:cylinder_more} and Figure~\ref{fig:C2_more} we display the result for a longer time limit.
% intuitively it should be possible to achieve smaller values of $\kappa$.

We also note that in Table~\ref{table:cylinder}, going from left to right
with decreasing $\kappa$, the values of $|z_i|$ and $\ell(z_i)$ are also decreasing,
even though they are not being minimized for directly in Algorithim~\ref{alg:AngleOptimizeCycle}.
This makes intuitive sense; for this cylinder data,
among cycles that wrap around the hole, any cycle that goes around the cylinder parallel to the base
would be planar and at the same time have smallest length and smallest number of $1$-simplices.
In such a situation, solving a standard optimal homologous cycle problem
minimizing length
would be more efficient since it can be cast as a comparatively smaller linear optimization problem \cite{dey2010optimal,escolar2014computing}.
We shall see via the next example that
minimizing total absolute curvature is not the same as minimizing length,
in general.

% *************** START CYLINDER FLOATS ***************

\begin{table}[h]
  \caption{Results for the ``cylinder'' data C1 and C2.
    The solver is repeatedly paused and restarted with a time limit of $5$ seconds before pausing.
    At the $i$th pause we capture the best solution found so far as $z_i$.}
  \label{table:cylinder}
    \begin{tabular}{|c|c|c|c| |c|c|c| |c|c|c| |c|c|c|}
    \hline
  & & \multicolumn{2}{c||}{size of $K_r$}  & \multicolumn{3}{c||}{$i=1$ ($5$ sec)} & \multicolumn{3}{c||}{$i=2$ ($10$ sec)} & \multicolumn{3}{c|}{$i=3$ ($15$ sec)} \\
  \hline
  % & $r$
  & $t$
 & $n_1$ & $n_2$ & $|z_1|$   & $\ell(z_1)$ & $\kappa(z_1)$ & $|z_2|$   & $\ell(z_2)$ & $\kappa(z_2)$ & $|z_3|$   & $\ell(z_3)$ & $\kappa(z_3)$ \\ \hline\hline
  \multirow{3}{*}{C1}
  & 0.1
  & $945$ & $735$
 & $29$ & $\num{7.02055957260189}$ & $\num{5.661044433980622} \pi$
 & $27$ & $\num{6.636087858464085}$ & $\num{4.235763078989115} \pi$
 & $26$ & $\num{6.452537302129011}$ & $\num{3.3733039691048523} \pi$ \\
  % & \num{0.212678047912}
  & 0.2
  & $1005$ & $847$
 & $32$ & $\num{8.474476607642929}$ & $\num{7.193848201355509} \pi$
 & $32$ & $\num{8.474476607642929}$ & $\num{7.193848201355509} \pi$
 & $28$ & $\num{7.724042884772975}$ & $\num{5.511341021323253} \pi$ \\
  % & \num{0.40950855543400005}
  & 0.4
  & $1098$ & $1025$
 & $18$ & $\num{6.465860972031913}$ & $\num{3.2579998761894955} \pi$
 & $18$ & $\num{6.465860972031913}$ & $\num{3.2579998761894955} \pi$
 & $17$ & $\num{6.399339600485485}$ & $\num{2.853520508460676} \pi$ \\ \hline
  \multirow{3}{*}{C2}
  % & \num{0.10848945834160001}
  & 0.1
  & $1588$ & $1229$
 & $38$ & $\num{7.241315195289824}$ & $\num{8.56997168833635} \pi$
 & $37$ & $\num{7.051743849187168}$ & $\num{7.857805955441441} \pi$
 & $34$ & $\num{6.797085144771437}$ & $\num{5.762065998690735} \pi$
  \\
  % & \num{0.20754618585920004}
  & 0.2
  & $1677$ & $1402$
 & $47$ & $\num{9.903508341092198}$ & $\num{12.720879866931755} \pi$
 & $47$ & $\num{9.903508341092198}$ & $\num{12.720879866931755} \pi$
 & $43$ & $\num{9.112052614008094}$ & $\num{10.11125529988067} \pi$
  \\
  % & \num{0.40565964089440004}
  & 0.4
  & $1826$ & $1690$
 & $101$ & $\num{10.907318296764346}$ & $\num{24.286919187476517} \pi$
 & $98$ & $\num{10.819278524062344}$ & $\num{23.258192001891626} \pi$
 & $69$ & $\num{9.89648175404852}$ & $\num{15.131487533311264} \pi$ \\ \hline
  \end{tabular}
\end{table}

\begin{figure}[H]
  \begin{subfigure}[]{0.24\linewidth}
    \includegraphics[width=\columnwidth]{data/C1_cylinder_2.0_1.0_300/0005sec_0.1ratio/cylinder_2.0_1.0_300_0th_cyclebefore.pdf}
    \caption{$z_0$, $\num{21.247393198983932} \pi$}
  \end{subfigure}
    \begin{subfigure}[]{0.24\linewidth}
      \includegraphics[width=\columnwidth]{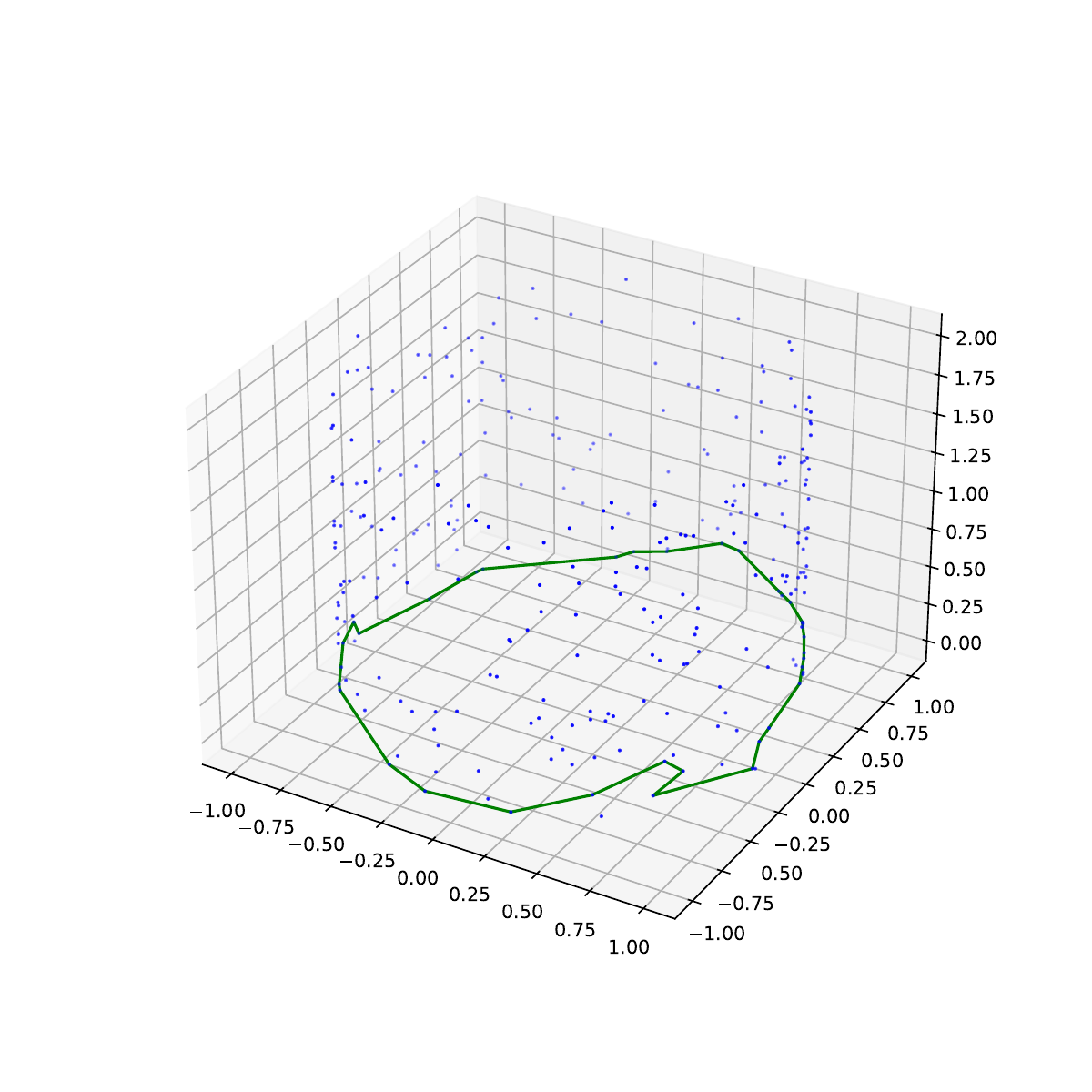}
      \caption{$z_1$, $\num{5.661044433980622} \pi$}
  \end{subfigure}
    \begin{subfigure}[]{0.24\linewidth}
      \includegraphics[width=\columnwidth]{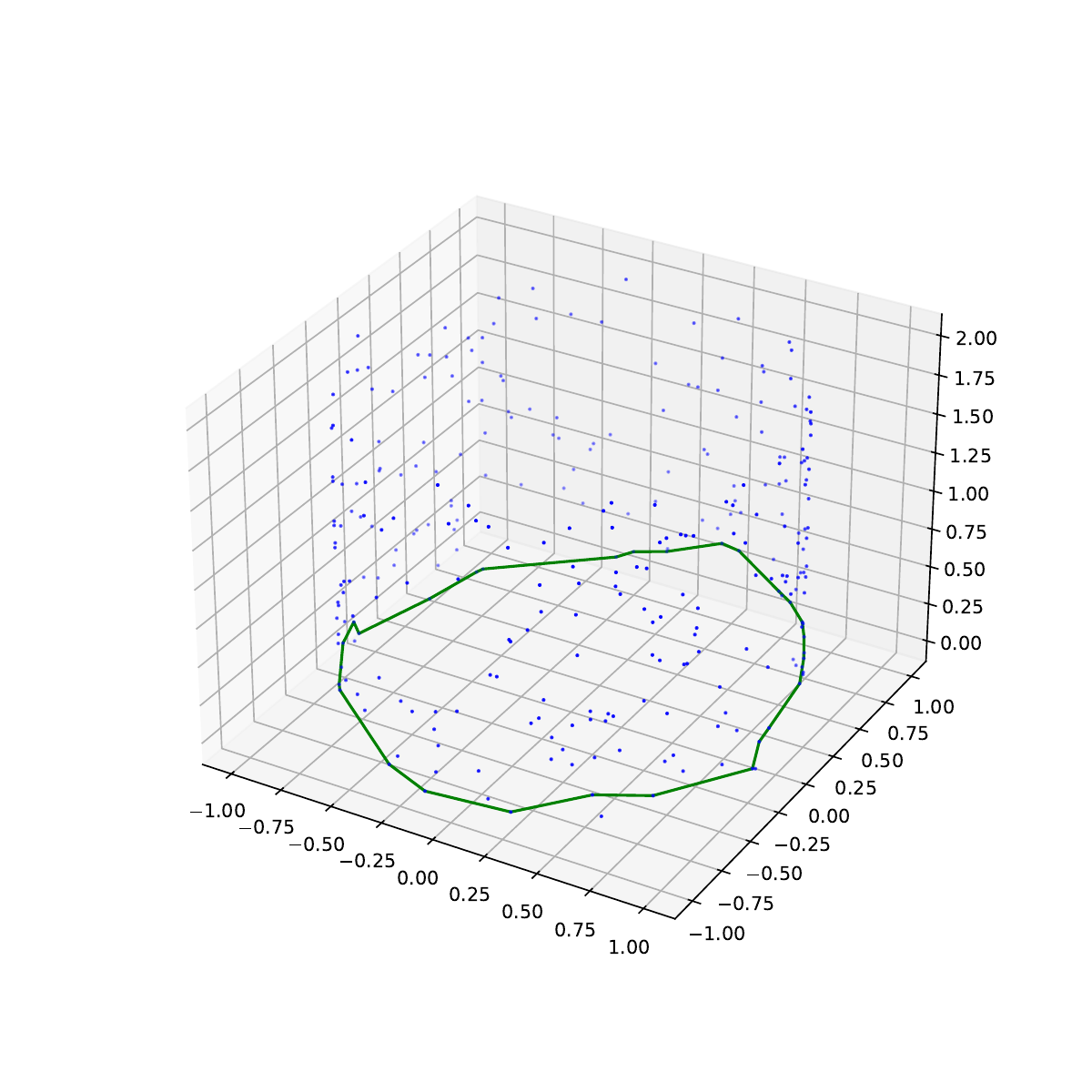}
      \caption{$z_2$, $\num{4.235763078989115} \pi$}
  \end{subfigure}
    \begin{subfigure}[]{0.24\linewidth}
      \includegraphics[width=\columnwidth]{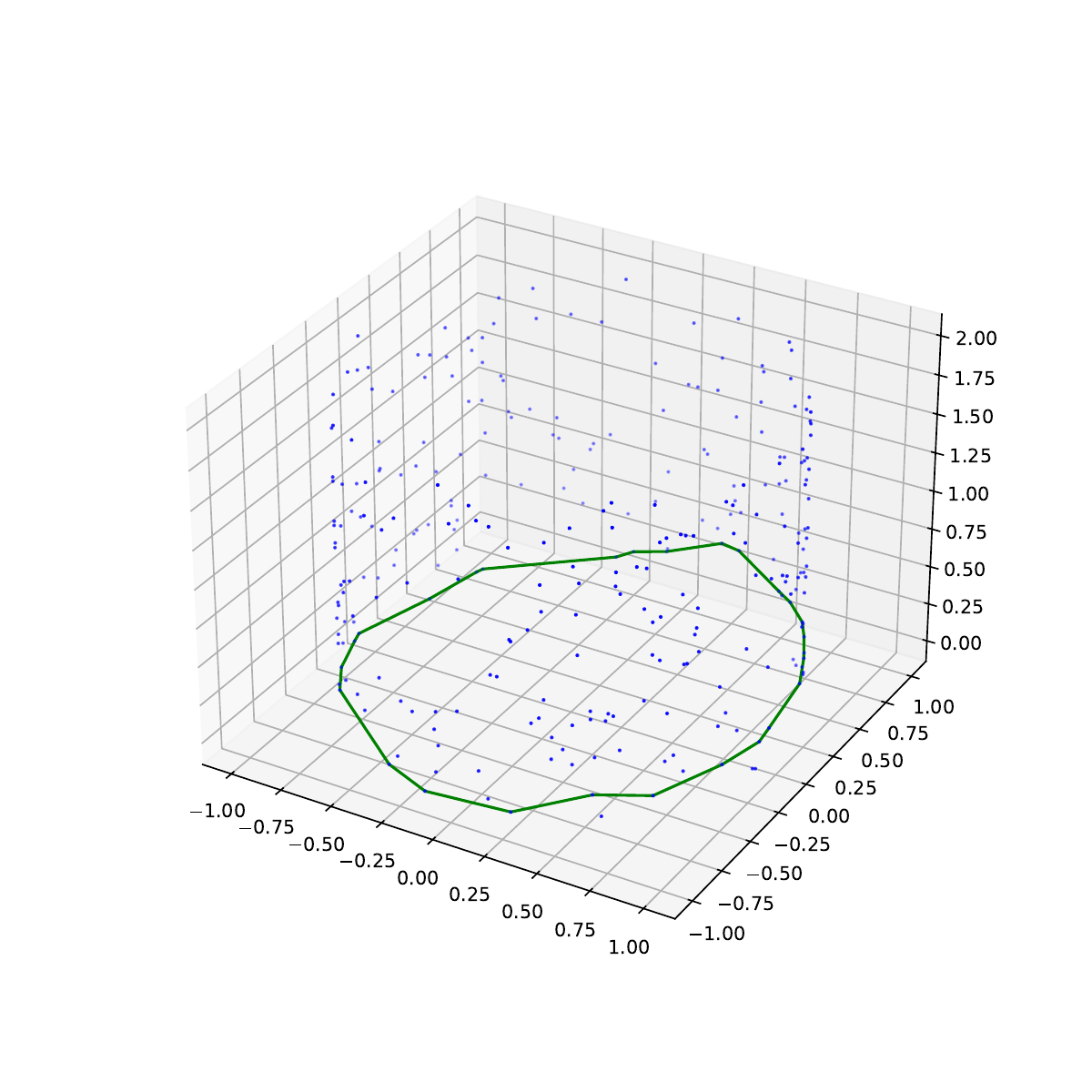}
      \caption{$z_3$, $\num{3.3733039691048523} \pi$}
  \end{subfigure}
  \caption{Initial cycle and computed solutions $z_i$,
     together with values of $\kappa(z_i)$,
    for data C1 with $t=0.1$ (first row of Table~\ref{table:cylinder}).
  }
  \label{fig:C1r1}
\end{figure}

\begin{figure}[H]
  \begin{subfigure}[]{0.24\linewidth}
    \includegraphics[width=\columnwidth]{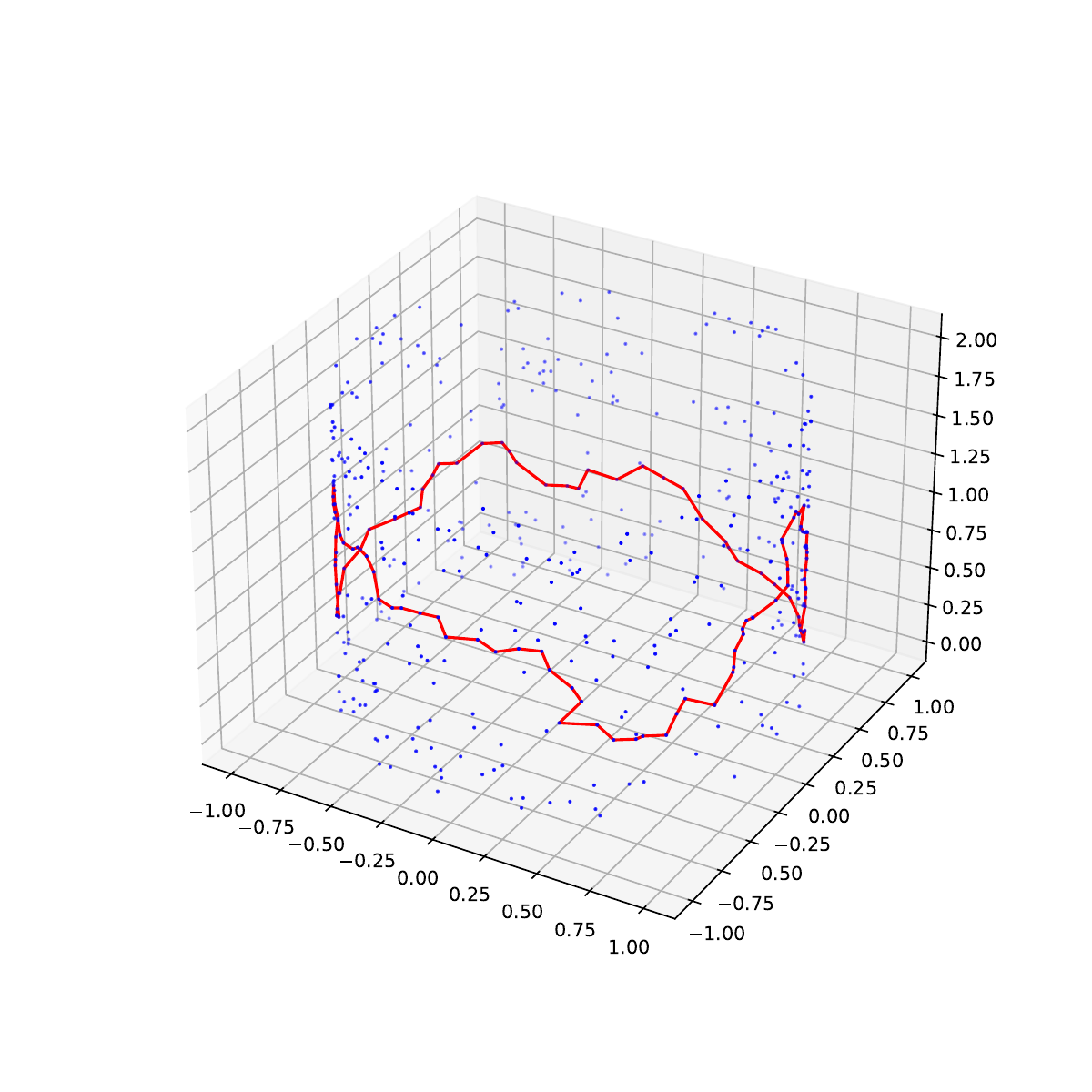}
    \caption{$z_0$, $\num{26.339357944443133} \pi$}
  \end{subfigure}
    \begin{subfigure}[]{0.24\linewidth}
      \includegraphics[width=\columnwidth]{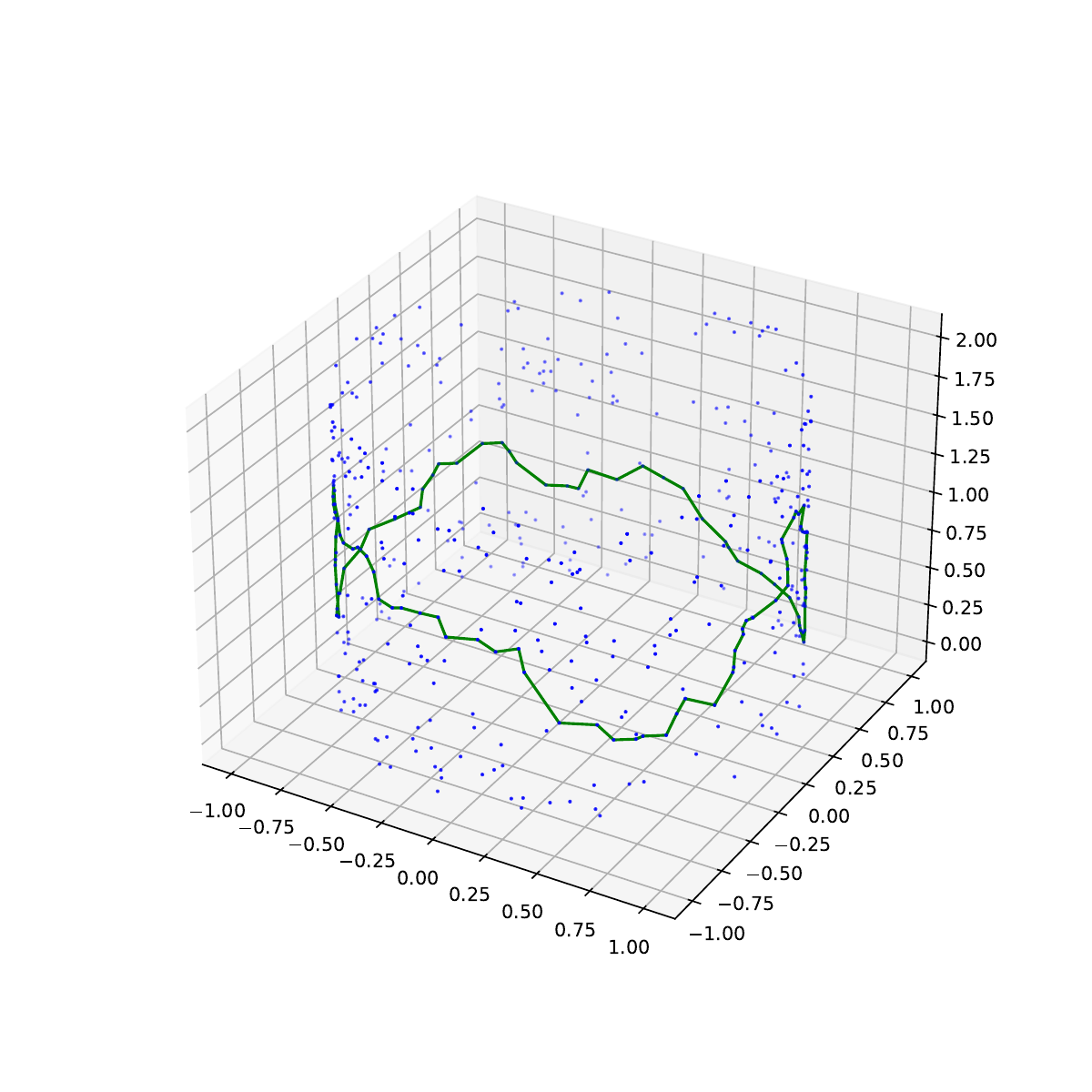}
      \caption{$z_1$, $\num{24.286919187476517} \pi$}
  \end{subfigure}
    \begin{subfigure}[]{0.24\linewidth}
      \includegraphics[width=\columnwidth]{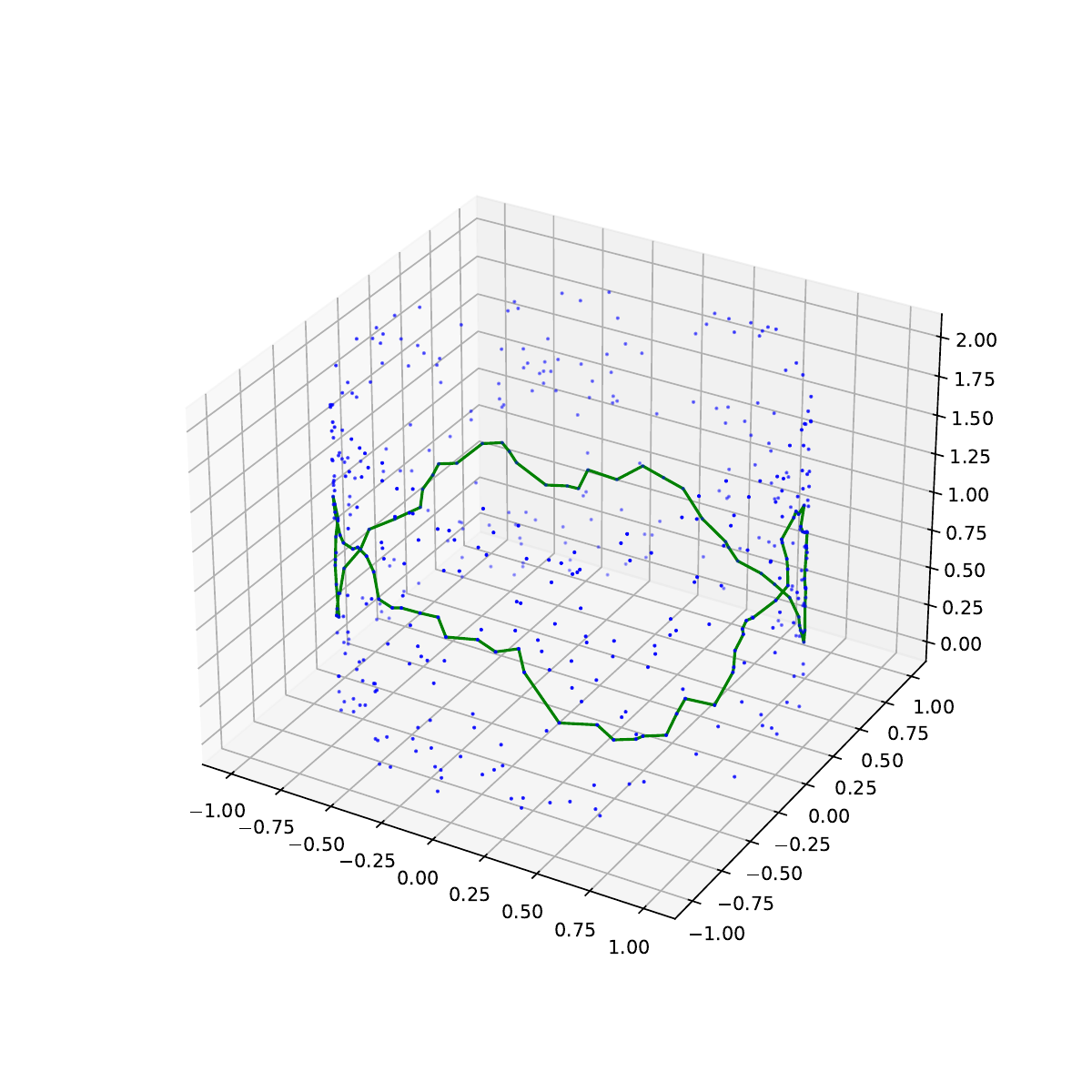}
      \caption{$z_2$, $\num{23.258192001891626} \pi$}
  \end{subfigure}
  \begin{subfigure}[]{0.24\linewidth}
    \includegraphics[width=\columnwidth]{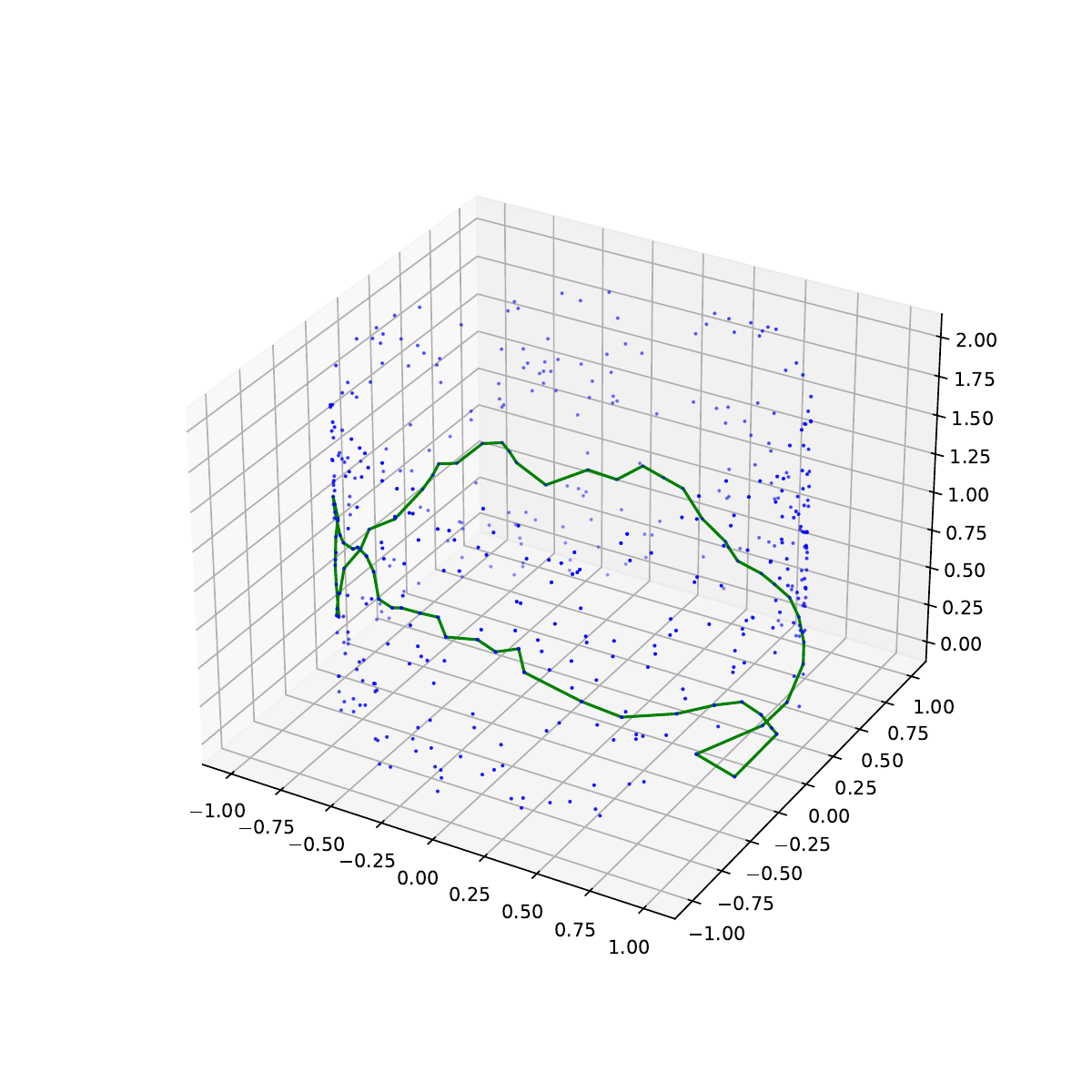}
      \caption{$z_3$, $\num{15.131487533311264} \pi$}
  \end{subfigure}
  \caption{Initial cycle and computed solutions $z_i$,
     together with values of $\kappa(z_i)$,
    for data C2 with $t=0.4$ (last row of Table~\ref{table:cylinder}).}
  \label{fig:C2r3}
\end{figure}

\FloatBarrier
% *************** END CYLINDER FLOATS ***************

To check the variation in the results, we repeat the experiment multiple times for data C2 with $t=0.2$.
In particular, we collect the final $\kappa(z_3)$ values after $15$ seconds of solve time
for {$100$} experiments. The summary statistics are as follows.
\begin{center}
  \begin{tabular}{|r|l|}
    \hline
    Worst & $\num{12.511220037839152}\pi$ \\
    Best & $\num{3.611048913392255}\pi$ \\
    Average & $\num{8.834929182210724}\pi$ \\
    Standard ddeviation & $\num{2.0590935470371594}\pi$ \\ \hline
  \end{tabular}
\end{center}
We see that there is a range of variation in the final obtained total absolute curvature (see Figure~\ref{appendix:fig:C2r2_hist_kappa} in the Appendix for a histogram).

We also perform the experiment with a longer time limit of $30$ minutes for selected parameters,
and with the larger point cloud C3 with $1000$ points.
We tabulate the results in Table~\ref{table:cylinder_more} and illustrate the solutions obtained in Figures~\ref{fig:C2_more}~and~\ref{fig:C3_more}.

\begin{table}[h]
  \caption{Additional results for the ``cylinder'' data C2 and C3.
    The solver is repeatedly paused and restarted with a time limit of 30 minutes before pausing.
    At the $i$th pause we capture the best solution found so far as $z_i$.}
  \label{table:cylinder_more}
    \begin{tabular}{|c|c|c|c| |c|c|c| |c|c|c|}
    \hline
  & & \multicolumn{2}{c||}{size of $K_r$}  & \multicolumn{3}{c||}{$i=1$ ($30$ min)} & \multicolumn{3}{c||}{$i=2$ ($60$ min)}  \\
  \hline
  % & $r$
  & $t$
 & $n_1$ & $n_2$ & $|z_1|$   & $\ell(z_1)$ & $\kappa(z_1)$ & $|z_2|$   & $\ell(z_2)$ & $\kappa(z_2)$  \\ \hline\hline
  \multirow{1}{*}{C2}
  & 0.2
  & 1677 & 1402
  & 27 & \num{6.4031038416678365} & $\num{3.302806492940285}\pi$
  & 26 & \num{6.371582468091593} & $\num{2.91173138092224}\pi$
 \\
  \hline
  \multirow{1}{*}{C3}
  & 0.2
  & 3391 & 2837
     & 51 & \num{7.208662460607396} & $\num{5.874203144730933}\pi$
     & 46 & \num{7.053745626939992} & $\num{4.946917670681228}\pi$
  \\
  \hline
  \end{tabular}
\end{table}

\begin{figure}[H]
  \centering
  \begin{subfigure}[]{0.24\linewidth}
    \includegraphics[width=\columnwidth]{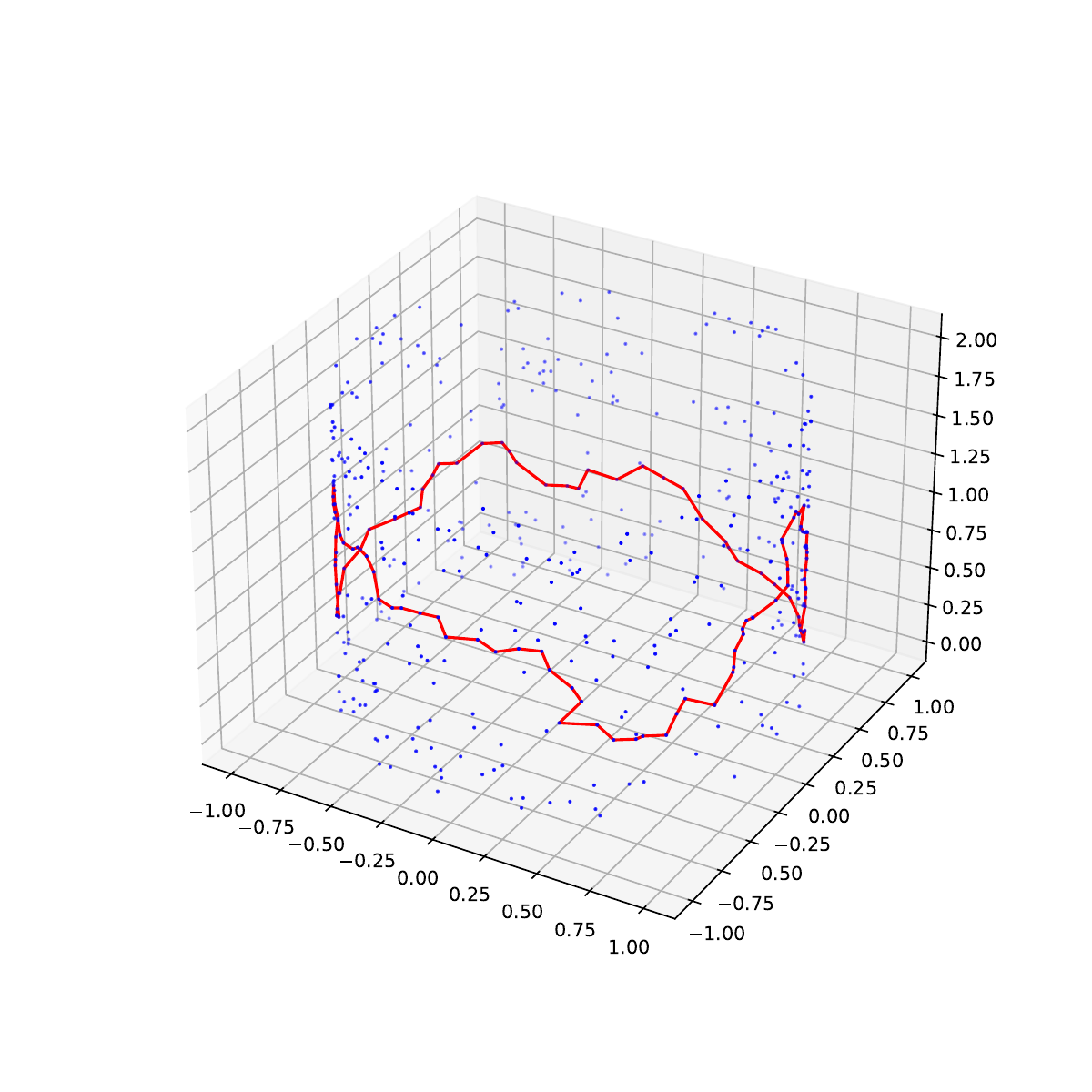}
    \caption{$z_0$\newline\hspace*{1.8em}$\kappa \approx \num{26.339357944443133} \pi$}
  \end{subfigure}
  \begin{subfigure}[]{0.24\linewidth}
    \includegraphics[width=\columnwidth]{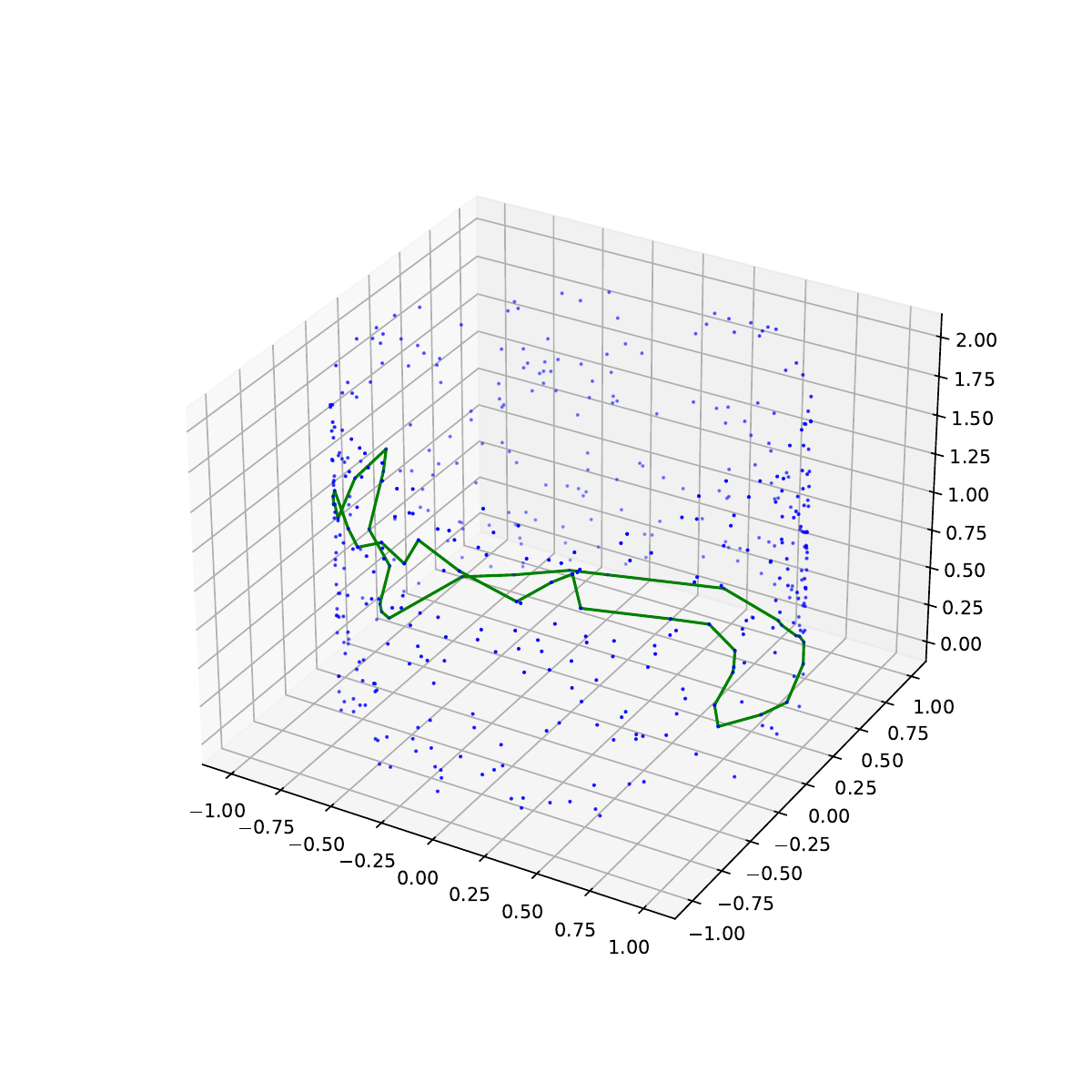}
    \caption{After $15$ sec solve.\newline\hspace*{1.8em}$\kappa \approx \num{10.11125529988067} \pi$.}
  \end{subfigure}
  \begin{subfigure}[]{0.24\linewidth}
    \includegraphics[width=\columnwidth]{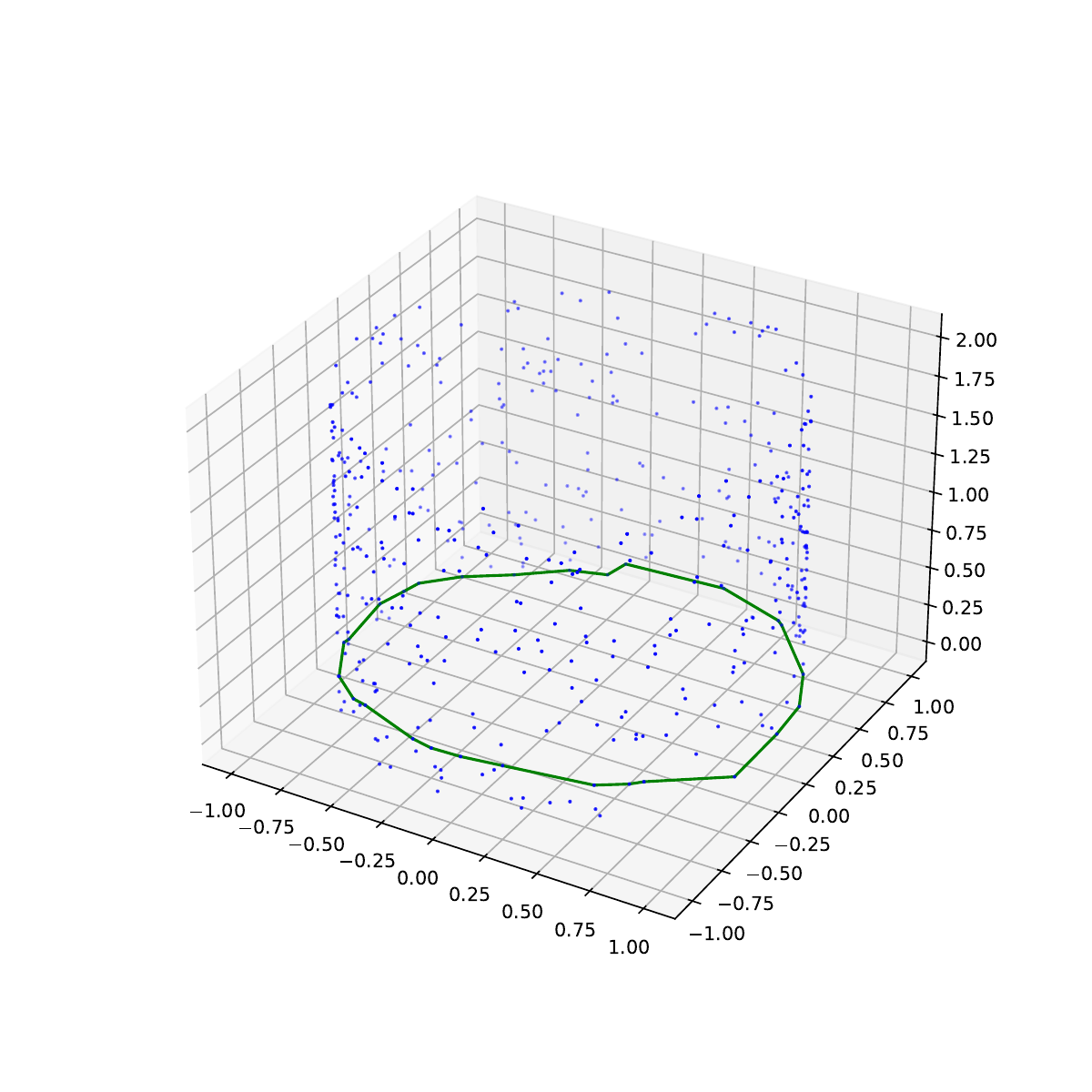}
    \caption{After $30$ min solve.\newline\hspace*{1.8em}$\kappa \approx \num{3.302806492940285}\pi$}
  \end{subfigure}
  \begin{subfigure}[]{0.24\linewidth}
    \includegraphics[width=\columnwidth]{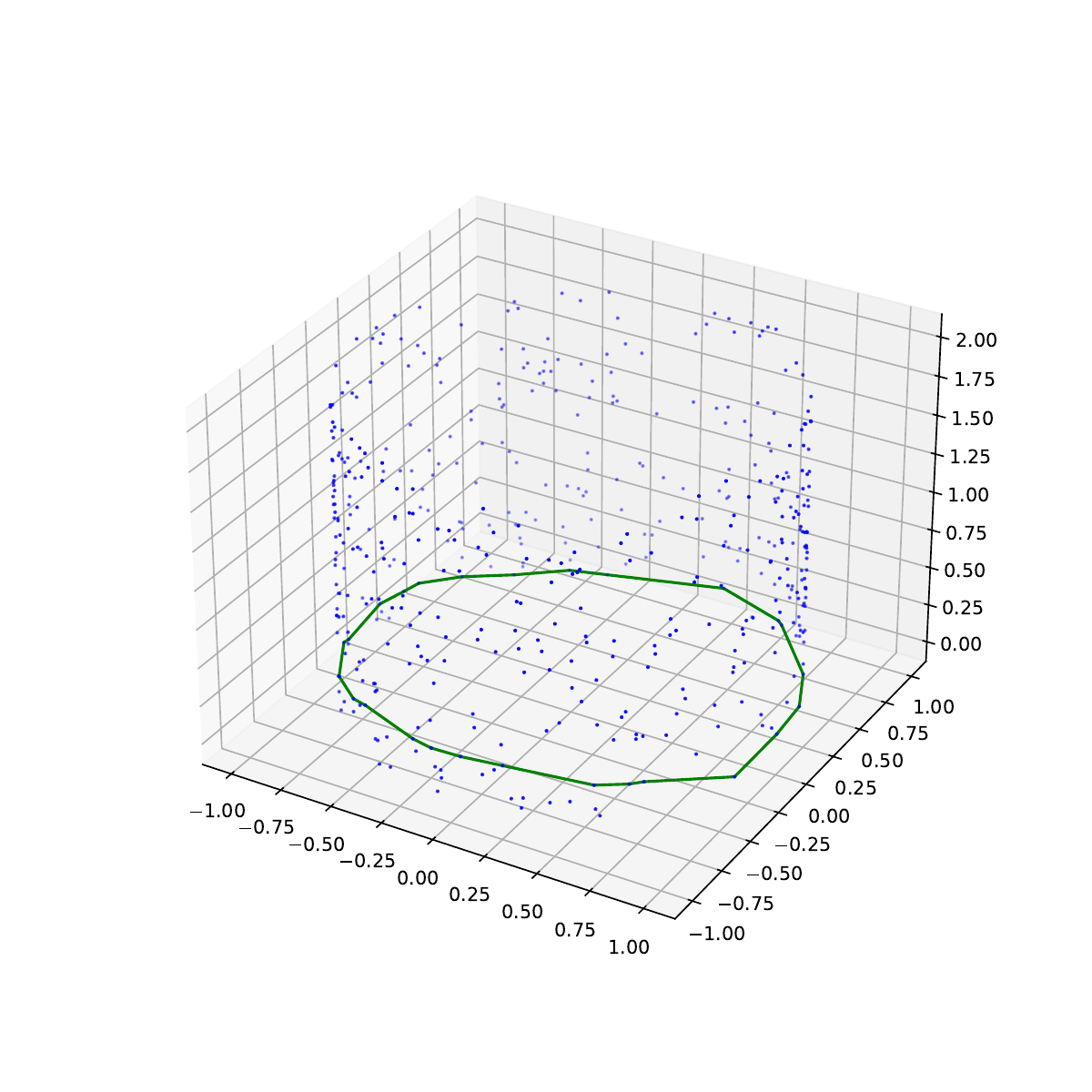}
    \caption{After $60$ min solve.\newline\hspace*{1.8em}$\kappa \approx \num{2.91173138092224}\pi$}
  \end{subfigure}
  \caption{Initial cycle and computed solutions for data C2 with $t=0.2$.}
  \label{fig:C2_more}
\end{figure}

\begin{figure}[H]
  \begin{subfigure}[]{0.24\linewidth}
    \centering
    \includegraphics[width=\columnwidth]{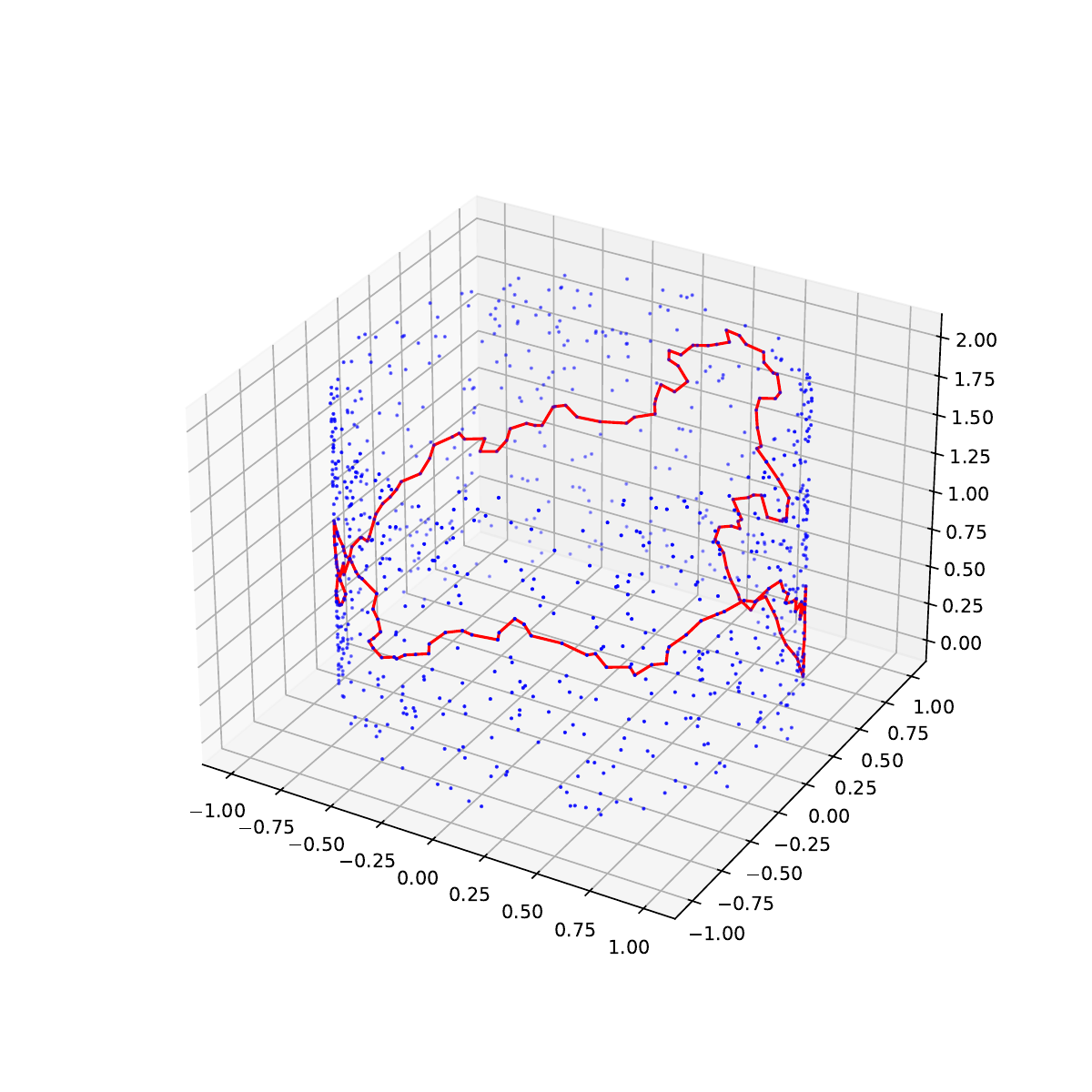}
    \caption{$z_0$\newline\hspace*{1.8em}$\kappa \approx \num{47.5274394585}\pi$}
  \end{subfigure}
  \begin{subfigure}[]{0.24\linewidth}
    \includegraphics[width=\columnwidth]{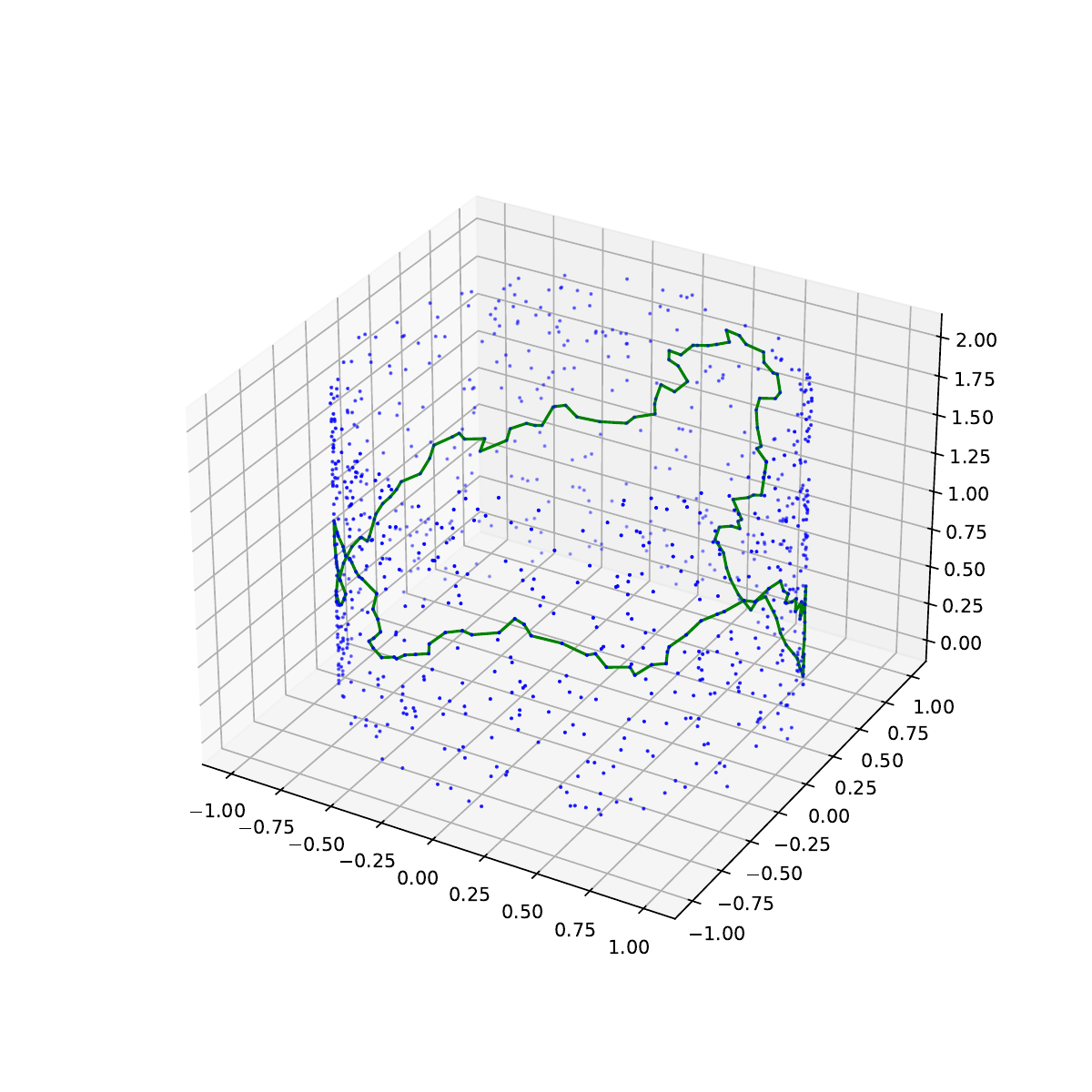}
    \caption{After $15$ sec solve.\newline\hspace*{1.8em}$\kappa \approx \num{43.66159394018956}\pi$}
  \end{subfigure}
  \begin{subfigure}[]{0.24\linewidth}
    \includegraphics[width=\columnwidth]{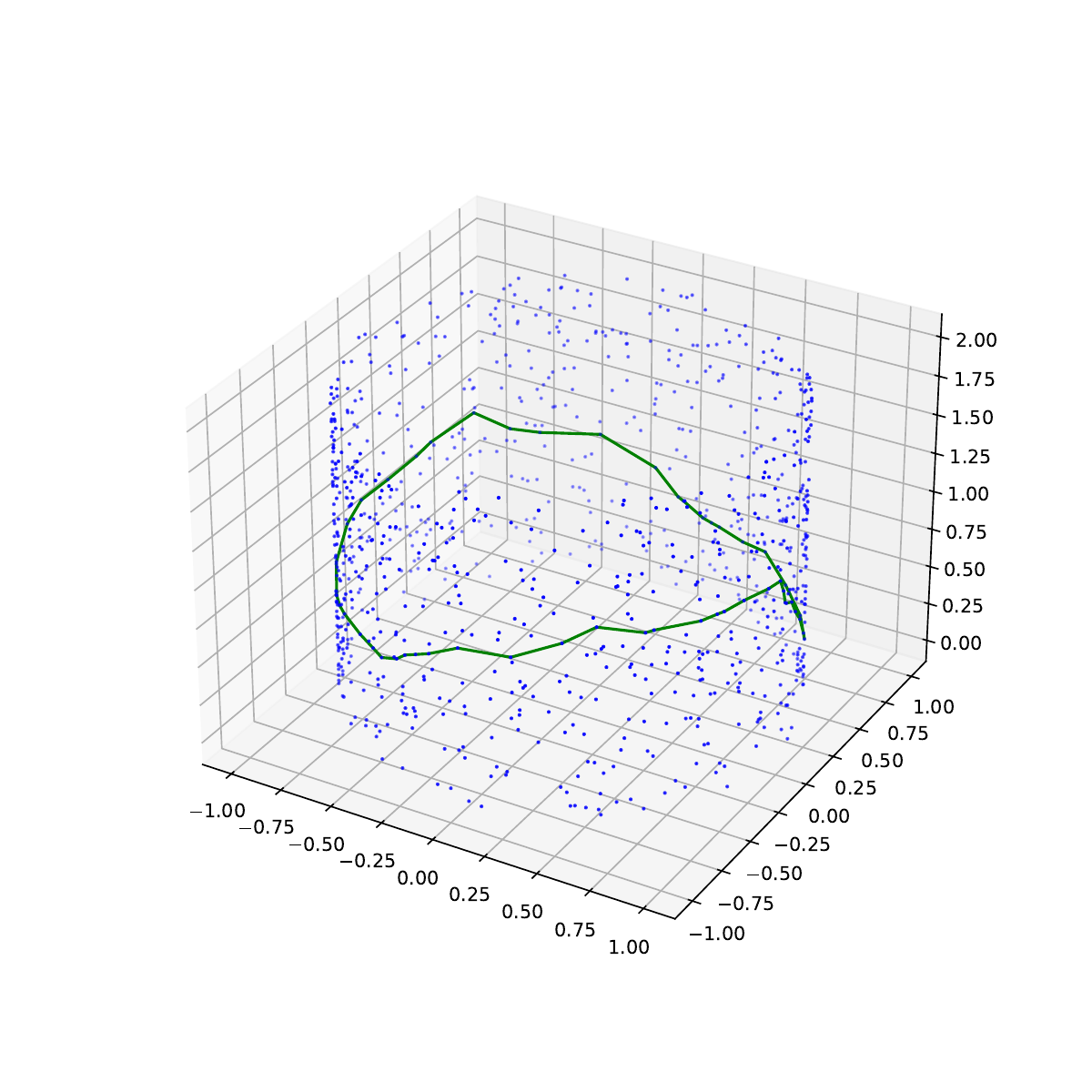}
    \caption{After $30$ min solve.\newline\hspace*{1.8em}$\kappa \approx \num{5.874203144730933}\pi$}
  \end{subfigure}
  \begin{subfigure}[]{0.24\linewidth}
    \includegraphics[width=\columnwidth]{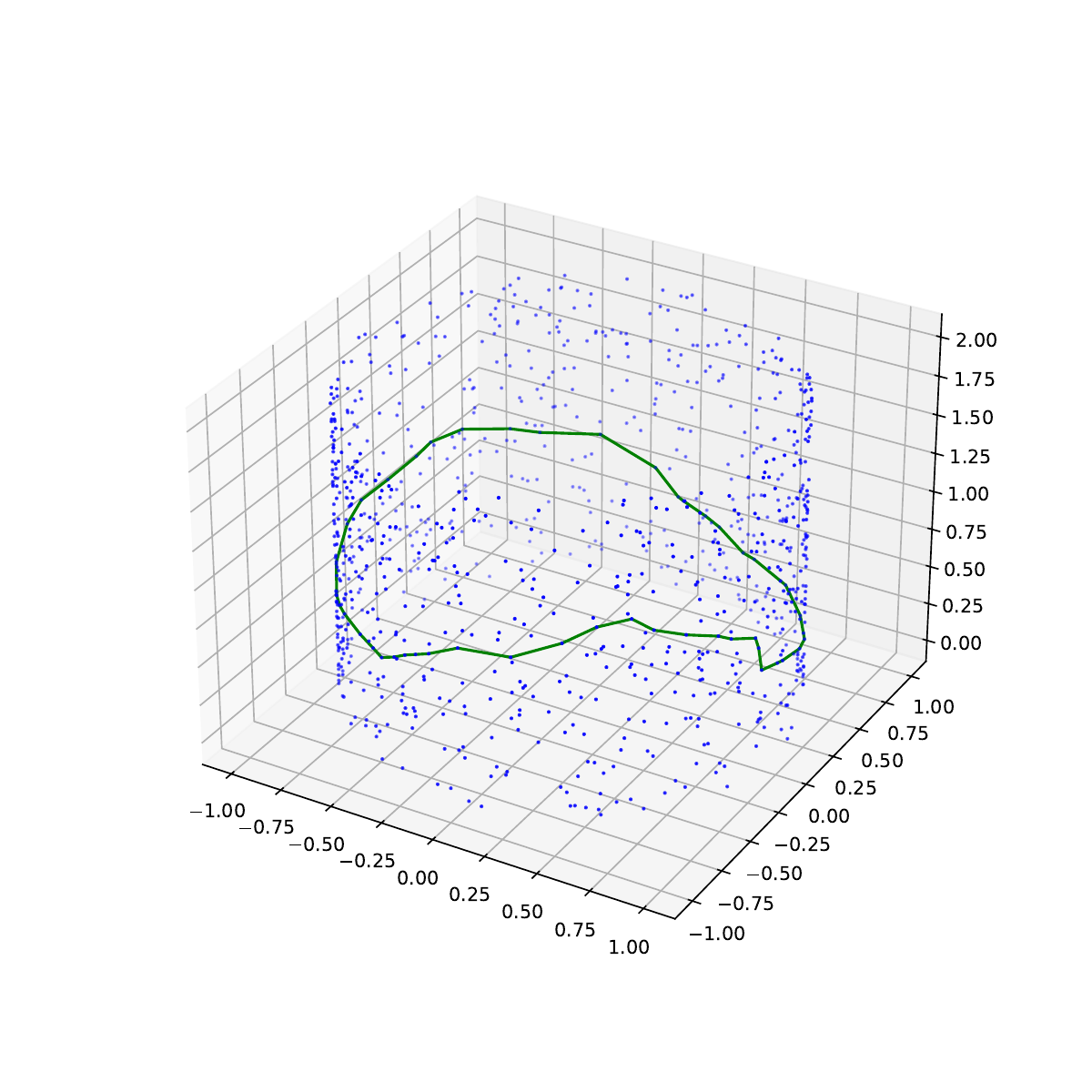}
    \caption{After $60$ min solve.\newline\hspace*{1.8em}$\kappa \approx \num{4.946917670681228}\pi$}
  \end{subfigure}

  \caption{Initial cycle and computed solutions
    for data C3 with $t = 0.2$. We see that solution (b),
    obtained after a total $15$ second time limit,
    is only slightly different from the initial cycle (a).}
  \label{fig:C3_more}
\end{figure}

\FloatBarrier

\subsection{Results on the ``slipper'' data}

% *************** SLIPPER EXPLANATION ***************

Next, in Table~\ref{table:slipper_S1}, we display the results for running the optimizer on the slipper data S1.
Again, the three subrows correspond to the choices of $t = 0.1$, $0.2$, and $0.4$ for $r := b + t(d-b)$.
For this data, we set the time limit of {1 seconds} before each pause.
For S1, we see that the solver quickly finds the cycle around the base of the slipper
with total absolute curvature  $\num{2.0570574711797383}\pi$.
We note in addition that
the solution $z_1$ has length $\ell(z_1) \approx \num{9.67741825074421}$
(for $t=0.1$ and $t=0.2$)
which is longer than the initial cycle $z_0$ with length $\ell(z_0) \approx \num{9.349799958023352}$.
In this case, decreasing total absolute curvature increased the length.
In general, \emph{the problem of minimizing length is \textbf{different} from the problem of minimizing total absolute curvature}.

Furthermore, to check the variation in output, we again do repeated experiments, for data S1 with $t=0.2$.
We obtained as solution the cycle along the base of the slipper with total absolute curvature of $\num{2.0570574711797396} \pi$ in all {$100$} runs,
suggesting that for this simple data the optimizer
is able to quickly find the global optimum consistently.

% *************** START SLIPPER S1 FLOATS ***************
\FloatBarrier

\begin{table}[h]
  \caption{Results for the ``slipper'' data S1.
    The solver is repeatedly paused and restarted with a time limit of
    $1$ second
    (the minimum possible setting for \texttt{timelimit} parameter in DOcplex).
    At the $i$th pause we capture the best solution found so far as $z_i$. We also include details about the initial cycle $z_0$ (which does not depend on $t$) for ease of comparison}
  \label{table:slipper_S1}
    \begin{tabular}{|c|c|c|c| |c|c|c| |c|c|c| |c|c|c|}
    \hline
    & & \multicolumn{2}{c||}{size of $K_r$}
  & \multicolumn{3}{c|}{$i=0$ (initial)}
  & \multicolumn{3}{c||}{$i=1$ ($1$ sec)}
  & \multicolumn{3}{c||}{$i=2$ ($2$ sec)}
      \\
      \hline
 & $t$ & $n_1$ & $n_2$
 & $|z_0|$ & $\ell(z_0)$ & $\kappa(z_0)$
 & $|z_1|$   & $\ell(z_1)$ & $\kappa(z_1)$
 & $|z_2|$   & $\ell(z_2)$ & $\kappa(z_2)$
 \\ \hline\hline
   \multirow{3}{*}{S1}
  & 0.1 & $557$ & $382$
& \multirow{3}{*}{$52$} & \multirow{3}{*}{$\num{9.349799958023352}$} & \multirow{3}{*}{$\num{4.511300712736921} \pi$}
& $39$ & $\num{9.67741825074421}$ & $\num{2.0570574711797396} \pi$
& \multicolumn{3}{c||}{(same as $z_1$)}
\\
  & 0.2 & $575$ & $416$
& & &
&  $39$ & $\num{9.67741825074421}$ & $\num{2.0570574711797396} \pi$
& \multicolumn{3}{c||}{(same as $z_1$)}
\\
  &  0.4 & $648$ & $560$
& & &
& $35$ & $\num{9.593203500503773}$ & $\num{2.0570574711797396} \pi$
& \multicolumn{3}{c||}{(same as $z_1$)}
\\ \hline
  \end{tabular}
\end{table}

\begin{figure}[H]
  \centering
  \begin{subfigure}[]{0.45\linewidth}
    \includegraphics[width=\columnwidth]{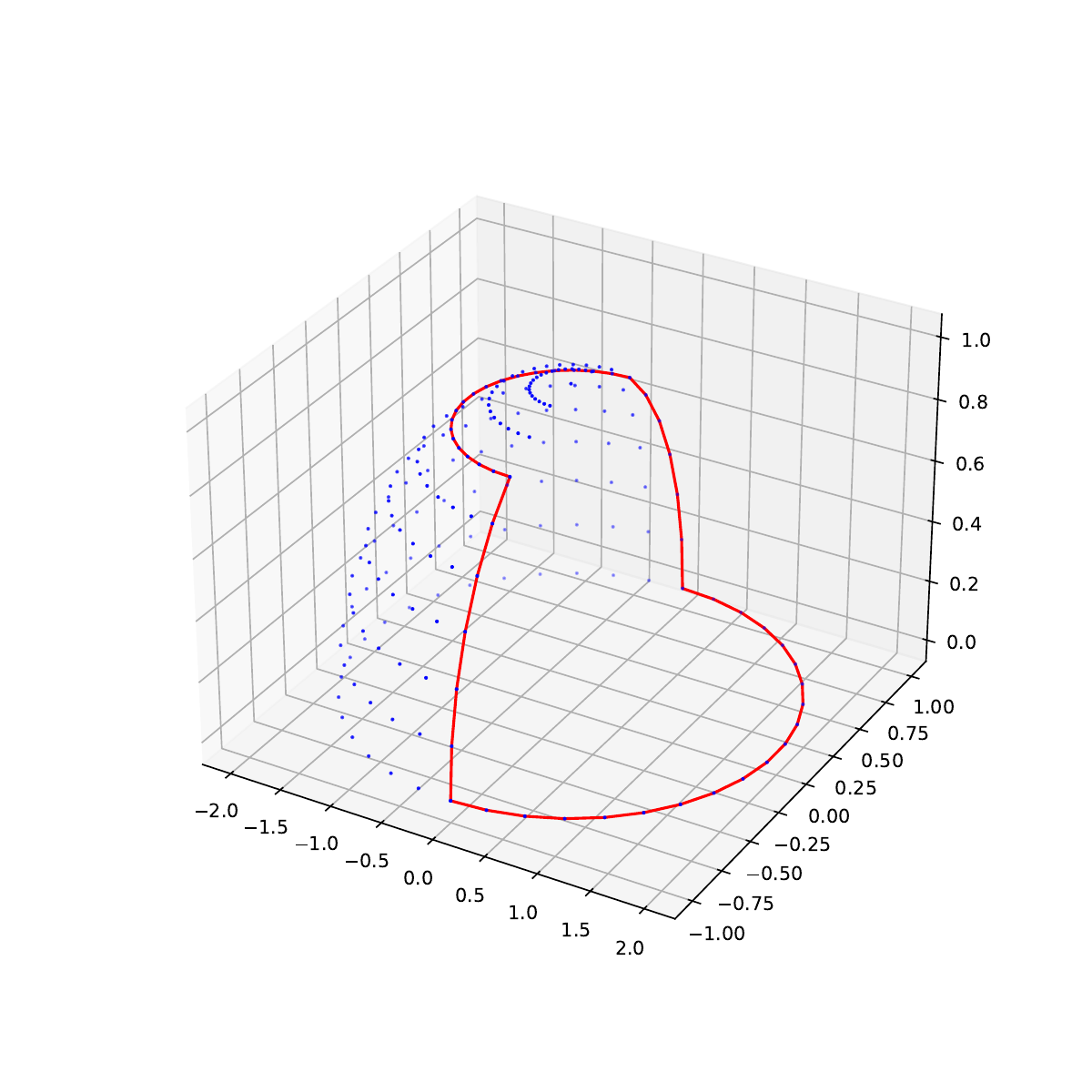}
    \caption{$z_0$, $l(z_0) \approx \num{9.349799958023352}$, $\kappa(z_0) \approx \num{4.511300712736921} \pi$}
  \end{subfigure}
    \begin{subfigure}[]{0.45\linewidth}
      \includegraphics[width=\columnwidth]{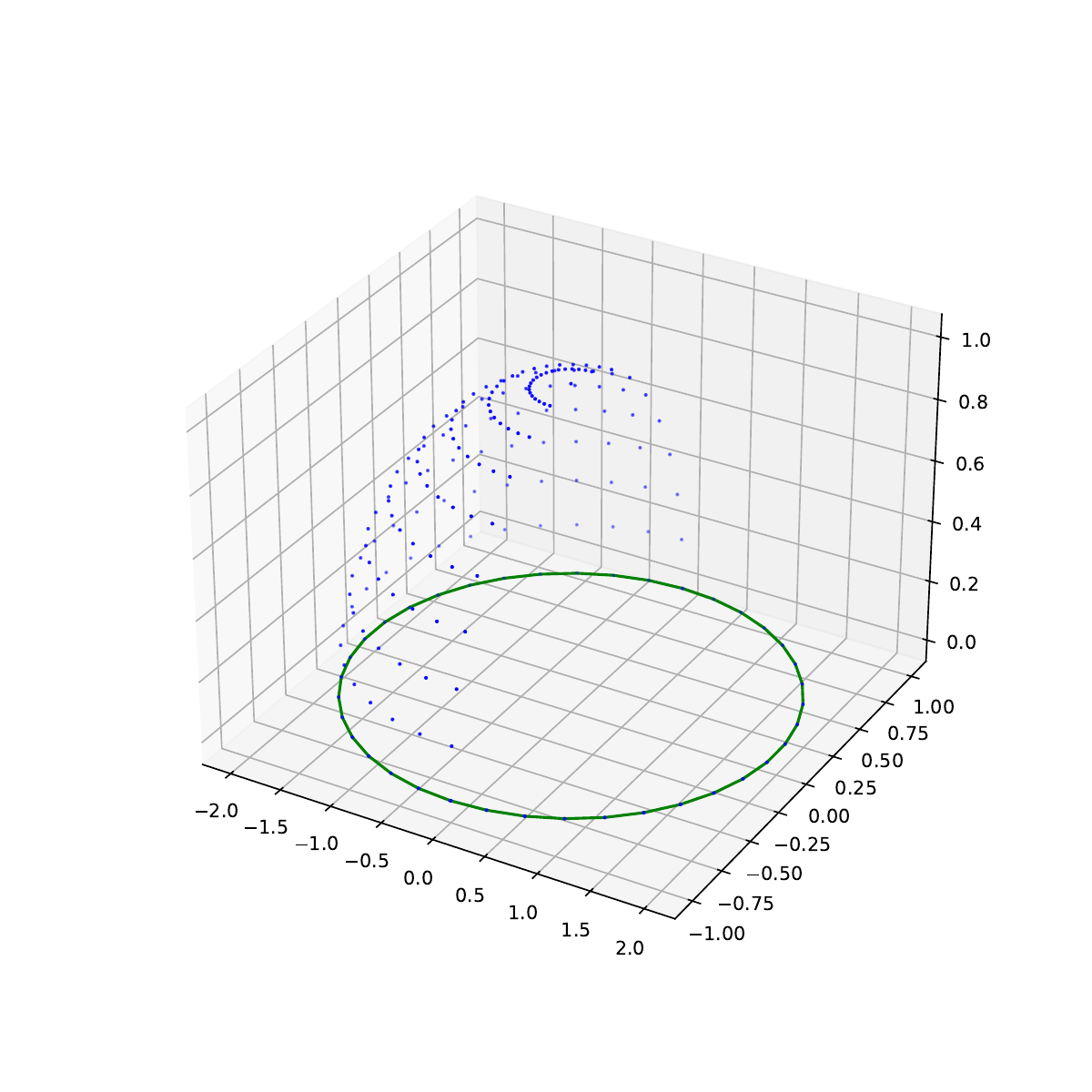}
      \caption{$z_1$, $l(z_1) \approx \num{9.67741825074421}$, $\kappa(z_1) \approx \num{2.0570574711797396} \pi$}
  \end{subfigure}
  \caption{Initial cycle $z_0$ and solution after time limit of $1$ second
    for $K_r$ with $t=0.2$ for data S1 (second row of Table~\ref{table:slipper_S1}).
    Recall that the base of the slipper is sampled from an ellipse with major axis of length $4$
    (the axis that goes across the length of the slipper goes from $-2$ to $2$) and minor axis of length $2$.
    Such an ellipse has perimeter $8\displaystyle\int_0^{\pi/2} \sqrt{1 - \frac{3}{4} \sin^2\theta} ~d\theta \approx \num{9.6884482205476761984}$ {\cite{chandrupatla2010perimeter}}.}
  \label{fig:S1r3}
\end{figure}

\FloatBarrier
% *************** END SLIPPER S1 FLOATS ***************

Next, the data S2 is the same slipper but more finely sampled. Repeating the same experiment with time limit of $1$ second did not yield acceptable results,
so we increase the time limit to $5$ seconds. The results are shown in Table~\ref{table:slipper_S2}
and in Figure~\ref{fig:S2r3}.

To again check the variation in the results, we repeat the experiment $100$ times for data S2 with $t=0.4$, and tabulate the summary statistics for the total absolute curvature $\kappa$ after $15$ seconds of solve time:
\begin{center}
  \begin{tabular}{|r|l|}
    \hline
  Worst & $\num{4.473289553809269} \pi$ \\
  Best & $\num{2.0637617408729785} \pi$ \\
  Average & $\num{3.0683382829108745} \pi$ \\
  Standard Deviation & $\num{1.0023619489316957} \pi$  \\\hline
\end{tabular}
\end{center}

Looking at the histogram (in Figure~\ref{appendix:fig:S2r3_hist_kappa} in the Appendix),
we see that out of the $100$ trials ($15$ seconds of solver time each), around a third have
a total absolute curvature close to the worst value of  $\num{4.473289553809269}\pi$,
a value close to the initial value of $\num{4.700381723548458}\pi$.

\begin{table}[h]
  \caption{Results for the ``slipper'' data S2.
    The solver is repeatedly paused and restarted with a time limit of
    $5$ seconds.
    At the $i$th pause we capture the best solution found so far as $z_i$.}
  \label{table:slipper_S2}
    \begin{tabular}{|c|c|c|c| |c|c|c| |c|c|c| |c|c|c|}
    \hline
  & & \multicolumn{2}{c||}{size of $K_r$}
& \multicolumn{3}{c|}{$i=0$ (initial)}
& \multicolumn{3}{c||}{$i=1$ (5 sec)}
& \multicolumn{3}{c||}{$i=2$ (10 sec)}
% & \multicolumn{3}{c|}{$i=3$ (15 sec)}
\\
  \hline
  & $t$ & $n_1$ & $n_2$
& $|z_0|$   & $\ell(z_0)$ & $\kappa(z_0)$
& $|z_1|$   & $\ell(z_1)$ & $\kappa(z_1)$
& $|z_2|$   & $\ell(z_2)$ & $\kappa(z_2)$
% & $|z_3|$   & $\ell(z_3)$ & $\kappa(z_3)$
\\ \hline\hline
  \multirow{3}{*}{S2}
  & 0.1
    & $1772$ & $1256$
& \multirow{3}{*}{$90$} & \multirow{3}{*}{$\num{8.893500705825458}$} & \multirow{3}{*}{$\num{4.700381723548458}\pi$}
 & $59$ & $\num{9.703560883318447}$ & $\num{2.5044743339049855} \pi$
 & \multicolumn{3}{c||}{(same as $z_1$)}
 % & $59$ & $\num{9.703560883318447}$ & $\num{2.5044743339049855} \pi$
 % & $59$ & $\num{9.703560883318447}$ & $\num{2.5044743339049855} \pi$
    \\
  & 0.2
  & $1834$ & $1376$
& & &
 & $59$ & $\num{9.7344428430453}$ & $\num{2.638238722500931} \pi$
& \multicolumn{3}{c||}{(same as $z_1$)}
 % & $59$ & $\num{9.7344428430453}$ & $\num{2.638238722500931} \pi$
 % & $59$ & $\num{9.7344428430453}$ & $\num{2.638238722500931} \pi$
  \\
  & 0.4
  & $2161$ & $2042$
& & &
 & $71$ & $\num{8.33001383769375}$ & $\num{4.407214697558679} \pi$
 & $53$ & $\num{9.640447660406496}$ & $\num{2.0637617408729803} \pi$
 % & $53$ & $\num{9.640447660406496}$ & $\num{2.0637617408729803} \pi$
  \\ \hline
  \end{tabular}
\end{table}

\begin{figure}[H]
  \centering
  \begin{subfigure}[]{0.32\linewidth}
    \includegraphics[width=\columnwidth]{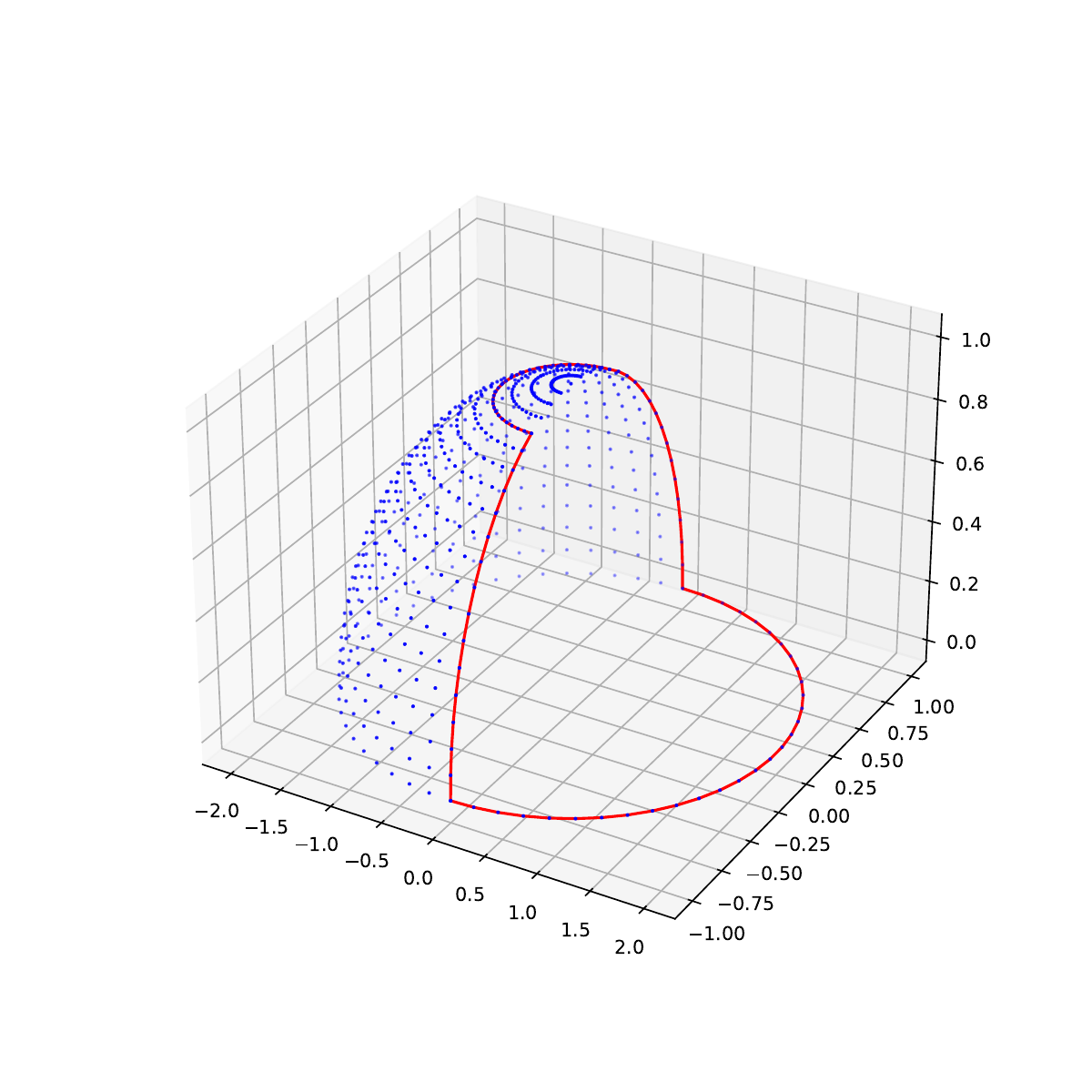}
    \caption{$z_0$, $\num{4.700381723548458} \pi$}
  \end{subfigure}
  \begin{subfigure}[]{0.32\linewidth}
    \includegraphics[width=\columnwidth]{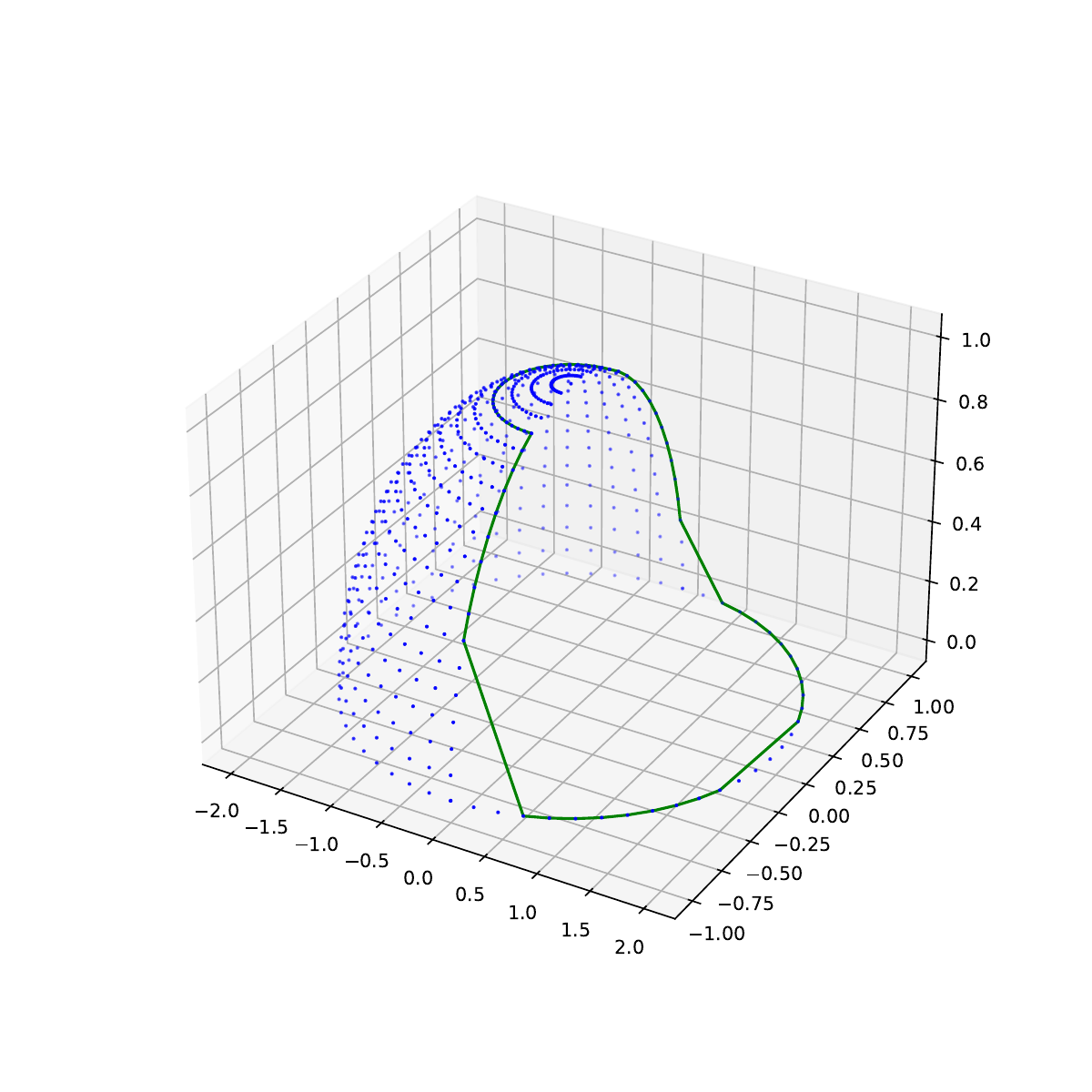}
    \caption{$z_1$, $\num{4.407214697558679} \pi$}
  \end{subfigure}
  \begin{subfigure}[]{0.32\linewidth}
    \includegraphics[width=\columnwidth]{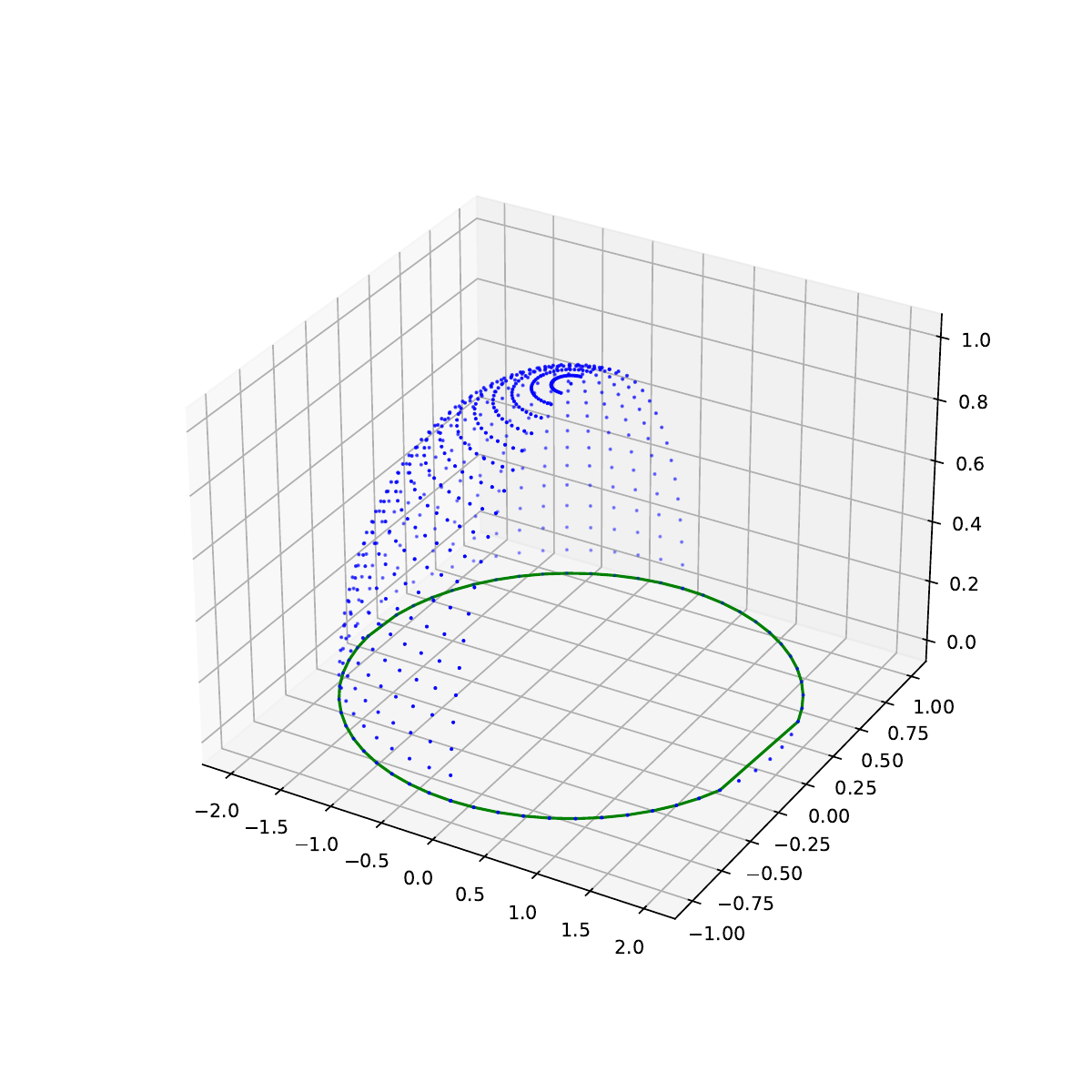}
    \caption{$z_2$, $\num{2.0637617408729803} \pi$}
  \end{subfigure}
  \caption{Initial cycle and computed solutions $z_i$, together with values of $\kappa(z_i)$,
    for data S2 with $t=0.4$ (last row of Table~\ref{table:slipper_S2}).}
  \label{fig:S2r3}
\end{figure}

\FloatBarrier

\section{Discussion}
\label{sec:discussion}

In this work, we formulate the
angle-optimal homologous cycle problem AOHCP which aims to
minimize the total absolute curvature among cycles homologous to an input $1$-cycle.
We have seen that by Theorems~\ref{thm:kappaineq}~and~\ref{thm:decompo},
the total absolute curvature
penalizes departures from planarity, convexity, and simple-ness of the cycle representative.
Through a concrete example,
we saw that the problem of minimizing total absolute curvature is different
from the well-studied problem of minimizing total length.
We also express the angle-optimal homologous problem
in a standard form of a binary quadratic optimization problem
(Problem~\eqref{opt:StandardBQP}),
and solve the problem using a commercial solver ({CPLEX} \cite{Cplex}).

Through some computational demonstrations on toy examples, we saw that for tiny point clouds,
the solver was able to find a (close-to) optimal representative very quickly.
These solutions were visually close to being planar.
Finding such planar (or close-to-planar) representatives was indeed
one of the original motivations of this work; though it turned out that
the cost function we used penalizes not just departures from planarity
but also convexity and simpleness. Is there an alternative cost function
that penalizes only departures from planarity?

As we have seen through our computations,
the solver may struggle to quickly find good solutions for
larger instances of the AOHCP, i.e.\ larger point clouds which give larger simplicial complexes.
Recall that our formulation of the problem as a binary quadratic programming problem \eqref{opt:StandardBQP} has
$2(n_1+n_2)$ binary variables and $n_1$ constraints where $n_k$ is the number of $k$-simplices.
%
% In general, over a vertex set with $n_0$ points, in the worst case a
% simplicial complex can have $n_1 = O(n_0^2)$ edges and $n_2 = O(n_0^3)$ triangles,
% leading to a binary quadratic programming problem with $2(n_1+n_2) = O(n_0^3)$ variables and $n_1=O(n_0^2)$ constraints.
Furthermore, widely-used techniques for solving binary quadratic optimization problems involve transforming it
into a mixed integer linear problem (MILP) using additional variables and constraints.
While developments in general MILP solvers have seen great improvements over its history \cite{lodi2009mixed},
our formulation may be challenging for larger $n_1$, $n_2$.

For more efficient computations, instead of using generic solvers,
a specialized algorithm that takes into account the geometric structure of the problem may be needed.
Or alternatively, a different reformulation of the angle-optimal homologous cycle problem
may be needed.

%%% Local Variables:
%%% mode: LaTeX
%%% TeX-master: "main"
%%% End:

\backmatter

% \bmhead{Supplementary information}
% If your article has accompanying supplementary file/s please state so here.

\bmhead{Acknowledgements}

E.G.E.\ is supported by
JSPS Grant-in-Aid for Scientific Research (C) (24K06846)
and
JSPS Grant-in-Aid for Transformative Research Areas (A) (22H05105).

\section*{Declarations}

\begin{itemize}
  \item Funding:
        E.G.E is supported by
        JSPS Grant-in-Aid for Scientific Research (C) (24K06846)
        and
        JSPS Grant-in-Aid for Transformative Research Areas (A) (22H05105).

  \item Competing interests:
        The authors declare no conflict of interest.

  \item Data availability: Data used the computational demonstrations is generated via the code.
  \item Code availability: \url{https://github.com/emerson-escolar/cycleflattener-v1}
  \item Author contributions:
    \textbf{E.G.E.}:
    Conceptualization,
    Methodology,
    Software,
    Formal analysis and investigation,
    Writing - original draft preparation,
    Writing - review and editing,
    Funding acquisition.
    \textbf{Y.S.}:
    Methodology,
    Software,
    Formal analysis and investigation,
    Writing - review and editing.

  \item {Note}:
    This work is based on \textbf{Y.S.}'s master's thesis,
    but with substantial improvements and more comprehensive computational experiments.
\end{itemize}

\begin{appendices}
 \section{Additional Figures}
\label{secA1}

\FloatBarrier
\begin{figure}[h]
  \includegraphics[width=\columnwidth]{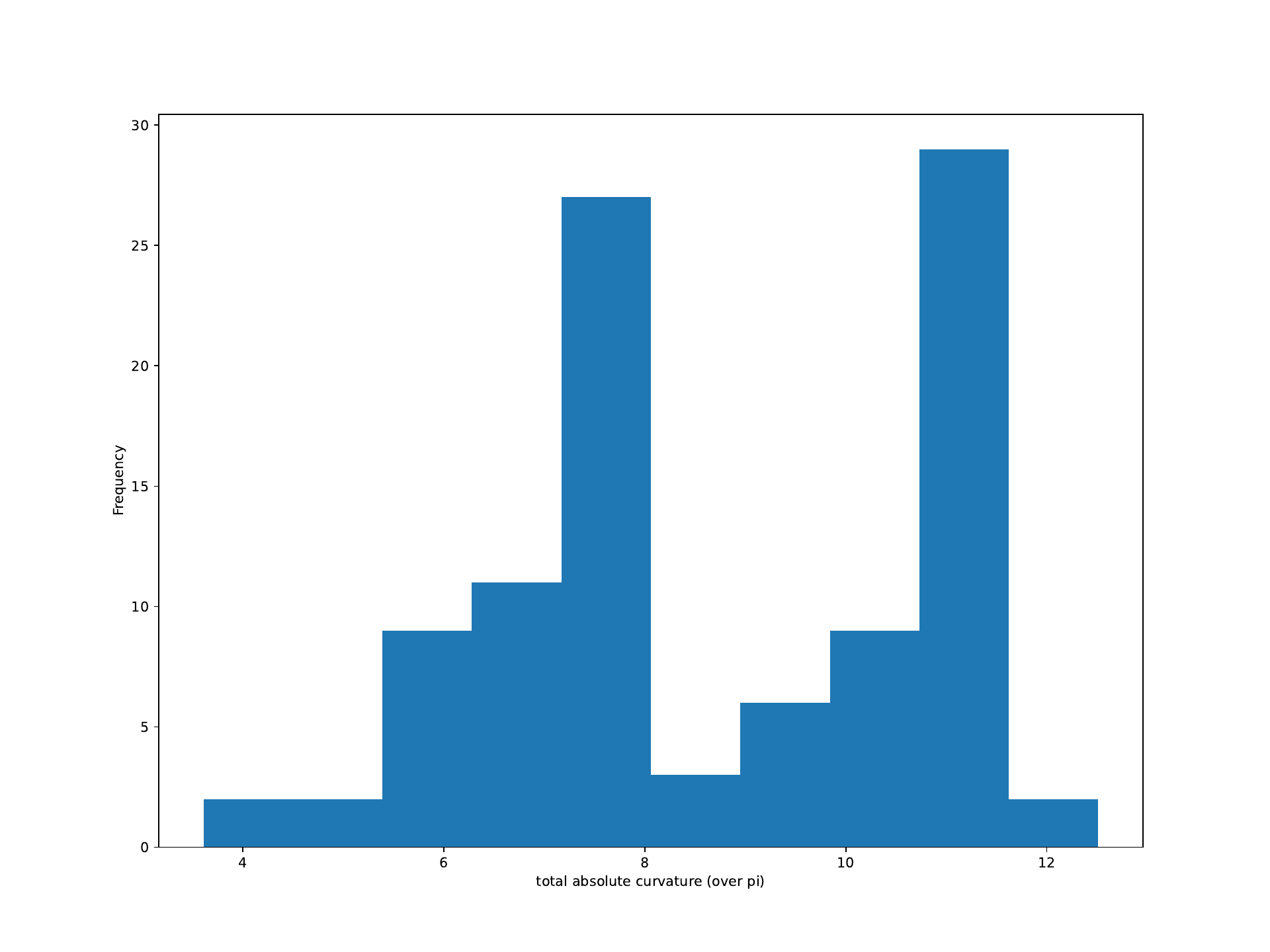}
  \caption{Histogram of total absolute curvature values (after $15$ seconds of solve time each)
    over $100$ repetitions of the experiment on data C2 with $t=0.2$.}
  \label{appendix:fig:C2r2_hist_kappa}
\end{figure}

\begin{figure}[h]
  \includegraphics[width=\columnwidth]{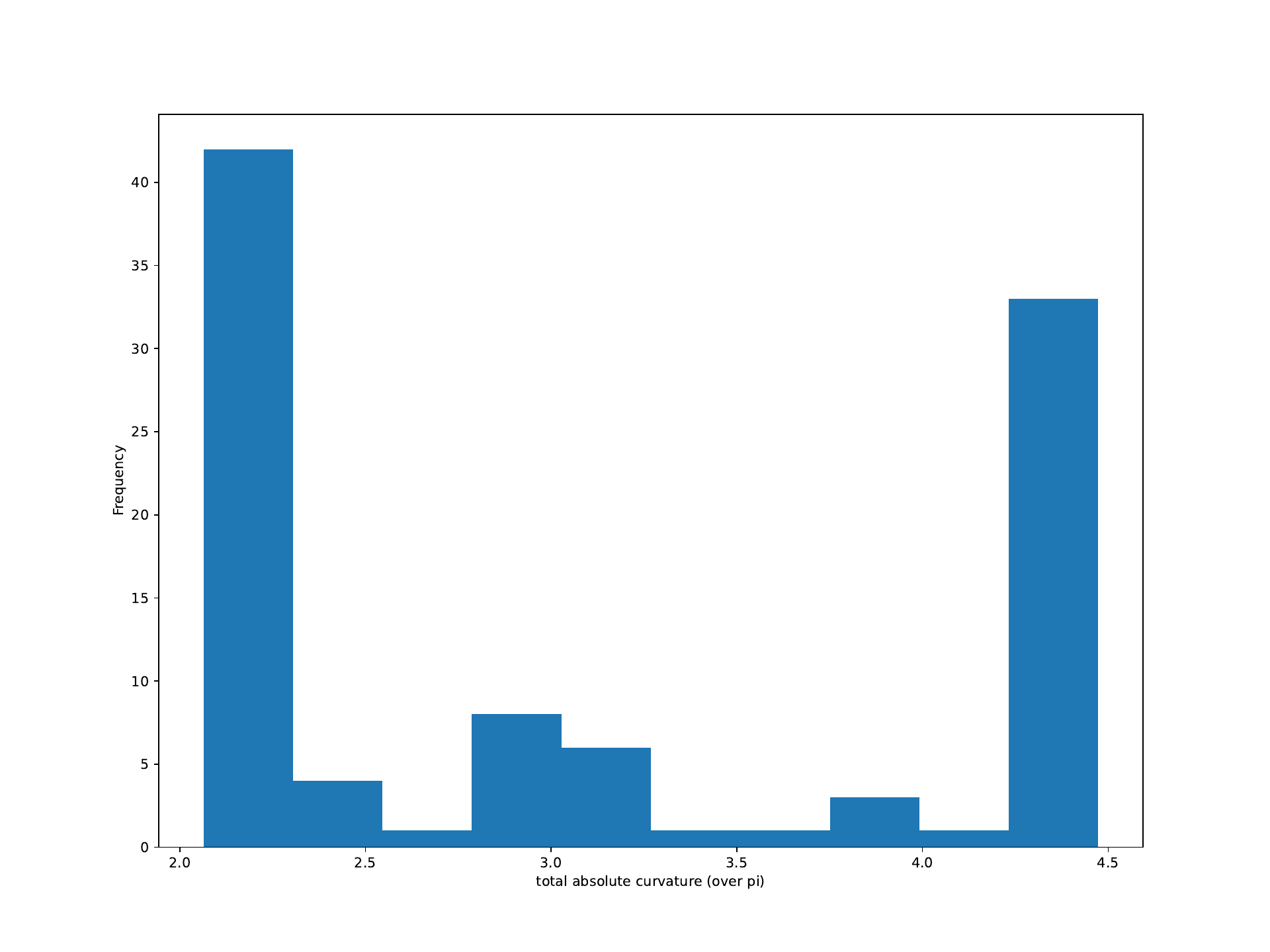}
  \caption{Histogram of total absolute curvature values (after $15$ seconds of solve time each)
    over $100$ repetitions of the experiment on data S2 with $t=0.4$.}
  \label{appendix:fig:S2r3_hist_kappa}
\end{figure}

\FloatBarrier

%%% Local Variables:
%%% mode: LaTeX
%%% TeX-master: "main"
%%% End:

\end{appendices}

\newpage

\bibliography{refs}

\end{document}